\title{A Weyl Criterion for Finite-State Dimension and Applications}
\author[1]{Jack H. Lutz\footnote{This author's research was supported in part by National Science Foundation grant
1900716.}}
\affil[1]{Department of Computer Science\\
Iowa State University,
Ames, IA 50011, USA}
\affil[]{lutz@cs.iastate.edu}
\author[2]{Satyadev Nandakumar}
\author[2]{Subin Pulari}
\affil[2]{
  Department of Computer Science and Engineering\\
  Indian Institute of Technology Kanpur,
  Kanpur, Uttar Pradesh, India.
}
\affil[]{\{satyadev,subinp\}@cse.iitk.ac.in}
\newcommand{\Mod}{\text{ }\mathrm{mod}\text{ }}
\newtheorem*{theorem*}{Theorem}
\newtheorem*{conjecture*}{Conjecture}
\newtheorem*{question*}{Question}
\newtheorem*{lemma*}{Lemma}
\newcommand{\T}{\mathbb{T}}
\newcommand{\seq}[1]{\langle {#1}_n \rangle_{n \in \N}}
\newcommand{\wto}{\Rightarrow}
\newcommand{\PROOF}{\begin{proof}}%{{\noindent\bf{\em Proof:\/}~~}}
\newcommand{\QED}{\end{proof}}
\newcommand{\N}{\mathbb{N}}
\newcommand{\Z}{\mathbb{Z}}
\newcommand{\Q}{\mathbb{Q}}
\newcommand{\R}{\mathbb{R}}
\renewcommand{\dim}{{\mathrm{dim}}}
\newcommand{\Dim}{{\mathrm{Dim}}}
\newcommand{\D}{{\mathcal{D}}}
\def\<{\left\langle}
\def\>{\right\rangle}
\theoremstyle{plain}
\newtheorem{theorem}{Theorem}[section]
\newtheorem{lemma}[theorem]{Lemma}
\newtheorem{corollary}[theorem]{Corollary}
\theoremstyle{definition}
\newtheorem{definition}[theorem]{Definition}
\begin{document}
\pagenumbering{gobble}
\setlength{\belowdisplayskip}{0pt} \setlength{\belowdisplayshortskip}{0pt}
\setlength{\abovedisplayskip}{4pt} \setlength{\abovedisplayshortskip}{0pt}

\maketitle

\begin{abstract}
Finite-state dimension, introduced early in this century as a
finite-state version of classical Hausdorff dimension, is a
quantitative measure of the lower asymptotic {\it density of
  information} in an infinite sequence over a finite alphabet, as
perceived by finite automata.  Finite-state dimension is a robust
concept that now has equivalent formulations in terms of finite-state
gambling, lossless finite-state data compression, finite-state
prediction, entropy rates, and automatic Kolmogorov complexity. The
1972 Schnorr-Stimm dichotomy theorem gave the first automata-theoretic
characterization of normal sequences, which had been studied in
analytic number theory since Borel defined them in 1909.  This theorem
implies, in present-day terminology, that a sequence (or a real number
having this sequence as its base-b expansion) is normal if and only if
it has finite-state dimension 1. One of the most powerful classical
tools for investigating normal numbers is the 1916 Weyl's criterion,
which characterizes normality in terms of exponential sums.  Such sums
are well studied objects with many connections to other aspects of
analytic number theory, and this has made use of Weyl's criterion
especially fruitful.  This raises the question whether Weyl's
criterion can be generalized from finite-state dimension 1 to
arbitrary finite-state dimensions, thereby making it a quantitative
tool for studying data compression, prediction, etc. i.e., \emph{Can
we characterize all compression ratios using exponential sums?}.

This paper does exactly this. We extend Weyl's criterion from a
characterization of sequences with finite-state dimension 1 to a
criterion that characterizes every finite-state dimension.  This turns
out {\em not} to be a routine generalization of the original Weyl
criterion. Even though exponential sums may diverge for non-normal
numbers, finite-state dimension can be characterized in terms of the
\emph{dimensions} of the \emph{subsequence limits} of the exponential
sums. In case the exponential sums are convergent, they converge to
the Fourier coefficients of a probability measure whose
\emph{dimension} is precisely the finite-state dimension of the
sequence. 

This new and surprising connection helps us bring Fourier analytic
techniques to bear in proofs in finite-state dimension, yielding a new
perspective. We demonstrate the utility of our criterion by
substantially improving known results about preservation of
finite-state dimension under arithmetic. We strictly generalize the
results by Aistleitner and Doty, Lutz and Nandakumar for finite-state
dimensions under arithmetic operations. We use the method of
exponential sums and our Weyl criterion to obtain the following new
result: \emph{If $y$ is a number having finite-state strong dimension
0, then $\dim_{FS}(x+qy)=\dim_{FS}(x)$ and $\Dim_{FS}(x+qy)=\Dim_{FS}(x)$ for any $x \in \R$ and $q \in \Q$.}
This generalization uses recent estimates obtained in the work of
Hochman \cite{Hochman2014} regarding the entropy of convolutions of
probability measures.
\end{abstract}

\section{Introduction}
Finite-state compressibility \cite{LZ78}, or equivalently,
finite-state dimension \cite{dai2004finite},
\cite{athreya2007effective}, \cite{bourke2005entropy} is a
quantification of the information rate in data as measured by
finite-state automata. This formulation, initially motivated by
practical constraints, has proved to be rich and mathematically
robust, having several equivalent characterizations. In particular,
the finite state-dimension of a sequence is equal to the compression
ratio of the sequence using information lossless finite-state
compressors (\cite{dai2004finite},
\cite{athreya2007effective}). Finite-state dimension has unexpected
connections to areas such as number theory, information theory, and
convex analysis \cite{KuipersNiederreiterUniform}, \cite{DLN06}.
Schnorr and Stimm \cite{SchnorrStimm72} establish a particularly
significant connection by showing that a number is Borel normal in
base $b$ (see for example, \cite{Niven}) if and only if its base $b$
expansion has finite-state compressibility equal to 1, \emph{i.e.}, is
incompressible (see also: \cite{becher2013normal},
\cite{bourke2005entropy}). Equivalently, a number $x \in [0,1)$ is
  normal if and only if $\dim_{FS}(x)$, the finite-state dimension of
  $x$ is equal to $1$.

A celebrated characterization of Borel normality in terms of
exponential sums, provided by Weyl's criterion \cite{Weyl1916}, has
proved to be remarkably effective in the study of normality. Weyl's
criterion on uniformly distributed sequences modulo 1 yields a
characterization that a real number $r$ is normal to base $b$ if and
only if for every integer $k$,
\begin{align}
  \label{weylsum}
  \lim_{n \to \infty} \frac{1}{n} \sum_{j=0}^{n-1} e^{2\pi i k (b^jr)} = 0.
\end{align}
This tool was used by Wall \cite{Wall49} in his pioneering thesis to
show that normality is preserved under certain operations like
selection of subsequences along arithmetic progressions, and
multiplication with non-zero rationals. Weyl's criterion facilitates
the application of tools from Fourier analysis in the study of Borel
normality. Weyl's criterion is used in several important constructions
of normal numbers including those given by Cassels \cite{cassels},
Erd\"os and Davenport \cite{erdosdavenport} etc. The criterion was
also instrumental in obtaining the construction of absolutely normal
numbers given by Schmidt in \cite{schmidtconstruction}.

The finite-state compression ratio/dimension of an arbitrary sequence
is a quantity in [0,1]. The classical Weyl's criterion provides a
characterization of numbers having finite-state dimension equal to $1$
in terms of exponential sums. This leads us to the natural question -
\emph{Can we characterize arbitrary compression ratios using
exponential sums?}.

This question turns out to be highly non-trivial.  It is not easy to
generalize Weyl's criterion to study arbitrary finite-state
compression ratios/dimension. The major conceptual hurdle arises from
the fact that for non-normal numbers, the Weyl sum averages in
(\ref{weylsum}) need not converge. The Weyl averages need not converge
even when the finite-state dimension and the strong dimension of a
sequence are equal.
  
We demonstrate this by explicitly constructing such a sequence in
Lemma \ref{lem:nonconvergentexample}. Using a new construction method
involving the controlled concatenation of two special sequences, we
demonstrate the existence of a sequence $x \in \Sigma^\infty$ with
non-convergent Weyl averages, while having finite-state dimension and
strong dimension both equal to $\frac{1}{2}$. The proof that this
constructed sequence satisfies the required properties uses new
techniques, which might be of independent interest.  Due to the
existence of such sequences, it is unclear how to \emph{extract} the
finite-state dimension of a sequence from non-convergent Weyl
averages. Indeed, it was unclear whether any generalization of the
Weyl's criterion to arbitrary finite-state dimensions even exists.

Our paper rescues this approach and gives such a characterization of
arbitrary finite-state compressibility/dimension by introducing one
important viewpoint, that turns out to be the major theoretical
insight.  Even when the exponential sums diverge, the theory of
\emph{weak convergence} of probability measures
(\cite{billingsley2013convergence}) enables us to consider the
collection of all probability measures having Fourier coefficients
equal to the the subsequence limits of the Weyl averages. The
\emph{dimensions} of the measures in the set of subsequence weak limit
measures gives a generalization of Weyl's criterion. For any $x$, let
$\dim_{FS}(x)$ and $\Dim_{FS}(x)$ denote the finite-state dimension
and finite-state strong dimension \cite{athreya2007effective} of $x$
respectively. We now informally state our Weyl's criterion for
finite-state dimension.

\begin{theorem*}[Informal statement of Theorem
    \ref{thm:nonconvergentcaseweylcriterion}] 
	Let $x \in [0,1)$. If for any subsequence $\langle n_m
        \rangle_{m=0}^{\infty}$ of natural numbers, there exist
        complex numbers $c_k$ such that for every $k \in \Z$,  
%%         \begin{align*}
        	$\lim_{m \to
          \infty}\frac{1}{n_m}\sum_{j=0}^{n_m-1} e^{2 \pi i k (b^jx) }
                = c_k$, 
%%         \end{align*}
then, there exists a probability measure $\mu$ on $[0,1)$ such that
  for every $k$, $c_k = \int e^{2 \pi i k y } d\mu$. Let
  $\mathcal{W}_x$ be the collection of all such probability measures
  $\mu$ on $[0,1)$ that can be obtained as the subsequence limits of
    Weyl averages. Then, $ \dim_{FS}(x) = \inf_{\mu \in \mathcal{W}_x}
    H^{-}(\mu)$ and $\Dim_{FS}(x) = \sup_{\mu \in \mathcal{W}_x}
    H^{+}(\mu)$.
\end{theorem*}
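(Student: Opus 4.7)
The key object is the sequence of empirical measures $\mu_n = \frac{1}{n}\sum_{j=0}^{n-1}\delta_{b^j x \bmod 1}$ on the circle $[0,1)$. Since $\int e^{2\pi i k y}\, d\mu_n(y) = \frac{1}{n}\sum_{j=0}^{n-1} e^{2\pi i k (b^j x)}$, the Weyl averages are exactly the Fourier coefficients of $\mu_n$, and the theorem becomes a statement about the subsequential weak limits of $\mu_n$.

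For the existence of $\mu$, the compactness of $[0,1)$ makes $\{\mu_{n_m}\}$ tight, so by Prohorov's theorem some sub-subsequence $\mu_{n_{m_\ell}}$ converges weakly to a probability measure $\mu$. Weak convergence forces $\hat{\mu}(k) = c_k$ for every $k$, and since Fourier coefficients determine a measure on the circle, every weak cluster point of $\{\mu_{n_m}\}$ must coincide with $\mu$. Hence $\mu_{n_m}\to\mu$ weakly, which establishes the first claim and defines $\mathcal{W}_x$. A bonus observation is that every $\mu \in \mathcal{W}_x$ is invariant under the $b$-shift $T y = by \bmod 1$, because $T_*\mu_n - \mu_n$ tends to zero weakly.

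For the dimension identities, I would identify $\mu(I_s)$ (for a base-$b$ cylinder $I_s$ of length $k$) as the limiting block frequency of the string $s$ in $x$ along $\langle n_m\rangle$; the boundary issues at rationals are sidestepped by lifting to the symbolic coding on $\{0,\ldots,b-1\}^{\mathbb{N}}$, where cylinders are clopen. Writing $H_k(\mu) = -\sum_{|s|=k}\mu(I_s)\log_b\mu(I_s)$, shift-invariance makes $H_k(\mu)$ subadditive in $k$, so that $H^{-}(\mu) = H^{+}(\mu) = \inf_k H_k(\mu)/k$ for each individual $\mu$, though distinct $\mu\in\mathcal{W}_x$ can realize different entropy rates. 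I then appeal to the block-entropy characterization of finite-state dimension due to Bourke, Hitchcock and Vinodchandran, which expresses $\dim_{FS}(x)$ and $\Dim_{FS}(x)$ as $\inf_k$ of the $\liminf_n$ and $\limsup_n$, respectively, of the length-$k$ block-entropy rate of prefixes of $x$. The inequality $\dim_{FS}(x)\le \inf_\mu H^{-}(\mu)$ then follows by noting that for each $\mu\in\mathcal{W}_x$ realized along $\langle n_m\rangle$, $H_k(\mu)/k$ is a particular subsequential limit of the length-$k$ block-entropy rates, hence an upper bound for the $\liminf_n$; the reverse is obtained by a diagonal extraction that simultaneously realizes the $\liminf_n$ for every $k$ while preserving weak convergence of all Fourier coefficients, producing a witness $\mu^\star\in\mathcal{W}_x$ with $H^{-}(\mu^\star) = \dim_{FS}(x)$. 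The argument for $\Dim_{FS}$ is exactly parallel with $\sup$ and $\limsup$.

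The hardest step is this diagonal extraction in the $\ge$ direction: one needs a single subsequence along which both the length-$k$ block-entropy rate is near its $\liminf_n$ for every $k$ \emph{and} all Fourier coefficients $\hat{\mu}_n(k)$ converge. The former demands a subsequence tuned to each $k$ separately, whereas the latter demands a single sequence. Compactness of $[0,1)$ in the weak topology (so that every sequence of empirical measures admits a weakly convergent sub-subsequence) combined with continuity of the finite-dimensional Shannon entropy on the simplex of cylinder masses should make the extraction go through, but the bookkeeping is the technical core of the argument and is where I expect the bulk of the proof's length to reside.
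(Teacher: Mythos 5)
Your high-level framework is right — empirical measures on the circle, Prokhorov for tightness, Fourier coefficients determining the measure, and the Bourke--Hitchcock--Vinodchandran block-entropy characterization are indeed the correct ingredients, and your observation that every $\mu\in\mathcal{W}_x$ is automatically shift-invariant (so $H^-(\mu)=H^+(\mu)$ for each individual $\mu$) is a genuine and useful fact that the paper invokes only much later, in the $\mu$-normality section. But there are two places where your plan, as stated, does not close.

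First, the boundary issue at dyadic rationals is not a sidestep; it is the technical heart of the transfer between the torus and Cantor space. Weak convergence of $\mu_{n_m}$ on $[0,1)$ does \emph{not} automatically yield convergence of cylinder masses (block frequencies) for the corresponding symbolic orbit, because a weak limit can place an atom at a dyadic endpoint, and the Portmanteau theorem only gives convergence of $\mu_{n_m}(I)$ for intervals $I$ whose endpoints have $\mu$-measure zero. The paper devotes a lemma to showing that no \emph{non-zero} dyadic rational can carry an atom (by a counting argument on overlapping occurrences of the two binary expansions of the dyadic), but the point $0$ can, and in fact the block frequencies of $0^l$ and $1^l$ need not converge even when every Weyl average does. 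Because of this, your step ``identify $\mu(I_s)$ as the limiting block frequency of $s$'' fails for exactly the two strings $0^l$ and $1^l$, so even the supposedly easy $\le$ direction is incomplete as stated. The paper repairs this by introducing truncated entropies $\widetilde{H}_l$ that discard these two strings, proving that $\widetilde{H}_l$ and $H_l$ differ by at most $2/l$, and then showing that the truncated entropies of the empirical measures on Cantor space agree in the limit with the truncated entropies of the lifted torus measures.

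Second, the diagonal extraction you flag as the ``technical core'' is actually unnecessary, and your worry about tuning a single subsequence to all block lengths simultaneously does not arise in the paper's argument. The paper picks a \emph{single} $l'$ realizing $\liminf_n H_{l'}(x_0^{n-1}) < s+\epsilon$, passes to a convergent subsequence $\nu_{n_m}\Rightarrow\mu'$ along which $H_{l'}(x_0^{n_m-1})<s+\epsilon$, and then uses the subadditivity inequality $H_{kl'}(x_0^{n-1}) \le H_{l'}(x_0^{n-1}) + o(n)/n$ (a consequence of the chain rule for entropy, proved as a standalone lemma) to conclude that $\mathbf{H}_{kl'}(\mu')/(kl') < s+\epsilon$ for \emph{every} $k$, hence $H^-(\mu')\le s+\epsilon$. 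No diagonalization; the propagation to all multiples of $l'$ is handled by the subadditivity structure of sliding-block entropy alone. (Your shift-invariance observation would in fact let one upgrade this to all $l$, not just multiples of $l'$, but the paper does not need it since $H^-$ is a $\liminf$.) So the bookkeeping you anticipate as the bulk of the proof is replaced by a clean entropy subadditivity lemma, while the bulk of the paper's effort actually goes to the boundary-measure issue you set aside.
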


The \emph{correct} notion of \emph{dimensions} of the subsequence weak
limit measures in $\mathcal{W}_x$ which yields the finite-state
dimensions of $x$ turns out to be $H^{-}$ and $H^{+}$, the
\emph{lower} and \emph{upper average entropies} of $\mu$ as defined in
\cite{athreya2007effective} \footnote{These are analogues of the
well-known \emph{R\'enyi upper and lower dimensions} of measures as
defined in \cite{renyi1959dimension}. See the remark following
Definition \ref{def:upperandloweraverageentropies}.}. Therefore, this
new characterization enables us to \emph{extract} the finite-state
compressibility/dimension by studying the behavior of the Weyl sum
averages, thereby extending Weyl's criterion for normality to
arbitrary finite-state dimensions.

An interesting special case of our criterion is when the exponential
averages of a sequence are convergent. In this case, the averages
$\langle c_k \rangle_{k \in \Z}$ are precisely the \emph{Fourier
coefficients} of a \emph{unique} limiting measure, whose
\emph{dimension} is precisely the finite-state dimension of the
sequence. This relates two different notions of dimension to each
other. We give the informal statement of our criterion for this
special case.
\begin{theorem*}[Informal statement of Theorem
    \ref{thm:weylaveragesconvergentcase}] 
	Let $x \in [0,1)$. If there exist complex numbers $c_k$ for $k \in
\Z$ such that $\frac{1}{n}\sum_{j=0}^{n-1} e^{2 \pi i k (b^j x)} \to
c_k$ as $n \to \infty$, then, there exists a unique measure $\mu$ on
$[0,1)$ such that for every $k$, $c_k = \int e^{2 \pi i k y }
d\mu$. Furthermore, $\dim_{FS}(x) =\Dim_{FS}(x)= H^{-}(\mu) =
H^{+}(\mu)$.
\end{theorem*}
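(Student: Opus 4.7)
The plan is to deduce this convergent-case theorem from the general criterion (Theorem \ref{thm:nonconvergentcaseweylcriterion}) by showing that the convergence hypothesis forces the set $\mathcal{W}_x$ to be a singleton, and then to supplement the general criterion with an ergodic-theoretic argument pinning down $H^{-}(\mu) = H^{+}(\mu)$.

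First, I would establish existence and uniqueness of $\mu$ via weak convergence of the empirical measures $\nu_n := \tfrac{1}{n}\sum_{j=0}^{n-1}\delta_{b^jx \bmod 1}$. Since $\{\nu_n\}$ is a sequence of probability measures on the compact space $[0,1)$, it is automatically tight, and Prokhorov's theorem furnishes subsequential weak limits. The $k$-th Fourier coefficient of $\nu_n$ is exactly the $n$-th Weyl average, so by hypothesis every subsequential weak limit has Fourier coefficients $\{c_k\}_{k \in \Z}$. Since a Borel probability measure on $[0,1)$ is uniquely determined by its Fourier coefficients, all such limits coincide with a single measure $\mu$, establishing the claimed existence and uniqueness, and in fact $\nu_n \Rightarrow \mu$. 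Because every subsequence of Weyl averages also converges to the same $c_k$, we obtain $\mathcal{W}_x = \{\mu\}$, and Theorem \ref{thm:nonconvergentcaseweylcriterion} immediately yields $\dim_{FS}(x) = H^{-}(\mu)$ and $\Dim_{FS}(x) = H^{+}(\mu)$.

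The remaining and most delicate step is to prove $H^{-}(\mu) = H^{+}(\mu)$, which I would accomplish by exhibiting $\mu$ as an invariant measure for the base-$b$ shift $T_b(y) = by \bmod 1$ and then invoking Fekete's lemma. The pushforward $(T_b)_*\mu$ has $k$-th Fourier coefficient $\int e^{2\pi i k b y}\,d\mu(y) = c_{kb}$, while from the definition of $c_k$ an index shift gives $c_{kb} - c_k = \lim_n \tfrac{1}{n}\bigl(e^{2\pi i k b^n x} - e^{2\pi i k x}\bigr) = 0$, so $(T_b)_*\mu$ and $\mu$ share all Fourier coefficients and hence coincide. Taking $\alpha$ to be the top-level base-$b$ digit partition, the $k$-th level $b$-adic partition defining $H_k(\mu)$ is $\bigvee_{j=0}^{k-1}T_b^{-j}\alpha$, and invariance of $\mu$ makes $k \mapsto H_k(\mu)$ subadditive. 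Fekete's lemma then gives $\lim_k H_k(\mu)/k = \inf_k H_k(\mu)/k$, so $H^{-}(\mu) = H^{+}(\mu)$ and all four quantities in the theorem collapse to a single value (the Kolmogorov--Sinai entropy of $\mu$ with respect to the base-$b$ generator). The main conceptual obstacle is recognizing that $T_b$-invariance of $\mu$ is exactly the structural feature forced by convergence of the Weyl averages, and that it is precisely this feature that drives the collapse of the upper and lower average entropies; once this is identified, the supporting weak-convergence and subadditivity steps are routine.
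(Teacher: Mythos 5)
Your proposal is correct, and it takes a genuinely different route for the half of the argument that establishes the collapse $\dim_{FS}(x) = \Dim_{FS}(x) = H^{-}(\hat{\mu}) = H^{+}(\hat{\mu})$.

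The first half of your proof coincides with the paper: convergence of the Weyl averages means every subsequential weak limit has the same Fourier coefficients, so Bochner uniqueness forces $\widehat{\mathcal{W}}_x$ to be the singleton $\{\hat{\mu}\}$, and Theorem \ref{thm:nonconvergentcaseweylcriterion} yields $\dim_{FS}(x) = H^{-}(\hat{\mu})$ and $\Dim_{FS}(x) = H^{+}(\hat{\mu})$. Where the arguments diverge is in proving that these two numbers agree. The paper attacks it from the combinatorial side: it observes that $\nu_n(C_w)$ converges for every $w \notin \{0^{|w|}, 1^{|w|}\}$, so the modified entropies $\widetilde{H}_l(x_0^{n-1})$ converge in $n$ for each $l$, hence $\widetilde{\dim}_{FS}(x) = \widetilde{\Dim}_{FS}(x)$, hence (by the lemmas identifying $\widetilde{\dim}_{FS}$ with $\dim_{FS}$ and $\widetilde{\Dim}_{FS}$ with $\Dim_{FS}$) $\dim_{FS}(x) = \Dim_{FS}(x)$; the equality $H^{-}(\hat{\mu}) = H^{+}(\hat{\mu})$ then follows transitively. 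You instead attack it from the measure side: you compute the Fourier coefficients of the pushforward $(T_b)_*\mu$ as $c_{kb}$, show via the telescoping identity $c_{kb} - c_k = \lim_n n^{-1}\bigl(e^{2\pi i k b^n x} - e^{2\pi i k x}\bigr) = 0$ that $(T_b)_*\mu = \mu$, deduce that the lift $\hat{\mu}$ is shift-invariant on $\Sigma^\infty$, and invoke Fekete subadditivity to get $H^{-}(\hat{\mu}) = H^{+}(\hat{\mu})$ directly. This is a clean alternative, and it makes explicit a structural fact — convergence of the Weyl averages forces $T_b$-invariance of the unique limit measure — that the paper uses in spirit only in the later $\mu$-normality section (Lemma \ref{lem:fsdofmunormal}), and there under the stronger hypothesis of $\mu$-normality of $x$, which need not hold here because the block frequencies $P(x_0^{n-1}, 0^l)$ and $P(x_0^{n-1}, 1^l)$ are not guaranteed to converge. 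Your direct Fourier-coefficient derivation of invariance sidesteps that issue entirely, which is what makes the route go through. Each approach has its virtue: the paper's stays self-contained within the block-entropy machinery already built for Theorem \ref{thm:nonconvergentcaseweylcriterion}, while yours isolates the ergodic-theoretic reason for the collapse and connects it to Kolmogorov--Sinai entropy.
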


Our results also show that in case there is a unique weak limit measure,
the exponential sums (\ref{weylsum}) converge for every $k \in
\Z$. These give the first known relations between Fourier coefficients
and finite-state compressibility/dimension.

The proof of Weyl's criterion for finite-state dimension is not a
routine generalization of the available proofs of Weyl's criterion for
normality (see \cite{Weyl1916}, \cite{EinsiedlerWardErgodic},
\cite{SteinShakarchiFourier}) and requires several facts from the
theory of weak convergence of probability measures and new
relationships involving the exponential sums, the \emph{dimensions} of
weak limit measures and the finite-state dimension of the given
sequence. Certain additional technical difficulties we overcome
include working with two different topologies - the topology on the
torus $\T$ where Fourier coefficients uniquely determine a measure,
and another, Cantor space, which is required for studying
combinatorial properties of sequences, like normality.
 
\subsection{Applications of our criterion}

We illustrate how this framework can be applied in sections
\ref{sec:preservationoffsd}. These results justify that this framework
pioneers a new, powerful, approach to data compression.

It is not very surprising that when the Weyl averages converge, our
criterion has applications. Importantly, even in situations where the
Weyl averages do not converge, it is possible to derive non-trivial
consequences. We apply our techniques to substantially improve known
results about the preservation of finite-state dimension under
arithmetic and combinatorial operations.

%% To this end, we first derive a Fourier analytic proof of the result by
Doty, Lutz and Nandakumar \cite{DLN06} show that if $x$ is any real
and $q$ is any non-zero rational, then the finite-state dimensions and
strong dimensions of $x$, $qx$ and $x+q$ are equal. 
%% Our new proof is quintessentially Fourier analytic, using the
%% invariance of R\'{e}nyi dimension under specific pushforward
%% operations.
When $x$ is normal, a generalization is obtained by Aistleitner
\cite{Aistleitner2011}, which can be described as follows. Let $y$ be
any real such that the asymptotic density of zeroes in its expansion
is one. Then, for any rational $q$, we have $x+qy$ is normal. We
generalize these results by allowing both the following conditions
simultaneously,
\begin{enumerate}
\item $x$ is allowed to be any real, obtaining a result for all
  finite-state dimensions rather than only for normals as in
  Aistleitner \cite{Aistleitner2011} and
\item $y$ is allowed to be any real with finite-state strong dimension
  0 which satisfies a natural independence condition. This generalizes
  both the restrictions in Doty, Lutz and Nandakumar \cite{DLN06} and
  Aistleitner \cite{Aistleitner2011}, 
\end{enumerate}
and show that for any rational $q \in \Q$,
$\dim_{FS}(x+qy) = \dim_{FS}(x)$ and $\Dim_{FS}(x+qy) = \Dim_{FS}(x)$ if $x$ and $y$ are \emph{independent}
(see Definition \ref{def:independentstrings}) and $\Dim_{FS}(y)=0$.

Using our Weyl criterion along with the results in Hochman
\cite{Hochman2014}, we obtain the following inequalities. Let $x$ and
$y$ be real numbers in $\T$ such that $x$ and $y$ are independent.
Then for any $d,e \in \Z$, $\dim_{FS}(dx+ey) \geq \max
\{\dim_{FS}(dx), \dim_{FS}(ey) \}$ and $\dim_{FS}(dx+ey) \leq
\dim_{FS}(dx) + \Dim_{FS}(ey)$. Similarly, $\Dim_{FS}(dx+ey) \geq \max
\{\Dim_{FS}(dx), \Dim_{FS}(ey) \}$ and $\Dim_{FS}(dx+ey) \leq
\Dim_{FS}(dx) + \Dim_{FS}(ey)$.  Our main results are consequences of
these inequalities.

Further, using our Weyl criterion and techniques from Mance and Madritsch
\cite{madritsch2016construction} we obtain new methods for the
explicit construction of numbers having a specified finite-state
dimension. (See Appendix \ref{sec:munormality})

Lossless data compression is practically significant, and
theoretically sophisticated. We show how one of the major tools of
modern mathematics, Fourier analysis, can be brought to bear to study
compressibility of individual data sequences. We hope that our
criterion will facilitate the application of more powerful Fourier
analytic tools in future works involving finite-state
compression/dimension.

\subsection{Organization of the article}

After the preliminary sections, section
\ref{sec:weylscriterionandweakconvergence} gives Weyl's criterion on
Cantor space using weak convergence of measures. Next, we show the
necessity and the sufficiency of passing to \emph{subsequences} of
sequences of measures in order to generalize Weyl's criterion for
finite-state dimension. In section \ref{sec:preservationoffsd} we show
the applications of our Weyl criterion to yield new, general results
regarding the preservation of finite-state dimension under arithmetic
and combinatorial operations.

\section{Preliminaries}
For any natural number $b >1$, $\Sigma_b$ denotes the alphabet
$\{0,1,2,\dots b-1\}$. Throughout this paper, we work with base 2, but
our results generalize to all bases. We use $\Sigma$ to denote the
binary alphabet $\Sigma_2$. We denote the set of finite binary strings
by $\Sigma^*$ and the set of infinite sequences by
$\Sigma^\infty$. For any $w \in \Sigma^*$, let $C_w$ be the set of
infinite sequences with $w$ as a prefix, called a \emph{cylinder}. For
any sequence $x=x_0x_1x_2\dots$ in $\Sigma^\infty$, we denote the
substring $x_ix_{i+1}\dots x_{j}$ of $x$, by $x_i^{j}$. The Borel
$\sigma$-algebra generated by the set of all cylinder sets is denoted
by $\mathcal{B}(\Sigma^\infty)$.

Let $\T$ denote the one-dimensional torus or unit circle. i.e, $\T$ is
the unit interval $[0,1)$ with the metric $d(r,s)=\min\{\lvert r-s
  \rvert,1-\lvert r-s \rvert\}$. $\T$ is a compact metric space. The
  Borel $\sigma$-algebra generated by all open sets in $\T$ is denoted
  by $\mathcal{B}(\T)$. For any base $b$, let $v_b$ be the
  \emph{evaluation map} which 
  maps any $x \in \Sigma^\infty$ to its \textit{value} in $\T$ which
  is $\sum_{i=0}^{\infty} \frac{x_i}{b^{i+1}} \Mod 1$. We use the
  simplified notation $v$ to denote the base $2$ evaluation map $v_2$.
  Let $T$ be 
  the left shift transformation $T(x_0x_1x_2\dots)=x_1x_2x_3\dots$ on
  $\Sigma^\infty$.  For any base $b$ and $w\in \Sigma_b^*$, let
  $I^b_w$ denote the interval 
  $\left[ v_b(w0^\infty), v_b(w0^\infty)+b^{-\lvert w \rvert} \right)$ in
    $\T$. We use the simplified notation $I_w$ to refer to $I^2_w$.
    Let $\mathbb{D}$ be the set of all dyadic rationals in $\T$. 
    It is easy to see that $v:\Sigma^\infty \to \T$ has a well-defined
    inverse, denoted $v^{-1}$, over $\T \setminus \mathbb{D}$.
    
For any measure $\mu$ on $\T$ (or $\Sigma^\infty$), we refer to the
collection of complex numbers $ \int e^{2 \pi i k y} d\mu $ where $k$
ranges over $\Z$ as the \emph{Fourier coefficients of measure
$\mu$}. For measures over $\Sigma^\infty$, the function $e^{2 \pi i k
  y}$ inside the integral is replaced with $e^{2 \pi i k v(y)}$.  For
every measure $\mu$ on $\T$, we define the corresponding lifted
measure on $\Sigma^\infty$ as follows.
\begin{definition}[\emph{Lift $\hat{\mu}$ of a measure $\mu$ on $\T$}]
If $\mu$ is a measure on $\T$, then we define the \emph{lift}
$\hat{\mu}$ of $\mu$ to be the unique measure on $\Sigma^\infty$
satisfying $\hat{\mu}(C_w)=\mu(I_w)$ for every string $w \in
\Sigma^*$.
\end{definition}
The uniqueness of $\hat{\mu}$ follows from routine measure theoretic
arguments.

\begin{definition}
Let $x \in \Sigma^*$ have length $n$.  We define the \emph{sliding
count probability} of $w \in \Sigma^*$ in $x$ denoted $P(x,w)$, and
the \emph{disjoint block probability} of $w$ in $x$, denoted
$P^d(x,w)$, as follows.
\begin{align*}
P(x,w)=\frac{\lvert \{i \in [0,n-\lvert w \rvert] : x_i^{i+\lvert
    w \rvert-1} = w\}\rvert }{n-\lvert w
  \rvert+1}
\text{ and }
P^d(x,w)=\frac{\lvert \{i \in [0,\frac{n}{\lvert w \rvert}) :
  x_{\lvert w \rvert i}^{\lvert w \rvert(i+1)-1} = w\}\rvert
}{n/\lvert w \rvert}
\end{align*}
\end{definition}
Now, we define normal sequences in $\Sigma^\infty$ and normal numbers
on $\T$.
\begin{definition}
 A sequence $x \in \Sigma^\infty$ is  \emph{normal} if for
 every $w \in \Sigma^*$, $ \lim_{n \to \infty}
 P(x_0^{n-1},w)=2^{-\lvert w \rvert}$.  A number $r \in \T$ is
 \emph{normal} if and only if $r \not\in \mathbb{D}$ and $v^{-1}(r)$
 is a normal sequence in $\Sigma^\infty$.
\end{definition}
Equivalently, we can formulate normality using disjoint probabilities
\cite{KuipersNiederreiterUniform}. The following is the block entropy
characterization of finite-state dimension from
\cite{bourke2005entropy}, which we use instead of the original
formulation using $s$-gales \cite{dai2004finite},
\cite{athreya2007effective}.
\begin{definition}[\cite{dai2004finite}, \cite{bourke2005entropy}]
\label{def:finitestatedimension}
For a given block length $l$, we define the \emph{sliding block
entropy} over $x_0^{n-1}$ as %% follows.
%% \begin{align*}
$H_l(x_0^{n-1}) = -\frac{1}{l}\sum_{w \in \Sigma^l}
P(x_0^{n-1},w)\log(P(x_0^{n-1},w))$.
%% \end{align*}
 The \emph{finite-state dimension} of $x \in \Sigma^\infty$,
denoted $\dim_{FS}(x)$, and \emph{finite-state strong dimension} of
$x$, denoted $\Dim_{FS}(x)$, are defined as follows.
\begin{align*}
\dim_{FS}(x) = \inf_{l} \liminf_{n \to \infty}
H_l(x_0^{n-1})
\quad\text{and}\quad 
\Dim_{FS}(x) = \inf_{l} \limsup_{n \to \infty}
H_l(x_0^{n-1}). 
\end{align*}
\end{definition}

\textbf{Remark:} The fact that $\dim_{FS}(x)$ and $\Dim_{FS}(x)$ are
equivalent to the lower and upper finite-state compressibilities of
$x$ using lossless finite-state compressors, follows immediately from
the results in \cite{LZ78} and \cite{dai2004finite}.

\emph{Disjoint block entropy} $H_l^d$ is defined similarly by
replacing $P$ with $P^d$. Bourke, Hitchcock and Vinodchandran
\cite{bourke2005entropy}, based on the work of Ziv and Lempel
\cite{LZ78}, demonstrated the entropy characterization of finite-state
dimension using $H_l^d$ instead of $H_l$. Kozachinskiy and Shen
(\cite{kozachinskiy2019two}) proved that the finite-state dimension of
a sequence can be equivalently defined using sliding block entropies
(as in Definition \ref{def:finitestatedimension}) instead of disjoint
block entropies. It is clear from the definition that, for any $x \in
\Sigma^\infty$, $\dim_{FS}(x) \leq \Dim_{FS}(x)$. Any $x$ with
$\dim_{FS}(x) = \Dim_{FS}(x)$ is called a \emph{regular sequence}.

Upper and lower average entropies were defined in
\cite{athreya2007effective} for measures constructed out of infinite
bias sequences. We extend these notions to the set of all measures on
$\Sigma^\infty$ below.
\begin{definition}
\label{def:upperandloweraverageentropies}
For any probability measure $\mu$ on $\Sigma^\infty$, let
$\mathbf{H}_n (\mu)=-\sum_{w \in \Sigma^n}\mu(C_w)\log(\mu(C_w))$.
The \emph{upper average entropy}  of
$\mu$, denoted $H^{+}(\mu)$, and its \emph{lower average entropy},
denoted $H^-(\mu)$, are defined respectively as the limit superior and
the limit inferior as $n$ tends to $\infty$ of $H_n(\mu)/n$.
%% \begin{align*}
%%   H^{+}(\mu)=\limsup_{n \to \infty} \frac{\mathbf{H}_{n}(\mu)}{n}
%%   \quad\text{and}\quad
%% H^{-}(\mu)=\liminf_{n \to \infty} \frac{\mathbf{H}_{n}(\mu)}{n}.
%% \end{align*}

\end{definition}
Upper and lower average entropies are the Cantor space analogues of
\emph{R\'enyi upper and lower dimensions} of measures on [0,1) which
  were originally defined for measures on the real line in
  \cite{renyi1959dimension}. For any $x \in \T$ (or $x \in
  \Sigma^\infty$) , let $\delta_x$ denote the Dirac measure at $x$.
  i.e, $\delta_x (A)=1$ if $x \in A$ and $0$ otherwise for every $A
  \in \mathcal{B}(\T)$ (or $A \in \mathcal{B}(\Sigma^\infty)$). Given
  a sequence $\langle x_n \rangle_{n=0}^{\infty}$ of numbers in
  $\T$(or $\Sigma^\infty$), we investigate the behavior of exponential
  averages $\frac{1}{n} \sum_{j=0}^{n-1} e^{2 \pi i k x_j}$ by
  studying the weak convergence of sequences of averages of Dirac
  measures which are defined as follows.
\begin{definition}
Given a sequence $\langle x_n \rangle_{n=0}^{\infty}$ in $\T$ (or
elements in $\Sigma^\infty)$, we say that $\langle \nu_n
\rangle_{n=1}^{\infty}$ is the sequence of averages of Dirac measures
over $\T$ (or over $\Sigma^\infty$) constructed out of the sequence
$\langle x_n \rangle_{n=0}^{\infty}$ if, $ \nu_n = n^{-1}
\sum_{i=0}^{n-1} \delta_{x_i} $ for each $n \in \N$.
\end{definition}

\section{Weyl's criterion and weak convergence}
\label{sec:weylscriterionandweakconvergence}
Schnorr and Stimm \cite{SchnorrStimm72} (see also
\cite{becher2013normal}, \cite{bourke2005entropy}) showed a central
connection between normal numbers and finite-state compressibility, or
equivalently, finite-state dimension: a sequence $x \in \Sigma^\infty$
is normal if and only if its finite-state dimension is 1. Any $x \in
\Sigma^\infty$ has finite-state dimension (equivalently, finite-state
compressibility) between $0$ and $1$. In this sense, finite-state
dimension is a generalization of the notion of normality.  Another
celebrated characterization of normality, in terms of exponential
sums, was provided by Weyl in 1916. This characterization has resisted
attempts at generalization. In the present section, we show that the
theory of weak convergence of measures yields a generalization of
Weyl's characterization for arbitrary dimensions. We demonstrate the
utility of this new characterization to finite-state
compressibilty/finite-state dimension, in subsequent sections.
%% \subsection{Weyl's criterion for Cantor Space}

Weyl's criterion for normal numbers on $\mathbb{T}$ is the following.

\begin{theorem}[Weyl's criterion \cite{Weyl1916}]
\label{thm:weyls_criterion_on_torus}
  A number $r \in \mathbb{T}$ is normal if and only if for every $k \in
  \mathbb{Z}$, 
    $\lim_{n \to \infty}
    \frac{1}{n} \sum_{j=0}^{n-1} e^{2\pi i k (2^j r)}
    = 0$.
\end{theorem}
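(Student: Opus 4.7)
The plan is to reduce the theorem to the classical Weyl equidistribution criterion on the torus, via the standard correspondence between base-$2$ normality of $r \in \T$ and equidistribution modulo $1$ of the doubling orbit $(2^j r)_{j \geq 0}$.

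First I would show that $r \in \T$ is normal if and only if the orbit $(2^j r)_{j \geq 0}$ is equidistributed in $\T$ with respect to Lebesgue measure. The link is that, for $r \notin \mathbb{D}$ with base-$2$ expansion $x_0 x_1 x_2 \dots$, the point $2^j r \bmod 1$ lies in the dyadic interval $I_w$ (where $|w| = k$) precisely when $x_j x_{j+1}\dots x_{j+k-1} = w$. Hence the sliding frequency $P(x_0^{n-1}, w)$ agrees with $\frac{1}{n}\#\{ j < n : 2^j r \in I_w \}$ up to an edge term of order $k/n$, which is negligible as $n \to \infty$. Since the family $\{I_w : w \in \Sigma^*\}$ is a $\pi$-system generating $\mathcal{B}(\T)$, a standard $\pi$–$\lambda$ / Portmanteau-style argument upgrades equidistribution on dyadic intervals to equidistribution on arbitrary sub-arcs. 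Dyadic $r \in \mathbb{D}$ are excluded directly on both sides: such $r$ are not normal, and since $2^j r \bmod 1 = 0$ eventually, the exponential average converges to $1 \neq 0$ for every nonzero $k$.

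Second, I would prove the Weyl equidistribution criterion proper: a sequence $(y_j) \subset \T$ is equidistributed modulo $1$ if and only if $\frac{1}{n} \sum_{j=0}^{n-1} e^{2\pi i k y_j} \to 0$ for every $k \in \Z \setminus \{0\}$. I would first establish the equivalent functional criterion, that equidistribution is the same as $\frac{1}{n} \sum_{j=0}^{n-1} f(y_j) \to \int_\T f\, d\lambda$ for every continuous $f : \T \to \mathbb{C}$. The forward direction sandwiches indicator functions of sub-arcs between continuous functions that differ from them in $L^1$ by at most $\varepsilon$; the backward direction is immediate by specializing to such continuous approximants. The equivalence with the exponential sum condition then follows from the Stone–Weierstrass theorem (or Fejér's theorem), by which trigonometric polynomials are uniformly dense in $C(\T)$, so the convergence criterion on every continuous $f$ reduces to the criterion on the characters $y \mapsto e^{2\pi i k y}$, using $\int_\T e^{2\pi i k y}\, d\lambda = 0$ for $k \neq 0$ and $= 1$ for $k = 0$.

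Combining the two reductions yields the theorem. The main point requiring care is the first reduction, where the sliding-block formulation of normality must be aligned with Weyl's orbit-average formulation; this succeeds cleanly via the $O(k/n)$ edge-term bound above, together with the explicit handling of dyadic rationals. The rest is standard approximation theory on the circle, and $k=0$ is trivial since the corresponding average equals $1$.
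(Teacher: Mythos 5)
The paper does not prove this theorem at all: it is stated with a citation to Weyl's 1916 paper and used as a black box to derive the Cantor-space version (Theorem~\ref{thm:weyls_criterion_on_cantor_space}). Your proof is the standard textbook argument (as in Kuipers--Niederreiter or Einsiedler--Ward) and is correct: equating normality with equidistribution of the doubling orbit via dyadic-interval indicators with an $O(k/n)$ edge error, then invoking Weyl's equidistribution criterion through uniform approximation of continuous functions on $\T$ by trigonometric polynomials. One small point worth flagging: as literally stated the theorem quantifies over all $k \in \Z$, but for $k=0$ the average is identically $1$, so the correct reading (which you adopt, noting the $k=0$ case explicitly) is $k \neq 0$; this is a minor imprecision inherited from the paper's phrasing rather than a defect in your argument.
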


The insight in this theorem is the connection between a number $x$
being normal, and the concept of the collection of its shifts being
uniformly distributed in the unit interval. It is the latter concept
which leads to the cancellation of the exponential sums of all
orders. We now prove a formulation of this criterion on Cantor space,
which we require in our work.

\begin{theorem}[Weyl's criterion on $\Sigma^\infty$]
\label{thm:weyls_criterion_on_cantor_space}
  A sequence $x \in \Sigma^\infty$ is a normal sequence if and only if
  for every $k \in \mathbb{Z}$, $\lim_{n \to \infty} \frac{1}{n}
  \sum_{j=0}^{n-1} e^{2\pi i k (v(T^j x))} = 0$.
\end{theorem}
\begin{proof}
  Suppose $v(x) \in \mathbb{D}$ for some $x \in
  \Sigma^\infty$. Then $x$ is not a normal sequence. We also have that
  $v(T^j x)$ is either 0 or 1 for all sufficiently large $j$. In
  either case, for all $k$, $e^{2\pi i k v(T^j x)} = 1$, for all
  sufficiently large $j$. Thus, the  following holds.
    $$\lim_{n \to \infty}
  \frac{1}{n} \sum_{j=0}^{n-1} e^{2\pi i k (v(T^j x))} = 1.$$
  Hence the theorem
  holds for dyadic rationals.

  If $v(x)\notin \mathbb{D}$, then $x \in \Sigma^\infty$
  represents its unique binary expansion. Further, for every $j \in
  \mathbb{N}$, $v(T^j x) \notin \mathbb{D}$. Hence, $v(T^j x)$ is
  equal to $2^j v(x) \Mod 1$. We apply Weyl's theorem on $\mathbb{T}$
  to conclude that
  \begin{align}
    \lim_{n \to \infty}\frac{1}{n} \sum_{j=0}^{n-1} e^{2\pi i k (v(T^j x))} =
    \lim_{n \to \infty}\frac{1}{n} \sum_{j=0}^{n-1} e^{2\pi i k (2^j v(x))} = 0
  \end{align}
  if and only if $v(x)$ is normal. Now, the theorem follows since $x$ represents the unique binary expansion of $v(x)$.
\end{proof}

%% \subsection{Weyl's criterion, weak convergence of measures on
%%   $\Sigma^\infty$, and Fourier coefficients of measures}

The key to generalizing Weyl's criterion to sequences with
finite-state dimension less than 1 is to characterize convergence of
subsequences of exponential sums using weak convergence of probability
measures on $\Sigma^\infty$ (see Billingsley
\cite{billingsley2013convergence}). Over $\T$, this equivalent
characterization is well-known (see Section 4.4 from
\cite{EinsiedlerWardErgodic}). Obtaining the same equivalence over
$\Sigma^\infty$ involves some technical hurdles due to the fact that
continuous functions over $\Sigma^\infty$ need not have a uniform
approximation using trigonometric polynomials. In order to overcome
these, we need to carefully study the relationship between the
convergence of Weyl averages and weak convergence over
$\Sigma^\infty$. We develop these relationships in the following
lemmas. At the end of this section we characterize Theorem
\ref{thm:weyls_criterion_on_cantor_space} in terms of weak convergence
of a sequence of measures over $\Sigma^\infty$.
\begin{definition}
A sequence $\seq{\nu}$ of probability measures on a metric space
$(X,d)$ \emph{converges weakly} to a probability $\mu$ on $(X,d)$,
denoted $\nu_n \wto \mu$, if for every bounded continuous function $f:
X \to \mathbb{C}$, we have $\lim_{n \to \infty} \int f d\nu_n = \int f
d\mu$.
\end{definition}
If a sequence of measures $\seq{\nu}$ on a metric space $(X,d)$ has
a weak limit measure, then the weak limit must be unique (see Theorem
1.2 from \cite{billingsley2013convergence}). Since $\T$ and
$\Sigma^\infty$\footnote{The metric on $\Sigma^\infty$ is $d(x,y) =
2^{-\min\{i \mid x_i \ne y_i\}}.$} are compact metric spaces, using
Prokhorov's Theorem (see Theorem 5.1 from
\cite{billingsley2013convergence}) we get that any sequence of
measures $\langle \nu_n \rangle_{n \in \N}$ on $\T$ (or
$\Sigma^\infty$), has a measure $\mu$ on $\T$ (or $\Sigma^\infty$) and
a subsequence $\langle \nu_{n_m} \rangle_{m \in \N}$ such that
$\nu_{n_m} \wto \mu$. We first establish a relationship between weak
convergence of measures on $\T$ and the convergence of measures of
dyadic intervals in $\T$. Since the set of all finite unions of dyadic
intervals in $\T$ is closed under finite intersections, we
obtain the following lemma using Theorem 2.2 from
\cite{billingsley2013convergence}.
\begin{lemma}
\label{lem:dyadicconvergenceimpliesweakconvergence}
If for every dyadic interval $I$ in $\T$, $\lim_{n \to
      \infty} \nu_{n}(I)=\mu(I)$, then $\nu_n \wto \mu$. 
\end{lemma}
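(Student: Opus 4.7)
The plan is to apply the Portmanteau form of Billingsley's Theorem~2.2, using the equivalent characterization of weak convergence by the condition $\liminf_n \nu_n(G) \ge \mu(G)$ for every open $G \subseteq \T$, and to exploit the fact that the dyadic intervals form a countable basis of $\T$ on which convergence $\nu_n(I) \to \mu(I)$ already holds by hypothesis.

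First I would note that any two dyadic intervals $I_w$ and $I_{w'}$ are either disjoint or nested, according as the strings $w, w'$ are incomparable or one is a prefix of the other. Consequently every finite union of dyadic intervals is in fact a finite \emph{disjoint} union of dyadic intervals (by refining to a common level), and finite additivity extends the hypothesis to $\nu_n(D) \to \mu(D)$ for every $D$ in the $\pi$-system $\mathcal{D}$ of finite unions of dyadic intervals. This is exactly the $\pi$-system/$\lambda$-system style hypothesis to which Billingsley's Theorem~2.2 is tailored.

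Next, I would show that every open $G \subseteq \T$ can be written as a countable disjoint union of dyadic intervals. For each $x \in G$, since $G$ is open and the dyadic intervals form a nested neighbourhood basis, there is a unique maximal dyadic interval $I(x)$ with $x \in I(x) \subseteq G$. Two such maximal intervals that intersect must coincide (one would otherwise be strictly nested inside the other, contradicting maximality), so $\{I(x)\}_{x\in G}$ is a pairwise disjoint family of dyadic intervals, necessarily countable since there are only countably many dyadic intervals in total. Writing $G = \bigsqcup_{k} I_k$ and truncating to the first $N$ summands gives
\begin{align*}
\liminf_{n \to \infty} \nu_n(G) \;\ge\; \lim_{n \to \infty} \sum_{k=1}^{N} \nu_n(I_k) \;=\; \sum_{k=1}^{N} \mu(I_k),
\end{align*}
and letting $N \to \infty$ yields $\liminf_n \nu_n(G) \ge \mu(G)$. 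Billingsley's Theorem~2.2 then delivers $\nu_n \wto \mu$.

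There is essentially no genuine obstacle; the only point requiring care is the open-set decomposition, and that in turn hinges on the half-open form $I_w = [v(w0^\infty),\, v(w0^\infty) + 2^{-|w|})$ of dyadic intervals, which makes them partition $\T$ at every level and so serve as a clopen-style basis for the topology. Any subtlety about dyadic rationals sitting on interval boundaries is sidestepped by this choice of endpoints, so the $\pi$-system/open-set reduction goes through without needing to verify a boundary-zero condition separately.
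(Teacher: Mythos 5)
Your proof is correct and takes essentially the same route as the paper, which simply cites Billingsley's Theorem 2.2 after noting that finite unions of dyadic intervals form a $\pi$-system; you additionally unpack the open-set decomposition into disjoint dyadic intervals and the $\liminf$-over-open-sets criterion that underlie Theorem 2.2's own proof. One small labeling slip: once you have shown $\liminf_n \nu_n(G) \ge \mu(G)$ for every open $G$, the conclusion $\nu_n \wto \mu$ follows from the Portmanteau theorem (Billingsley's Theorem 2.1), not from Theorem 2.2, which at that point you have re-derived rather than applied.
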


The Portmanteau theorem (Theorem 2.1 from
\cite{billingsley2013convergence}) gives the following partial
converse.

\begin{lemma}
\label{lem:weakconvergenceimpliesdyadicconvergence}
Let $\nu_n \wto \mu$. Then $\lim_{n \to \infty} \nu_{n}(I)=\mu(I)$ for
dyadic interval $I=[d_1,d_2)$ if $\mu(\{d_1\})=\mu(\{d_2\})=0$.
\end{lemma}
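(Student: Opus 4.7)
The plan is to apply the Portmanteau theorem directly, since the hypothesis $\mu(\{d_1\})=\mu(\{d_2\})=0$ is precisely a continuity-set condition on $I$. Recall the relevant formulation of the Portmanteau theorem (Theorem 2.1 in \cite{billingsley2013convergence}): if $\nu_n \wto \mu$, then $\lim_{n \to \infty} \nu_n(A) = \mu(A)$ for every Borel set $A$ with $\mu(\partial A)=0$, where $\partial A$ denotes the topological boundary of $A$.

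First, I would identify the boundary of $I=[d_1,d_2)$ in the torus $\T$. Writing $\T$ with its circle metric, the interior of $I$ is $(d_1,d_2)$ and its closure is $[d_1,d_2]$, so $\partial I \subseteq \{d_1,d_2\}$ (with the obvious identification if $d_2=1$ collapses to $0$ in $\T$). By subadditivity and the hypothesis on $\mu$,
\begin{equation*}
  \mu(\partial I) \leq \mu(\{d_1\}) + \mu(\{d_2\}) = 0.
\end{equation*}
Thus $I$ is a continuity set of $\mu$, and applying the Portmanteau implication above immediately yields $\lim_{n \to \infty}\nu_n(I) = \mu(I)$.

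There is essentially no obstacle here beyond invoking the correct characterization of weak convergence. The only point that requires a bit of care is the identification of $\partial I$ in the torus topology rather than in the real line, and the fact that a half-open dyadic interval has boundary consisting only of its two endpoints; once this is checked, the conclusion is just a one-line application of Portmanteau. In particular, no further properties of dyadic intervals or of the sequence $\langle \nu_n \rangle$ are needed, which is consistent with the lemma being stated as a partial converse to Lemma \ref{lem:dyadicconvergenceimpliesweakconvergence}: the mass assumptions at the endpoints are exactly what prevents weak limits from putting charge on the boundary of $I$.
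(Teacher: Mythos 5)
Your proof is correct and matches the paper exactly: the paper gives no explicit argument, simply noting that the lemma is a consequence of the Portmanteau theorem (Theorem 2.1 of Billingsley), which is precisely the route you take. The identification of $\partial I \subseteq \{d_1, d_2\}$ in $\T$ and the continuity-set observation are the standard details the paper leaves implicit.
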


We characterize convergence of exponential sums in terms of weak
convergence of probability measures, first on $\mathbb{T}$ and then on
the Cantor space $\Sigma^\infty$. Unlike Theorem
\ref{thm:weyls_criterion_on_torus}, the result on $\Sigma^\infty$ does
not follow immediately from that on $\mathbb{T}$. On $\mathbb{T}$, the
following theorem holds due to Prokhorov theorem, the fact that
continuous functions on $\T$ can be approximated uniformly using
trigonometric polynomials, and that Fourier coefficients of measures
over $\T$ are unique due to Bochner's theorem (see Theorem 4.19 from
\cite{folland2016course}).

\begin{theorem}
\label{thm:weakconvergenceequivalencetorus}
  Let $r \in \mathbb{T}$ and let $\langle \nu_n \rangle_{n=1}^{\infty}$ be the sequence of
  averages of Dirac measures constructed out of $\langle 2^n r \Mod
  1\rangle_{n =0}^{\infty}$. Let $\langle n_m \rangle_{m \in \N}$ be any subsequence of natural numbers. Then the following are equivalent.
  \begin{enumerate}
    \itemsep=0em
    \item For every $k \in \Z$, there is a $c_k \in \mathbb{C}$ such that
      $\lim_{m \to \infty} \frac{1}{n_m} \sum_{j=0}^{n_m-1} e^{2\pi i k
      (2^j r)}=c_k$.
    \item There is a unique measure $\mu$ such that $\nu_{n_m} \wto
        \mu$.
  \end{enumerate}
  Furthermore, if any of the above conditions are true, then $c_k=\int e^{2\pi i ky} d\mu$ for every $k \in \Z$ and $\mu$ is the unique measure on $\T$ having Fourier coefficients $\langle c_k \rangle_{k \in \Z}$.
\end{theorem}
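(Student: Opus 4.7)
The plan is to separate the equivalence into its two implications, deriving the Fourier-coefficient identification along the way. Before either direction, note that for each $k \in \Z$, the function $f_k(y) = e^{2\pi i k y}$ is continuous and bounded on the compact space $\T$, and by construction of the Dirac averages,
\begin{align*}
\int f_k \, d\nu_{n_m} \;=\; \frac{1}{n_m}\sum_{j=0}^{n_m-1} e^{2\pi i k (2^j r)}.
\end{align*}

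For $(2) \Rightarrow (1)$, I would simply apply the definition of weak convergence to the bounded continuous test function $f_k$: if $\nu_{n_m} \wto \mu$, then $\int f_k \, d\nu_{n_m} \to \int f_k \, d\mu$, which shows the Weyl averages converge and identifies the limit as $c_k = \int e^{2\pi i k y}\, d\mu$, i.e. the $k$-th Fourier coefficient of $\mu$.

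For $(1) \Rightarrow (2)$, I would first invoke Prokhorov's theorem (applicable because $\T$ is a compact metric space, so the sequence $\langle \nu_{n_m} \rangle$ is automatically tight) to extract a weakly convergent sub-subsequence $\nu_{n_{m_l}} \wto \mu$ for some probability measure $\mu$ on $\T$. Applying the easy direction to this sub-subsequence identifies the Fourier coefficients of $\mu$ as precisely the $c_k$'s supplied by hypothesis (1). By Bochner's uniqueness theorem, a Borel probability measure on $\T$ is determined by its Fourier coefficients, so $\mu$ is forced. It remains to upgrade convergence along one subsequence to convergence of the full sequence $\langle \nu_{n_m} \rangle$. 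I would use the standard sub-subsequence principle: if $\nu_{n_m}$ failed to converge weakly to $\mu$, there would exist a bounded continuous $f$, an $\epsilon > 0$, and a subsequence with $\left|\int f\, d\nu_{n_{m_l}} - \int f\, d\mu\right| \ge \epsilon$ for every $l$; Prokhorov would then extract a further weakly convergent sub-sub-subsequence with some limit $\mu'$, the easy direction plus hypothesis (1) would force the Fourier coefficients of $\mu'$ to coincide with the $c_k$'s, and Bochner would give $\mu' = \mu$, contradicting the $\epsilon$-separation. The ``furthermore'' clause then falls out of the argument, as $c_k = \int e^{2\pi i k y}\, d\mu$ is already established and uniqueness of such $\mu$ is again Bochner.

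The main obstacle is the sub-subsequence upgrade in $(1) \Rightarrow (2)$. Prokhorov alone only yields \emph{some} weakly convergent subsequence, and the crucial difficulty is ruling out any alternative weak-limit point; this is precisely where Bochner's uniqueness theorem on $\T$ is indispensable, since convergent Weyl averages for every $k$ would otherwise fail to pin down a single limit measure. This is also the step that breaks down over Cantor space $\Sigma^\infty$ (a point presumably addressed in the subsequent subsections), because continuous functions on $\Sigma^\infty$ are not uniformly approximable by trigonometric polynomials and so Fourier coefficients need not separate measures there.
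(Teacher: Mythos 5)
Your proposal is correct and reaches the same conclusion as the paper, but the final step of $(1)\Rightarrow(2)$ is organized differently. After Prokhorov supplies a weakly convergent sub-subsequence, you first invoke Bochner's uniqueness to pin down the limit $\mu$, and then upgrade to convergence of the whole subsequence by the standard sub-sub-subsequence contradiction argument (any other weak limit point would also have Fourier coefficients $\langle c_k\rangle$, so Bochner forces it to be $\mu$). The paper instead goes through the approximation argument directly: from the convergence of $\int e^{2\pi i k y}\,d\nu_{n_m}\to c_k=\int e^{2\pi i k y}\,d\mu$ for each $k$, it extends by linearity to all trigonometric polynomials and then uses the density of trigonometric polynomials in $C(\T)$ (Stone--Weierstrass/Fej\'er), together with the fact that all the measures involved have total mass $1$, in a $3\epsilon$-argument to conclude $\int f\,d\nu_{n_m}\to\int f\,d\mu$ for every continuous $f$. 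The two arguments use the same fundamental input --- density of trigonometric polynomials in $C(\T)$, which is also what underlies Bochner's uniqueness --- so they are morally equivalent. The paper's version is a bit more direct and self-contained (one application of approximation, no proof by contradiction); your version packages the approximation into the black box of Bochner's theorem, which you then anyway need for the ``furthermore'' clause, so arguably it reuses machinery more economically. Your concluding observation, that this is exactly the step which fails over $\Sigma^\infty$ because Fourier coefficients do not separate measures there, matches the paper's motivation for the subsequent lemmas and is well placed.
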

\begin{proof}
Suppose condition 1 holds. Let $\mu$ be any subsequence weak limit of $\langle \nu_{n_m} \rangle_{m \in \N}$ which exists due to Prokhorov's Theorem. Since $e^{2 \pi i k y}$ is a continuous function on $\T$, from the definition of weak convergence it follows that $\frac{1}{n_m} \sum_{j=0}^{n_m-1} e^{2\pi i k (2^j r)}=\int e^{2\pi i ky} d\nu_{n_m}$ converges along a subsequence to $\int e^{2\pi i ky} d\mu$. Since 1 is true we get that $\lim_{m \to \infty} \frac{1}{n_m} \sum_{j=0}^{n_m-1} e^{2\pi i k
      (2^j r)}=\int e^{2\pi i ky} d\mu=c_k$ for every $k \in \Z$. Since, every continuous function $f$ on $\T$ can be approximated uniformly using trigonometric polynomials (see Corollary 5.4 from \cite{SteinShakarchiFourier}), using routine approximation arguments we get that $\lim_{m \to \infty} \int f d\nu_{n_m} =\int f d\mu$ for every continuous function $f$ on $\T$. Hence, $\nu_{n_m} \wto \mu$. Conditions 1 easily follows from 2 since for every $k \in \Z$, $e^{2\pi i k y}$ is a continuous function on $\T$.
\end{proof}

We require an analogue of this theorem for Cantor space. But the proof
above cannot be adapted because on Cantor space, there are continuous
functions which cannot be approximated uniformly using trigonometric
polynomials. For example, consider $\chi_{C_0}$. Observe that $\chi_{C_0}(0^\infty)=1 \neq
0=\chi_{C_0}(1^\infty)$. But since $v(0^\infty) = v(1^\infty)$, every
trigonometric polynomial has the same value on $0^\infty$ and
$1^\infty$. However, we recover the analogue by handling dyadic
rational sequences and other sequences in separate cases. Since the
set of all finite unions of cylinder sets in $\Sigma^\infty$ is closed
under finite intersections and since the characteristic functions of
cylinder sets are continuous on the Cantor space, we get the following
analogue of Lemma \ref{lem:dyadicconvergenceimpliesweakconvergence}
and \ref{lem:weakconvergenceimpliesdyadicconvergence} using Theorem
2.2 from \cite{billingsley2013convergence}.

\begin{lemma}
\label{lem:cylindersetconvergenceandweakconvergence}
For a sequence of measures $\langle \nu_n \rangle_{n \in \N}$ on $\Sigma^\infty$, $\nu_n \wto \mu$ if and only if $\lim_{n \to \infty} \nu_n(C_w) =
\mu(C_w)$ for every $w \in \Sigma^*$. 
\end{lemma}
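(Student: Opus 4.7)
The plan is to derive both implications from the two standard weak-convergence results that the authors already invoke for the torus version: the Portmanteau theorem (Billingsley Theorem 2.1) for the forward direction, and the $\pi$-system criterion for weak convergence (Billingsley Theorem 2.2) for the reverse direction.

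The crucial topological observation for the forward direction is that every cylinder $C_w \subseteq \Sigma^\infty$ is \emph{clopen}: it is open in the product topology, and it is closed because its complement is the finite union $\bigcup_{w' \in \Sigma^{|w|},\, w' \ne w} C_{w'}$ of open cylinders. Hence $\partial C_w = \emptyset$, so $\mu(\partial C_w) = 0$ holds trivially and $C_w$ is a $\mu$-continuity set. Assuming $\nu_n \wto \mu$, the Portmanteau theorem immediately yields $\nu_n(C_w) \to \mu(C_w)$ for every $w \in \Sigma^*$.

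For the reverse direction, let $\mathcal{A}$ be the collection of all finite unions of cylinder sets, together with the empty set. I need to verify that $\mathcal{A}$ is a $\pi$-system and that $\sigma(\mathcal{A}) = \mathcal{B}(\Sigma^\infty)$. The latter is immediate from the definition of the Borel $\sigma$-algebra on $\Sigma^\infty$. For the former, the key fact is the \emph{prefix dichotomy}: any two cylinders $C_w$ and $C_{w'}$ are either nested ($C_w \cap C_{w'} = C_w$ when $w' \prefix w$, and symmetrically) or disjoint (when neither of $w, w'$ is a prefix of the other). Distributing the intersection of two finite unions of cylinders across these three cases collapses it back to a finite union of cylinders. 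Given the hypothesis $\nu_n(C_w) \to \mu(C_w)$ for every $w$, finite additivity on any disjoint refinement into cylinders (which every element of $\mathcal{A}$ admits) propagates the convergence to $\nu_n(A) \to \mu(A)$ for every $A \in \mathcal{A}$, and Billingsley's Theorem 2.2 concludes $\nu_n \wto \mu$.

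There is no substantive obstacle in this lemma; it is a clean packaging of the Portmanteau and $\pi$-system theorems that exploits the Cantor topology. The one point worth flagging, by contrast with the analogous Lemmas \ref{lem:dyadicconvergenceimpliesweakconvergence} and \ref{lem:weakconvergenceimpliesdyadicconvergence} on $\T$, is that we do \emph{not} need to exclude boundary atoms of $\mu$ here: because cylinders are clopen, their topological boundaries are automatically empty. This is precisely the feature of $\Sigma^\infty$ that elevates cylinder-set convergence to a full equivalence with weak convergence, rather than the conditional equivalence that holds for dyadic intervals on $\T$.
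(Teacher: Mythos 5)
Your proof is correct and follows essentially the same route as the paper's: the paper also derives the reverse direction from Billingsley's $\pi$-system criterion (Theorem 2.2) applied to finite unions of cylinders, and its forward direction rests on the same clopenness observation (the paper phrases it as continuity of $\chi_{C_w}$ and applies the definition of weak convergence directly, while you invoke Portmanteau via $\partial C_w = \emptyset$, but these are two faces of the same fact). Your closing remark contrasting this unconditional equivalence on $\Sigma^\infty$ with the conditional one for dyadic intervals on $\T$ is a useful clarification that the paper leaves implicit.
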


In the following theorems we relate the convergence of measures of
cylinder sets to the convergence of Weyl averages on the Cantor space
using Theorem \ref{thm:weakconvergenceequivalencetorus} and Lemma
\ref{lem:cylindersetconvergenceandweakconvergence}. We state these theorems for convergence along any subsequence, since we require these more general results for studying the subsequence limits of Weyl averages.  

\begin{theorem}
\label{thm:cylinderconvergenceimpliesweylconvergence}
Let $x \in \Sigma^\infty$ and $\langle \nu_n \rangle_{n=1}^{\infty}$ be the sequence of averages
of Dirac measures on $\Sigma^\infty$ constructed out of $\langle T^n x
\rangle_{n=0}^{\infty}$. Let $\langle n_m \rangle_{m \in \N}$ be any subsequence of natural numbers. If $\lim_{m \to \infty}\nu_{n_m}(C_w)=\mu(C_w)$ for every $w \in \Sigma^*$, then for every $k \in \Z$,
\begin{align*}
\lim_{m \to \infty} \frac{1}{n_m} \sum_{j=0}^{n_m-1} e^{2\pi i k v(T^j x)}=\int e^{2\pi i k v(y)} d\mu.	
\end{align*}
\end{theorem}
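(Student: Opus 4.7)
The plan is to reduce the statement to a direct application of the definition of weak convergence, applied to a specific bounded continuous function on $\Sigma^\infty$. First, using that $\nu_{n_m} = n_m^{-1}\sum_{j=0}^{n_m-1}\delta_{T^j x}$, I would rewrite each Weyl average as an integral against $\nu_{n_m}$:
\begin{align*}
\frac{1}{n_m}\sum_{j=0}^{n_m-1} e^{2\pi i k v(T^j x)} = \int e^{2\pi i k v(y)}\, d\nu_{n_m}(y).
\end{align*}
Next I would invoke Lemma \ref{lem:cylindersetconvergenceandweakconvergence}: the hypothesis that $\nu_{n_m}(C_w) \to \mu(C_w)$ for every $w \in \Sigma^*$ yields $\nu_{n_m} \wto \mu$ along the given subsequence.

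The crucial step is verifying that the integrand $f(y) := e^{2\pi i k v(y)}$ is bounded and continuous on $\Sigma^\infty$. Boundedness is immediate since $|f| \equiv 1$. For continuity, I would check that $v : \Sigma^\infty \to \T$ is $1$-Lipschitz with respect to the standard Cantor-space metric: if $x$ and $y$ agree on their first $n$ symbols, then
\begin{align*}
|v(x)-v(y)| \leq \sum_{i=n}^{\infty} \frac{|x_i - y_i|}{2^{i+1}} \leq 2^{-n},
\end{align*}
so $d_\T(v(x),v(y)) \leq 2^{-n} = d(x,y)$. Composing with the continuous map $r \mapsto e^{2\pi i k r}$ on $\T$ gives continuity of $f$ on $\Sigma^\infty$. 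With $f$ bounded and continuous, weak convergence directly yields
\begin{align*}
\lim_{m \to \infty} \int f\, d\nu_{n_m} = \int f\, d\mu = \int e^{2\pi i k v(y)}\, d\mu,
\end{align*}
which is exactly the desired identity.

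The only plausible source of confusion is the excerpt's earlier remark that on $\Sigma^\infty$ not every continuous function admits a uniform approximation by trigonometric polynomials. That obstruction is relevant only to the converse direction, namely recovering cylinder-mass convergence from Weyl-average convergence, and does not affect the present implication. For this direction, continuity of the single integrand $e^{2\pi i k v(\cdot)}$ is all that is needed, and it follows from the Lipschitz continuity of the evaluation map $v$. Consequently I expect no serious obstacle.
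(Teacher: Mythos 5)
Your proof matches the paper's argument exactly: rewrite the Weyl average as $\int e^{2\pi i k v(y)}\,d\nu_{n_m}$, invoke Lemma \ref{lem:cylindersetconvergenceandweakconvergence} to upgrade cylinder-mass convergence to weak convergence, and apply the definition of weak convergence to the bounded continuous integrand $e^{2\pi i k v(\cdot)}$. Your explicit Lipschitz bound for $v$ just fills in the continuity remark the paper makes in a footnote, so there is no gap and no divergence of method.
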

Observe that $n_m^{-1} \sum_{j=0}^{n_m-1} e^{2\pi i k v(T^j x)}=\int e^{2\pi i k v(y) }d\nu_{n_m}$. Hence, the above claim follows from Lemma
\ref{lem:cylindersetconvergenceandweakconvergence} and the definition
of weak convergence since for every $k \in \Z$, $e^{2\pi i k v(y)}$ is
a continuous function on $\Sigma^\infty$\footnote{This follows easily
by observing that the valuation map $v:\Sigma^\infty \to \T$ is a
continuous function on $\Sigma^\infty$.}. While Fourier coefficients
uniquely determine measures over $\T$, Bochner's Theorem does not hold
over $\Sigma^\infty$. For example let $\mu_1=\delta_{0^\infty}$ and
let $\mu_2=\delta_{1^\infty}$. Then $\mu_1 \ne \mu_2$, but it is easy
to verify that for any $k \in \Z$, $ \int e^{2 \pi i k v(y) } d\mu_1 =
e^{2 \pi i k v( 0^\infty) } = 1 = e^{2 \pi i k v( 1^\infty) } = \int
e^{2 \pi i k v(y) } d\mu_2$. We need the following lemma to obtain
a converse of Theorem
\ref{thm:cylinderconvergenceimpliesweylconvergence}.

\begin{lemma}
  \label{lem:nonzerodyadicpointsarenotlimitpoints}
  Let $x \in \Sigma^\infty$ such that $v(x) \not \in \mathbb{D}$ and
  let $\langle \nu'_n \rangle_{n=1}^{\infty}$ be the sequence of
  averages of Dirac measures on $\T$ constructed out of the sequence
  $\langle 2^n v(x) \Mod 1 \rangle_{n=0}^{\infty}$. Let $d$ be any non-zero dyadic rational. If $\nu'_{n_m}
  \wto \mu'$ for some subsequence of natural numbers $\langle n_m
  \rangle_{m \in \N}$, then $\mu'(\{d\})=0$.
  \end{lemma}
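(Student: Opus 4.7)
The plan is to establish that the weak limit $\mu'$ is invariant under the doubling map $D\colon \T \to \T$ given by $D(y) = 2y \Mod 1$, and then exploit this invariance to force $\mu'$ to assign zero mass to every nonzero dyadic rational. A key geometric fact I will use is that $D$ is continuous on $\T$: although the naive map looks discontinuous at $1/2$ on $[0,1]$, the identification of the endpoints in the circle metric $d(r,s) = \min\{|r-s|, 1-|r-s|\}$ removes this singularity. Note that $\nu'_n = \tfrac{1}{n}\sum_{j=0}^{n-1} \delta_{D^j v(x)}$ in this notation.

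For the invariance step, I would observe that the push-forward
\[
D_*\nu'_n \;=\; \frac{1}{n}\sum_{j=1}^{n}\delta_{D^j v(x)}
\]
differs from $\nu'_n$ by $\tfrac{1}{n}(\delta_{D^n v(x)} - \delta_{v(x)})$, so that $\left|\int f\,d(D_*\nu'_n) - \int f\,d\nu'_n\right| \le 2\|f\|_\infty/n$ for every bounded continuous $f$ on $\T$. Since $D$ is continuous, $f\circ D$ is continuous, and $\nu'_{n_m} \wto \mu'$ yields $D_*\nu'_{n_m} \wto D_*\mu'$. Combining these, $\int f\,d\mu' = \int f\,d(D_*\mu')$ for every bounded continuous $f$, so $D_*\mu' = \mu'$.

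For the conclusion, I would iterate invariance to obtain $\mu'(D^{-n}(\{0\})) = \mu'(\{0\})$ for every $n \ge 1$. Since $D^{-n}(\{0\}) = \{k/2^n : 0 \le k < 2^n\}$ is a disjoint union of $2^n$ singletons containing $\{0\}$,
\[
\mu'(\{0\}) + \sum_{k=1}^{2^n - 1} \mu'(\{k/2^n\}) \;=\; \mu'(\{0\}),
\]
which forces $\mu'(\{k/2^n\}) = 0$ for every $1 \le k \le 2^n - 1$. Every nonzero dyadic rational is of this form for some $n$, so $\mu'(\{d\}) = 0$.

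The main conceptual point is the invariance of $\mu'$, whose justification relies essentially on the continuity of $D$ on the torus (something that would fail on $[0,1]$ but is rescued by the circle metric); once that is in hand, the combinatorial identity involving the preimage tree $D^{-n}(\{0\})$ cleanly cancels atoms at every nonzero dyadic rational. The hypothesis $v(x) \notin \mathbb{D}$ is not strictly needed by this argument, but it is consistent with the setting in which the lemma will later be invoked.
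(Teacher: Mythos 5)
Your proof is correct, but it takes a genuinely different route from the paper. The paper argues directly with binary expansions: for a small interval $U_k$ around $d$, membership of $2^nv(x)\bmod 1$ in $U_k$ forces the substring $x_n^{n+k-1}$ to match one of two fixed patterns determined by $d$, and the fact that $d\neq 0$ prevents these patterns from self-overlapping too densely; bounding the resulting visit frequency and then squeezing $\chi_{\{d\}}$ between a continuous ``tent'' function and $\chi_{U_k}$ gives $\mu'(\{d\})=0$. This argument crucially uses the hypothesis $v(x)\notin\mathbb{D}$ so that $x$ is the unique binary expansion of $v(x)$. You instead invoke the (Krylov--Bogolyubov-type) observation that any weak subsequence limit of orbit-averaged Dirac measures of a continuous map on a compact metric space is invariant under that map; since the doubling map $D(y)=2y\bmod 1$ is continuous on $\T$ (continuity which would fail on $[0,1]$ but holds in the circle metric), you get $D_*\mu'=\mu'$, and then the identity $\mu'\bigl(D^{-n}(\{0\})\bigr)=\mu'(\{0\})$, with $D^{-n}(\{0\})=\{k/2^n : 0\le k<2^n\}$, forces every atom at a nonzero dyadic rational to vanish at once. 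Your approach is cleaner and more conceptual, dispenses entirely with the hypothesis $v(x)\notin\mathbb{D}$, and handles all nonzero dyadics simultaneously rather than one $d$ at a time; the paper's approach is more elementary and self-contained, working directly at the level of counting substring occurrences without appealing to invariance of limit measures.
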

  \begin{proof}
For every $k$, let $U_k$ be the interval $\left( d-1/2^{k+1},d+1/2^{k+1} \right)$. Let $w_1$ be the unique string ending with $0$ such that $v(w_1 1^\infty)=d$ and let $w_2$ be the unique string ending with $1$ such that $v(w_2 0^\infty)=d$ . Since $v(x) \not\in \mathbb{D}$, $v(x)$ has a unique binary expansion which is the sequence $x$. If $2^n v(x) \Mod 1 \in U_k$, then either $x_n^{n+k-1}=(w_1 1^\infty)_0^{k-1}$ or $x_n^{n+k-1}=(w_2 0^\infty)_0^{k-1}$. Let us consider the case when $x_n^{n+k-1}=(w_1 1^\infty)_0^{k-1}$. Since $d \neq 0$, we have $w_1 \neq 1^{\lvert w_1 \rvert}$ and hence $x_{n+i}^{n+i+k-1} \neq (w_1 1^\infty)_0^{k-1}$ and $x_{n+i}^{n+i+k-1} \neq (w_2 0^\infty)_0^{k-1}$ for any $i$ between $\max\{\lvert w_1 \rvert,\lvert w_2 \rvert\}$ and $k-\max\{\lvert w_1 \rvert,\lvert w_2 \rvert\}$. Therefore, $2^{n+i} v(x) \Mod 1 \not\in U_k$ for any $i$ between $\max\{\lvert w_1 \rvert,\lvert w_2 \rvert\}$ and $k-\max\{\lvert w_1 \rvert,\lvert w_2 \rvert\}$. Since this is true for any $n$, we get that for any $k \geq \max\{\lvert w_1 \rvert,\lvert w_2 \rvert\}$,
\begin{align*}
  	\limsup_{n \to \infty}\frac{\#\{2^i v(x) \Mod 1 \in U_k \mid 0 \leq i \leq n-1\}}{n} \leq \frac{2 \max\{\lvert w_1 \rvert,\lvert w_2 \rvert\}}{k}.
  	\end{align*}
 We get the same bound in the case when $x_n^{n+k-1}=(w_2 0^\infty)_0^{k-1}$. Hence for any $k \geq \max\{\lvert w_1 \rvert,\lvert w_2 \rvert\}$,
\begin{align*}
\limsup_{m \to \infty} \int \chi_{U_k} d\nu_{n_m} & = \limsup_{m \to \infty}\frac{\#\{2^i v(x) \Mod 1 \in U_k \mid 0 \leq i \leq n_m-1\}}{n_m} \\
&\leq \limsup_{n \to \infty}\frac{\#\{2^i v(x) \Mod 1 \in U_k \mid 0 \leq i \leq n-1\}}{n} \\
&\leq \frac{C}{k}.
\end{align*}
where the constant $C$ only depends on $d$. Now we define a sequence of functions $f_k$ as follows.
\begin{align*}
f_k(x) = \begin{cases}
 	1-2^{k+1}(x-d) & \text{if } d \leq x \leq d+1/2^{k+1}\\
 	1+2^{k+1}(x-d) &\text{if } d-1/2^{k+1} \leq x \leq d\\
 	0 & \text{otherwise.}
 \end{cases}	
\end{align*}
Each $f_k$ is a continuous function on $\T$. Since $\lVert f_k \rVert_\infty \leq 1$ for every $k$, we have $f_k \leq \chi_{U_k}$ and using this inequality we get that,
\begin{align*}
\mu'(\{d\}) &= \int \chi_{\{d\}} d\mu' \leq  \int f_k d\mu' \\
&= \lim_{m \to \infty } \int f_k d\nu_{n_m} \\
&= \limsup_{m \to \infty } \int f_k d\nu_{n_m} \\
&\leq \limsup_{m \to \infty } \int \chi_{U_k} d\nu_{n_m} \\
&\leq \frac{C}{k}.
\end{align*}
Since the above bound is true for every for any $k \geq \max\{\lvert w_1 \rvert,\lvert w_2 \rvert\}$, we get that $\mu'(\{d\}) =0$.
\end{proof}

 Using Lemma \ref{lem:weakconvergenceimpliesdyadicconvergence}, Theorem
 \ref{thm:weakconvergenceequivalencetorus} and
 Lemma \ref{lem:nonzerodyadicpointsarenotlimitpoints} we obtain the following
 partial converse of Theorem
 \ref{thm:cylinderconvergenceimpliesweylconvergence}.

\begin{theorem}
\label{thm:weylconvergenceimpliescylinderconvergence}
Let $x \in \Sigma^\infty$ and let $\langle n_m \rangle_{m \in \N}$ be any subsequence of natural numbers. Let $\langle c_k \rangle_{k \in \Z}$ be complex numbers such that
$
\lim_{m \to \infty} \frac{1}{n_m} \sum_{j=0}^{n_m-1} e^{2\pi i k v(T^j x)} = c_k
$
for every $k \in \Z$. Then there exists a unique measure $\mu$ on $\T$ having Fourier coefficients $\langle c_k \rangle_{k \in \Z}$ and $\lim_{m \to \infty}\nu_{n_m}(C_w)=\hat{\mu}(C_w)$ for every $w \in \Sigma^*$ such that $w \neq 1^{\lvert w \rvert}$ and $w \neq 0^{\lvert w \rvert}$.
\end{theorem}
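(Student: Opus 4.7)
The plan is to combine Theorem \ref{thm:weakconvergenceequivalencetorus} on $\T$ with a transfer of weak convergence to cylinder-mass convergence on $\Sigma^\infty$ via the evaluation map $v$. Let $\langle \nu'_n \rangle_{n=1}^{\infty}$ denote the Dirac-averages on $\T$ built from $\langle 2^j v(x) \Mod 1 \rangle_{j=0}^{\infty}$; this is the torus-side companion of the $\Sigma^\infty$-side sequence $\langle \nu_n \rangle$.

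I split into two cases according to whether $v(x) \in \mathbb{D}$. If $v(x) \in \mathbb{D}$, then $x$ is eventually $0^\infty$ or eventually $1^\infty$, so $v(T^j x) = 0$ for all sufficiently large $j$, which forces $c_k = 1$ for every $k \in \Z$. By Bochner uniqueness, the only measure on $\T$ with Fourier coefficients identically $1$ is $\delta_0$, whose lift to $\Sigma^\infty$ is $\hat\mu = \delta_{0^\infty}$. A direct count (in both sub-cases) shows that for every $w$ with $w \neq 0^{\lvert w \rvert}$ and $w \neq 1^{\lvert w \rvert}$, one has $\nu_{n_m}(C_w) \to 0 = \hat\mu(C_w)$.

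In the generic case $v(x) \notin \mathbb{D}$, Theorem \ref{thm:weakconvergenceequivalencetorus} applied to the hypothesis yields a unique measure $\mu$ on $\T$ with Fourier coefficients $\langle c_k \rangle_{k \in \Z}$ such that $\nu'_{n_m} \wto \mu$. By Lemma \ref{lem:nonzerodyadicpointsarenotlimitpoints}, $\mu(\{d\}) = 0$ for every non-zero dyadic rational $d$. For any $w$ satisfying $w \neq 0^{\lvert w \rvert}$ and $w \neq 1^{\lvert w \rvert}$, both endpoints of $I_w = [v(w0^\infty), v(w0^\infty) + 2^{-\lvert w \rvert})$ are non-zero dyadic rationals in $\T$, so Lemma \ref{lem:weakconvergenceimpliesdyadicconvergence} gives $\nu'_{n_m}(I_w) \to \mu(I_w) = \hat\mu(C_w)$.

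It remains to identify $\nu_n(C_w)$ with $\nu'_n(I_w)$. Since $v(x)$ is non-dyadic and multiplication by $2^j$ modulo $1$ preserves non-dyadicity, $v(T^j x) = 2^j v(x) \Mod 1$ is non-dyadic for every $j$, and hence $T^j x$ is its unique binary expansion. Therefore $T^j x \in C_w$ iff $2^j v(x) \Mod 1 \in I_w$, which gives the identity $\nu_n(C_w) = \nu'_n(I_w)$, and the desired conclusion follows from the previous paragraph. The main subtlety is the endpoint condition: $\mu$ may charge the point $0 \in \T$, which is precisely the boundary point of the intervals $I_{0^{\lvert w \rvert}}$ and $I_{1^{\lvert w \rvert}}$, so the conclusion may genuinely fail for those two families of cylinders; this is exactly why the statement restricts to $w \neq 0^{\lvert w \rvert}$ and $w \neq 1^{\lvert w \rvert}$.
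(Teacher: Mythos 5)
Your proposal is correct and follows essentially the same route as the paper's proof: split on whether $v(x)$ is dyadic, invoke Theorem~\ref{thm:weakconvergenceequivalencetorus} to get $\nu'_{n_m}\wto\mu$ on $\T$, use Lemma~\ref{lem:nonzerodyadicpointsarenotlimitpoints} together with Lemma~\ref{lem:weakconvergenceimpliesdyadicconvergence} to control the endpoints of $I_w$, and transfer via the identity $\nu_n(C_w)=\nu'_n(I_w)$. Your closing remark correctly identifies why the exclusion of $w=0^{\lvert w\rvert}$ and $w=1^{\lvert w\rvert}$ is necessary (potential mass of $\mu$ at $0\in\T$), which is the same reason implicit in the paper's argument.
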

\begin{proof}
	We first consider the case when $v(x)$ is a dyadic rational in $\T$. In this case, it is easy to verify that for every $k \in \Z$,
$\lim_{n \to \infty} \frac{1}{n} \sum_{j=0}^{n-1} e^{2\pi i k v(T^j x)} =1$. The unique measure on $\T$ having all Fourier coefficients equal to $1$ is $\mu=\delta_{0}$ and we have $\mu'=\delta_{0^\infty}$. In this case it is easy to verify that for every $w$ that is not equal to $1^{\lvert w \rvert}$ or $0^{\lvert w \rvert}$, $\lim_{m \to \infty}\nu_{n_m}(C_w)=0=\hat{\mu}(C_w)$. 
Now, we consider the case when $v(x)$ is not a dyadic rational in $\T$. In this case we have that $v(T^j x)$ is not a dyadic rational for all $j \geq 0$. In this case, it is easily verified that $v(T^j x)=2^j v(x) \Mod 1$ for all $j \geq 0$. This gives us the following equality,
	\begin{align}\label{eqn:weylconvergenceimpliescylinderconvergence1}
		\frac{1}{n_m}\sum_{j=0}^{n_m-1} e^{2 \pi i k (v(T^j x)) } = \frac{1}{n_m}\sum_{j=0}^{n_m-1} e^{2 \pi i k (2^j v(x)) }.
	\end{align}
	Let $\langle \nu'_{n} \rangle_{n=1}^{\infty}$ be the sequence of averages of Dirac measures on $\T$ constructed out of the sequence $\langle 2^{n} v(x) \Mod 1  \rangle_{n=0}^{\infty}$. From \ref{eqn:weylconvergenceimpliescylinderconvergence1} and Theorem \ref{thm:weakconvergenceequivalencetorus}, we get that $\nu'_{n_m} \wto \mu$ where $\mu$ is the unique measure on $\T$ having Fourier coefficients $\langle c_k \rangle_{k \in \Z}$. 
	
	By definition, $\nu_{n_m} = \frac{1}{n_m} \sum_{j=0}^{n_m-1} \delta_{T^j x}$. For any $w \in \Sigma^\infty$, $\delta_{T^j x}(C_w)$ is $1$ if and only if $T^j x \in C_w$. Since, $v(T^j x)$ is not a dyadic rational for all $j \geq 0$, we get that $T^j x \in C_w$ if and only if $2^j v(x) \Mod 1 \in I_w$. This is because $v(T^j x) < v(w1^\infty)$ since $v(T^j x)$ is not a dyadic rational and $T^j x \in C_w$.  Then, the last observation lets us conclude that,
	\begin{align}\label{eqn:weylconvergenceimpliescylinderconvergence2}
	\nu_{n_m}(C_w) = \nu'_{n_m}(I_w)	
	\end{align}
	for all $m \geq 1$.
	
	Let $w$ be any string such that $w \neq 1^{\lvert w \rvert}$ and $w \neq 0^{\lvert w \rvert}$. Using Lemma \ref{lem:nonzerodyadicpointsarenotlimitpoints}, we get that $\mu(\{v(w0^\infty)\})=\mu(\{v(w1^\infty)\})=0$. Since $v(w0^\infty)$ and $v(w1^\infty)$ are the end points of $I_w$, using Lemma \ref{lem:weakconvergenceimpliesdyadicconvergence} we get that $\lim_{m \to \infty}\nu'_{n_m}(I_w)=\mu(I_w)$. Hence, from \ref{eqn:weylconvergenceimpliescylinderconvergence2}, we get that $\lim_{m \to \infty}\nu_{n_m}(C_w)=\lim_{m \to \infty}\nu'_{n_m}(I_w)=\mu(I_w)=\hat{\mu}(C_w)$. The proof of the claim is thus complete.
\end{proof}

For any $x \in \Sigma^\infty$, let $\langle \nu_n
\rangle_{n=1}^{\infty}$ be the sequence of averages of Dirac measures
on $\Sigma^\infty$ constructed out of the sequence $\langle T^n x
\rangle_{n=0}^{\infty}$. Now, for any $A \in
\mathcal{B}(\Sigma^\infty)$, $\nu_n (A)$ is the proportion of elements
in the finite sequence $x,Tx,T^2x, \dots T^{n-1}x$ which falls inside
the set $A$. From this remark, and the definitions of $\nu_n$ and the
sliding count probability $P$, the following lemma follows easily.

\begin{lemma}
\label{lem:deltameasuresandcoutning}
Let $w$ be any finite string in $\Sigma^*$ and let $l=\lvert w
\rvert$. Let $x$ be any element in $\Sigma^\infty$. If $\langle
\nu_n\rangle_{n=1}^{\infty}$ is the sequence of averages of Dirac
measures over $\Sigma^\infty$ constructed out of the sequence $\langle
T^n x\rangle_{n=0}^{\infty}$. Then for any $n$, $
\nu_n(C_w)=P(x_0^{n+l-2},w).  $
\end{lemma}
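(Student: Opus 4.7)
The plan is to unfold both sides from their definitions and observe that they match; this lemma is essentially bookkeeping. First I would write
\[
\nu_n(C_w) \;=\; \frac{1}{n}\sum_{i=0}^{n-1} \delta_{T^i x}(C_w),
\]
and then note that $\delta_{T^i x}(C_w) = 1$ exactly when $T^i x \in C_w$, i.e., when $(T^i x)_0^{l-1} = w$, which is equivalent to $x_i^{i+l-1} = w$. Hence
\[
\nu_n(C_w) \;=\; \frac{\lvert\{i : 0 \le i \le n-1,\ x_i^{i+l-1} = w\}\rvert}{n}.
\]

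Next I would expand the right-hand side using the definition of the sliding count probability applied to the string $y = x_0^{n+l-2}$, whose length is $n+l-1$. With this length, the range of admissible starting indices in the definition of $P$ is $[0,(n+l-1)-l] = [0,n-1]$, and the denominator is $(n+l-1)-l+1 = n$. Since $y_i^{i+l-1} = x_i^{i+l-1}$ for every $i \in [0, n-1]$, the count and denominator obtained from $P(x_0^{n+l-2},w)$ coincide with those already computed for $\nu_n(C_w)$, and the claimed equality follows.

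There is no real obstacle here; the only thing to be careful about is the index bookkeeping, which is precisely why the statement uses the prefix up to position $n+l-2$ (length $n+l-1$): this is exactly the length required to accommodate $n$ sliding windows of length $l$ starting at positions $0, 1, \ldots, n-1$. A cleaner way to frame the argument, if desired, is to observe directly that whether $w$ occurs at position $i$ of $x$ depends only on $x_i^{i+l-1}$, so the occurrences counted by $\nu_n$ for $i \in [0,n-1]$ are in bijection with the sliding occurrences of $w$ in the prefix $x_0^{n+l-2}$.
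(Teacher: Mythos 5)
Your proof is correct and takes essentially the same approach as the paper's: unfold the definitions of $\nu_n(C_w)$ and the sliding count probability $P$, and observe that the sets being counted (and the normalizing denominator $n$) agree. Yours simply spells out the index bookkeeping that the paper treats as immediate.
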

\begin{proof}
From the definition, $\nu_n(C_w)$ is the proportion of elements in the finite sequence $\langle T^n x \rangle_{i=0}^{n-1}$ which begins with the string $w$. This is equal to $P(x_0^{n+l-2},w)$ by the definition of $P$.
\end{proof}

We now give a new characterization of Weyl's criterion on Cantor Space
(Theorem \ref{thm:weyls_criterion_on_cantor_space}) in terms of weak
convergence of measures. In later sections, we generalize this to
characterize finite-state dimension in terms of exponential sums.

\begin{theorem}[Weyl's criterion on $\Sigma^\infty$ and weak convergence]
\label{thm:weylcriterioncantor}
  Let $x \in \Sigma^\infty$, and $\langle\nu_n \rangle_{n=1}^{\infty}$ be the sequence of averages of Dirac
  measures constructed out of $\langle T^n x \rangle_{n =0}^{\infty}$, and
  $\mu$ be the uniform measure on $\Sigma^\infty$. Then
  the following are equivalent.
  \begin{enumerate}
\itemsep=0em
  \item $x$ is normal.
  \item For every $w \in \Sigma^*$, the sliding block frequency
    $P(x_0^{n-1},w) \to 2^{-\lvert w \rvert}$ as $n \to \infty$.
  \item For every $k \in \Z$, $\lim_{n \to \infty} \frac{1}{n}
    \sum_{j=0}^{n-1} e^{2\pi i k v(T^j x)} = 0$.
  \item $\nu_n \wto \mu$.
  \end{enumerate}
\end{theorem}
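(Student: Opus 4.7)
The plan is to close the equivalence by assembling three pieces that are already available. The equivalence $(1) \Leftrightarrow (2)$ is immediate from the definition of a normal sequence given in the preliminaries, which states precisely that $P(x_0^{n-1},w) \to 2^{-\lvert w \rvert}$ for every $w \in \Sigma^*$. The equivalence $(1) \Leftrightarrow (3)$ is exactly Theorem \ref{thm:weyls_criterion_on_cantor_space}, already proved. Hence the only substantive task is to link (4) to the rest, and I would do this most cleanly by proving $(2) \Leftrightarrow (4)$, since both are statements about combinatorial/cylinder-level quantities rather than exponential sums.

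For this, I would first reduce weak convergence to convergence of cylinder masses. By Lemma \ref{lem:cylindersetconvergenceandweakconvergence}, $\nu_n \wto \mu$ holds if and only if $\nu_n(C_w) \to \mu(C_w)$ for every $w \in \Sigma^*$. Since $\mu$ is the uniform measure on $\Sigma^\infty$, we have $\mu(C_w) = 2^{-\lvert w \rvert}$, so (4) becomes the statement that $\nu_n(C_w) \to 2^{-\lvert w \rvert}$ for every $w$.

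Next, I would invoke Lemma \ref{lem:deltameasuresandcoutning}, which supplies the identity $\nu_n(C_w) = P(x_0^{n+\lvert w \rvert-2},w)$. For each fixed string $w$ the finite index shift $n \mapsto n + \lvert w \rvert - 1$ does not affect the limit, so $\nu_n(C_w) \to 2^{-\lvert w \rvert}$ if and only if $P(x_0^{n-1},w) \to 2^{-\lvert w \rvert}$. This establishes $(2) \Leftrightarrow (4)$ and closes the cycle.

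No step here is a genuine obstacle: the argument is a routine composition of one definition, one already-proved theorem, and two lemmas. The reason the proof is so short is precisely that we are in the special case of a \emph{single, unique} weak limit, namely the uniform measure $\mu$. The subtler issues that arose in Theorem \ref{thm:weylconvergenceimpliescylinderconvergence} (failure of Bochner uniqueness on $\Sigma^\infty$, together with the dyadic boundary mass on $\T$ analyzed in Lemma \ref{lem:nonzerodyadicpointsarenotlimitpoints}, which forced a separate treatment of the strings $0^{\lvert w \rvert}$ and $1^{\lvert w \rvert}$) are exactly the difficulties that will have to be confronted when generalizing this characterization to arbitrary finite-state dimensions in the subsequent sections.
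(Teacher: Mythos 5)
Your proof is correct, and all three of your transitions are sound: $(1)\Leftrightarrow(2)$ is definitional, $(2)\Leftrightarrow(4)$ follows from Lemma \ref{lem:cylindersetconvergenceandweakconvergence} together with Lemma \ref{lem:deltameasuresandcoutning} and the harmless index shift, and $(1)\Leftrightarrow(3)$ is precisely Theorem \ref{thm:weyls_criterion_on_cantor_space}.

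You do, however, take a slightly different route from the paper for the exponential-sum condition (3). The paper also proves $(1)\Leftrightarrow(2)$ by definition and $(2)\Leftrightarrow(4)$ via the same two lemmas, but instead of invoking Theorem \ref{thm:weyls_criterion_on_cantor_space} to close the loop, it re-derives $(2)\Rightarrow(3)$ from Theorem \ref{thm:cylinderconvergenceimpliesweylconvergence} and $(3)\Rightarrow(2)$ from Theorem \ref{thm:weakconvergenceequivalencetorus} combined with the portmanteau lemma (Lemma \ref{lem:weakconvergenceimpliesdyadicconvergence}). Your shortcut is the logically more economical one, since Theorem \ref{thm:weyls_criterion_on_cantor_space} is an already-established result. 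What the paper's choice buys is that the weak-convergence machinery developed in this section is shown to recover the classical Weyl criterion on its own, without leaning on Theorem \ref{thm:weyls_criterion_on_cantor_space} (whose proof in turn rests on the cited but unproved classical Theorem \ref{thm:weyls_criterion_on_torus}); this makes the subsequent generalization to arbitrary finite-state dimension feel like a continuation of the same toolkit rather than a change of methods. Your closing observation, that the clean form of $(2)\Leftrightarrow(4)$ here is due to the uniform limit measure putting no mass on dyadic endpoints and that the $0^{\lvert w\rvert}$, $1^{\lvert w\rvert}$ pathology is what must be handled in the general case, is exactly right and is the content that Lemma \ref{lem:nonzerodyadicpointsarenotlimitpoints} and Theorem \ref{thm:weylconvergenceimpliescylinderconvergence} are set up to address.
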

\begin{proof}
  1 and 2 are equivalent by definition. The equivalence of 2 and 4 follows from Lemma \ref{lem:deltameasuresandcoutning} and Lemma \ref{lem:cylindersetconvergenceandweakconvergence}. 2 $\implies$ 3 follows directly from Theorem \ref{thm:cylinderconvergenceimpliesweylconvergence} and Lemma \ref{lem:deltameasuresandcoutning}. Now, we prove 3 $\implies$ 2. The uniform distribution $\mu'$ on $\T$ is the unique measure having all Fourier coefficients equal to $0$. Let $\langle \nu'_{n} \rangle_{n=1}^{\infty}$ be the sequence of averages of Dirac measures on $\T$ constructed out of the sequence $\langle 2^{n} v(x) \Mod 1  \rangle_{n=0}^{\infty}$. From Theorem \ref{thm:weakconvergenceequivalencetorus}, we get that $\nu'_n \wto \mu'$. Since $\mu'(\{y\})=0$ for any $y \in \T$, using Lemma \ref{lem:weakconvergenceimpliesdyadicconvergence} we get that $\nu'_n(I_w) \to \mu'(I_w)$ as $n \to \infty$ for every $w \in \Sigma^*$. If $v(x)$ is a dyadic rational, it is easy to verify that the Weyl averages converges to $1$ for every $k \in \Z$. Hence, $v(x)$ must not be a dyadic rational. As in the proof of Theorem \ref{thm:weylconvergenceimpliescylinderconvergence}, this implies that $\nu_{n}(C_w) = \nu'_{n}(I_w)$ for all $m \geq 1$ and every $w \in \Sigma^*$. From the previous observations, we get that $\lim_{n \to \infty} \nu_{n}(C_w) = \lim_{n \to \infty} \nu'_{n}(I_w) = \mu'(I_w) = 2^{-\lvert w \rvert}$ for every $w \in \Sigma^*$. Finally, we get $\lim_{n \to \infty} P(x_0^{n-1},w) = 2^{-\lvert w \rvert}=\mu(C_w)$ using Lemma \ref{lem:deltameasuresandcoutning}.
\end{proof}
  
\section{Divergence of exponential sums for non-normal numbers}
\label{sec:counterexampletoconvergence}
Weyl's criterion says that when $\dim_{FS}(x)=\Dim_{FS}(x)=1$ the
averages of the exponential sums for every $k$ converges to
$0$. However for $x$ with $\dim_{FS}(x) < 1$, the situation is
different. It is easy to construct a sequence $a$ with
$\dim_{FS}(a)<1$ and a $k\in\Z$ such that the sequence of Weyl
averages with parameter $k$ do not converge. It is natural to ask if
the condition $\dim_{FS}(x)=\Dim_{FS}(x)$ is sufficient to guarantee
convergence of the exponential sum averages. But we construct an $x$
with $\dim_{FS}(x)=\Dim_{FS}(x)=\frac{1}{2}$ such that for some $k$,
the sequence $\langle \sum_{j=0}^{n-1}e^{2 \pi i k (v(T^j x))}/n
\rangle_{n=1}^{\infty}$ diverges.

Entropy rates converging to a limit does not imply that the empirical
probability measures converge to a limiting distribution, and it is
the latter notion which is necessary for exponential sums to converge.
\begin{lemma}
\label{lem:nonconvergentexample}
There exists $x \in \Sigma^\infty$ with $\dim_{FS}(x)=\Dim_{FS}(x)=\frac{1}{2}$ such that for some $k \in \Z$, the sequence $\langle \sum_{j=0}^{n-1}e^{2 \pi i k (v(T^j x))}/n \rangle_{n=1}^{\infty}$ is not convergent.
\end{lemma}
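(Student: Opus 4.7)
The plan is to build $x$ by a controlled concatenation of two source sequences $a, b \in \Sigma^\infty$ each having $\dim_{FS} = \Dim_{FS} = \tfrac{1}{2}$ but whose empirical measures on $\Sigma^\infty$ differ, so that the Weyl averages of $x$ oscillate between two distinct limits. Concretely, fix a normal sequence $c \in \Sigma^\infty$ and set
\begin{align*}
a = c_0\,0\,c_1\,0\,c_2\,0\,\ldots, \qquad b = c_0\,1\,c_1\,1\,c_2\,1\,\ldots,
\end{align*}
so that in $a$ every odd position is $0$ and in $b$ every odd position is $1$, while the even positions carry $c$. A direct disjoint $2\ell$-block computation gives $H^d_{2\ell}(a_0^{n-1})/(2\ell), H^d_{2\ell}(b_0^{n-1})/(2\ell) \to \tfrac{1}{2}$, so by (\ref{eqn:limitdimension}) and (\ref{eqn:limitstrongdimension}) both $a$ and $b$ have finite-state and strong dimension equal to $\tfrac{1}{2}$. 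Since the limiting sliding frequencies of $1$ in $a$ and $b$ are $\tfrac{1}{4}$ and $\tfrac{3}{4}$ respectively, Lemmas \ref{lem:deltameasuresandcoutning} and \ref{lem:cylindersetconvergenceandweakconvergence} give $\nu_n(a) \wto \mu_a$ and $\nu_n(b) \wto \mu_b$ for distinct $\mu_a, \mu_b$ on $\Sigma^\infty$; pushing forward through $v$, and using that $\mu_a, \mu_b$ assign zero mass to each dyadic singleton, the measures $v_*\mu_a$ and $v_*\mu_b$ still disagree on $I_1$, hence by Bochner uniqueness on $\T$ their Fourier coefficients differ at some $k_0 \in \Z \setminus \{0\}$.

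Next, I would choose a schedule of positive integers $N_1 < M_1 < N_2 < M_2 < \ldots$ growing so fast that $N_{k+1}, M_{k+1} \ge 2^{\sum_{j\le k}(N_j+M_j)}$, and define
\begin{align*}
x = a_0^{N_1-1}\,b_0^{M_1-1}\,a_{N_1}^{N_1+N_2-1}\,b_{M_1}^{M_1+M_2-1}\,a_{N_1+N_2}^{N_1+N_2+N_3-1}\,\cdots
\end{align*}
by alternating consecutive segments of $a$ and $b$ of the scheduled lengths. Let $T_k^a, T_k^b$ mark the ends of the $k$-th $a$- and $b$-blocks. By the growth rate, the most recent segment accounts for a $(1-o(1))$-fraction of $x_0^{T_k^a-1}$, and for every fixed $w \in \Sigma^*$ the sliding frequency $P(x_0^{T_k^a+\lvert w\rvert-2},w)$ differs from $\mu_a(C_w)$ only by the frequency inside the current $a$-segment (which tends to $\mu_a(C_w)$ because consecutive segments of $a$ remain generic for $\mu_a$) plus a boundary term of order $\lvert w \rvert k / T_k^a$. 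Hence $\nu_{T_k^a} \wto \mu_a$, and Theorem \ref{thm:cylinderconvergenceimpliesweylconvergence} gives $(T_k^a)^{-1}\sum_{j=0}^{T_k^a-1} e^{2\pi i k_0 v(T^j x)} \to \int e^{2\pi i k_0 v(y)}\,d\mu_a$. The analogous argument along $\{T_k^b\}$ yields convergence to $\int e^{2\pi i k_0 v(y)}\,d\mu_b$, and these two limits differ by the choice of $k_0$, so the Weyl averages at $k_0$ diverge.

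The main technical step, and expected principal obstacle, is verifying that $\dim_{FS}(x) = \Dim_{FS}(x) = \tfrac{1}{2}$ in the presence of this oscillation. For each fixed $\ell$ and each $n$, the disjoint $\ell$-block frequency vector of $x_0^{n-1}$ equals a convex combination $\alpha_n p_a^{(\ell)} + (1-\alpha_n) p_b^{(\ell)}$ modulo an $O(\ell \cdot k_n / n)$ correction from the bounded number $k_n$ of cut-points, where $p_a^{(\ell)}, p_b^{(\ell)}$ are the limiting disjoint-block distributions and $\alpha_n$ is the cumulative proportion of $a$-bits. Using concavity of Shannon entropy together with the standard mixture bound $H(\alpha\mu + (1-\alpha)\nu) \le \alpha H(\mu) + (1-\alpha)H(\nu) + 1$, one obtains
\begin{align*}
\min\{H(p_a^{(\ell)}), H(p_b^{(\ell)})\} - o_n(1) \;\le\; H_\ell^d(x_0^{n-1}) \;\le\; \max\{H(p_a^{(\ell)}), H(p_b^{(\ell)})\} + 1 + o_n(1).
\end{align*}
Since $H(p_a^{(\ell)})/\ell, H(p_b^{(\ell)})/\ell \to \tfrac{1}{2}$ as $\ell \to \infty$, both $\liminf_n H^d_\ell(x_0^{n-1})/\ell$ and $\limsup_n H^d_\ell(x_0^{n-1})/\ell$ converge to $\tfrac{1}{2}$, so (\ref{eqn:limitdimension}) and (\ref{eqn:limitstrongdimension}) yield $\dim_{FS}(x) = \Dim_{FS}(x) = \tfrac{1}{2}$. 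The delicate point is controlling the boundary/cut correction $o_n(1)$ uniformly enough in $\ell$ so that the double limit in $\ell$ and $n$ gives exactly $\tfrac{1}{2}$ rather than a value strictly larger; this is precisely where the rapid growth of the schedule $N_k, M_k$ and the fact that both $a$ and $b$ have block entropy rates converging to $\tfrac{1}{2}$ are essential.
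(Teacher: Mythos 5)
Your proposal is correct and takes essentially the same approach as the paper's proof: build $x$ by a controlled, stagewise alternation of segments drawn from two diluted sequences $a,b$ that both have finite-state (strong) dimension $\tfrac12$ but whose empirical block measures differ, so that the empirical measures $\nu_n$ have two distinct weak subsequential limits (forcing the Weyl averages to oscillate via Theorem~\ref{thm:cylinderconvergenceimpliesweylconvergence}), while choosing the stage lengths so rapidly growing that the blockwise entropies stay pinned near $\tfrac12$. The only differences from the paper are cosmetic: you fix the odd coordinates to $0$ versus $1$ rather than using the paper's $0\star0\star$ versus $0\star\star0$ dilution, and you resume $a,b$ from where they left off rather than re-using prefixes; both work for the same reasons. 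The one detail worth being explicit about (which the paper handles by insisting $2^{i+1}$ divides each cumulative cut point and by introducing the quantities $M_i^{\alpha,\cdot}$, $K_i$, $J_i$) is that cut points must be aligned to the disjoint-block boundaries and the stage length must exceed the ``settling time'' of a resumed segment's empirical $\ell$-block distribution; your super-exponential schedule suffices for this, but your $o_n(1)$ correction as written only charges the $O(\ell\, k_n/n)$ straddling blocks and not the finite-length deviation of each resumed segment from $p_a^{(\ell)}$ or $p_b^{(\ell)}$, which is the term that actually forces the rapid growth.
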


The analogue of Lemma \ref{lem:subadditivityofentropy} need not hold in the setting of disjoint block entropies. However, an analogue of \ref{lem:subadditivtycorollary1} can be obtained in the disjoint setting, which we require in the proof of Lemma \ref{lem:nonconvergentexample}.

\begin{lemma}
\label{lem:disjointblockentropyinequality}
For any $l$ and $k$, $H^d_{kl}(x_0^{n-1}) \leq H^d_{l}(x_0^{n-1}) + o(n)/n$ where the speed of convergence of the error term depends only on $k$ and $l$.
\end{lemma}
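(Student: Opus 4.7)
The plan is to combine two standard properties of Shannon entropy --- subadditivity under marginalization and concavity under mixing --- and then absorb the small boundary mismatch between the disjoint-$kl$-block and disjoint-$l$-block decompositions into a continuity estimate. I will set $M = \lfloor n/(kl) \rfloor$ and $N = \lfloor n/l \rfloor$, so that $q := P^d(x_0^{n-1},\cdot)$ on $\Sigma^{kl}$ is the empirical distribution over the $M$ disjoint $kl$-blocks while $p := P^d(x_0^{n-1},\cdot)$ on $\Sigma^l$ is the empirical distribution over the $N$ disjoint $l$-blocks. Because each of the $M$ disjoint $kl$-blocks contains exactly $k$ consecutive $l$-blocks, one has $kM \leq N < kM + k$.

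For each $j \in \{0,1,\dots,k-1\}$, let $p_j$ denote the marginal of $q$ on the $j$-th length-$l$ coordinate --- equivalently, the empirical distribution on $\Sigma^l$ of the $j$-th length-$l$ sub-block as we range over the $M$ disjoint $kl$-blocks. The first step would be to apply subadditivity of Shannon entropy (joint entropy is bounded by the sum of marginal entropies), giving $H(q) \leq \sum_{j=0}^{k-1} H(p_j)$; the second step would be to apply concavity of $H$ on the simplex over $\Sigma^l$, giving $\tfrac{1}{k}\sum_{j=0}^{k-1} H(p_j) \leq H(p^*)$, where $p^* := \tfrac{1}{k}\sum_{j=0}^{k-1} p_j$ is the empirical distribution of disjoint $l$-blocks restricted to the first $kMl$ symbols of $x$. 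Chaining these bounds yields
\begin{equation*}
\frac{H(q)}{kl} \;\leq\; \frac{1}{kl}\sum_{j=0}^{k-1} H(p_j) \;\leq\; \frac{H(p^*)}{l}.
\end{equation*}
In the special case $kl \mid n$ one has $p^* = p$ and the lemma follows immediately with no error term; the remaining work is to control $|H(p) - H(p^*)|$ in general.

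The distributions $p$ and $p^*$ agree on the first $kM$ disjoint $l$-blocks, and $p$ additionally averages in the $N - kM < k$ leftover blocks. Writing $Np = kM p^* + (N-kM)p'$ for the empirical distribution $p'$ of those leftover blocks, I would derive the total-variation bound
\begin{equation*}
\|p - p^*\|_1 \;\leq\; \frac{2(N-kM)}{N} \;\leq\; \frac{2k}{\lfloor n/l\rfloor} \;=\; O\!\left(\frac{kl}{n}\right),
\end{equation*}
with implicit constant depending only on $k$ and $l$. Applying the standard continuity estimate for Shannon entropy (Fannes' inequality) on the finite alphabet $\Sigma^l$ then yields $|H(p) - H(p^*)| = o(1)$ as $n \to \infty$ at a rate depending only on $k$ and $l$; combining with the display above,
\begin{equation*}
H^d_{kl}(x_0^{n-1}) = \frac{H(q)}{kl} \;\leq\; \frac{H(p^*)}{l} \;\leq\; \frac{H(p)}{l} + o(1) = H^d_l(x_0^{n-1}) + \frac{o(n)}{n}.
\end{equation*}
The only delicate step will be this last one: verifying that the at most $k-1$ stray $l$-blocks produced when $kl \nmid n$ perturb the normalized entropy by only $o(1)$ uniformly in $x$, with a rate depending only on $k$ and $l$ --- which is precisely what the Fannes estimate on the $2^l$-point alphabet $\Sigma^l$ delivers.
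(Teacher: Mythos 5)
Your proof takes the same route as the paper's: subadditivity of Shannon entropy across the $k$ length-$l$ coordinates of each $kl$-block, followed by concavity to average the marginals, and then a continuity-of-entropy argument to absorb the boundary mismatch when $kl \nmid n$. The only difference is cosmetic rather than structural --- the paper reduces to the case $kl \mid n$ and invokes ``continuity of entropy'' informally for the passage to general $n$, whereas you make that step explicit with a total-variation bound $\|p - p^*\|_1 = O(kl/n)$ followed by Fannes' inequality on the $2^l$-point alphabet, which cleanly justifies that the error rate depends only on $k$ and $l$. That is a legitimate and slightly more careful instantiation of the same idea, not a different proof.
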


\begin{proof}
	It is enough to show that $(kl)H^d_{kl}(x_0^{n-1}) \leq k(lH^d_l(x_0^{n-1})) +o(n)/n$. The required inequality follows by dividing both sides by $kl$. If $(kl)H^d_{kl}(x_0^{n-1}) \leq k(lH^d_l(x_0^{n-1}))$ is true for all $n$ such that $kl \vert n$ then by continuity of entropy we obtain $(kl)H^d_{kl}(x_0^{n-1}) \leq k(lH^d_l(x_0^{n-1})) + o(n)/n$. Due to the uniform continuity of entropy, the speed of convergence of the error term depends only on $k$ and $l$. Hence, without loss of generality we assume that $kl \vert n$.
	
	From the definition of disjoint block entropy,
	\begin{align*}
		(kl)H^d_{kl}(x_0^{n-1}) = -\sum_{w \in \Sigma^{kl}} P^d(x_0^{n-1},w)\log(P^d(x_0^{n-1},w)).
	\end{align*}
	For $0 \leq j \leq k-1$ and $w \in \Sigma^l$, define $P_j^d (x_0^{n-1},w)$ to be the fraction of $kl$-length disjoint blocks in $x_0^{n-1}$ such that within the block, $w$ appears as the $(j+1)$\textsuperscript{th} disjoint $l$-length block from the left. Formally,
	\begin{align*}
		P^d_j (x_0^{n-1},w) = \frac{\lvert \{0\leq i < n/kl :  x_{kli+jl}^{kli+jl+l-1} = w\}\rvert	}{n/kl} .
	\end{align*}
	Let us define corresponding entropies,
	\begin{align*}
	\widehat{H}_j(x_0^{n-1}) = 	-\frac{1}{l}\sum_{w \in \Sigma^l} P^d_j(x_0^{n-1},w)\log(P^d_j(x_0^{n-1},w))
	\end{align*}
	for $0 \leq j \leq k-1$. 
	
	Using the subadditivty of Shannon entropy, it follows that,
 	\begin{align}
 	\label{eqn:disjointlemmaeq1}
 	(kl)H_{kl}^d (x_0^{n-1}) \leq \sum_{j=0}^{k-1} l\widehat{H}_j(x_0^{n-1}).	
 	\end{align}
	Since $kl \vert n$, it can be seen from the definitions that for any $w \in \Sigma^l$,
	\begin{align*}
	P^d(x_0^{n-1},w) =\frac{1}{k} \sum_{j=0}^{k-1} P^d_j(x_0^{n-1},w).	
	\end{align*}
	Using the concavity of the function $x \log(1/x)$, it follows that,
	\begin{align}
	\label{eqn:disjointlemmaeq2}
	l \widehat{H}_l^d(x_0^{n-1}) \geq \frac{1}{k} \sum_{j=0}^{k-1} l\widehat{H}_j (x_0^{n-1}).
	\end{align}
	From \ref{eqn:disjointlemmaeq1} and \ref{eqn:disjointlemmaeq2} it follows that,
	\begin{align*}
	(kl)H^d_{kl}(x_0^{n-1}) \leq k(lH^d_l(x_0^{n-1})).	
	\end{align*}

\end{proof}

Now we prove Lemma \ref{lem:nonconvergentexample}.
\begin{proof}[Proof of Lemma \ref{lem:nonconvergentexample}]
	Let $y \in \Sigma^\infty$ be a fixed normal sequence. Define $a \in
\Sigma^\infty$ by $a_{2n}=0$, $a_{2n+1}=y_{n}$, for all $n \in \N$. Define $b \in \Sigma^\infty$ by $b_{4n}=b_{4n+3}=0$, and $b_{4n+1}=y_{2n}$, $b_{4n+2}=y_{2n+1}$, for all $n
\in \N$.  Equivalently,  $a$ is constructed by repeating the pattern $0\star 0\star$ infinitely many times and replacing the $\star$ symbols with successive digits from $y$. Similarly, $b$ is constructed by repeating the pattern $0\star \star 0$ infinitely many times and replacing the $\star$ symbols with successive digits of $y$. The
sliding block frequency of $01$ in $a$ is easily verified to be equal to $1/4$, whereas it is equal to $3/16$ in $b$.
	
	For any $l\geq 2$, consider the $2^l$ length disjoint blocks in the sequence $a$. It is easily verified that by the construction of $a$, $2^{2^l/2}$ different strings of length $2^l$ occurs in $a$ with equal probabilities as the number of blocks goes to infinity. Hence, for every positive $\epsilon$, positive integer $l$ and finite string $\alpha$ with $2^l \vert \lvert \alpha \rvert$, there exists an integer $M_l^{\alpha, a} (\epsilon)$ such that for all $n \geq M_l^{\alpha, a} (\epsilon)$,
	\begin{align*}
	H^d_{2^l}\left((\alpha a)_0^{n-1}\right) \geq \frac{1}{2}-\epsilon.
	\end{align*}
	Such a number exists for $l=1$ also, due to Lemma \ref{lem:disjointblockentropyinequality}. Due to similar reasons, analogous quantities exist for the sequence $b$ which we denote using $M_l^{\alpha, b} (\epsilon)$. Since the speed of convergence of the error term in Lemma \ref{lem:disjointblockentropyinequality} is independent of the string, using the same lemma,  for every $i$, there exists $J_i$ such that for any string $z$ and for all $n \geq J_i$,
	\begin{align*}
	H_{2^i}^d(z_0^{n-1}) \geq H_{2^{i+1}}^d (z_0^{n-1}) - \frac{1}{2^{i+1}}.	
	\end{align*}
	For every positive $\epsilon$ and a finite string $\alpha$ of even length, there exists an integer $L^{\alpha,a}(\epsilon)$ such that for all $n \geq L^{\alpha,a}(\epsilon)$,
	\begin{align*}
	P\left((\alpha a)_0^{n-1},01\right) \geq \frac{1}{4}-\epsilon.	
	\end{align*}
	Similarly, there exists an integer $L^{\alpha,b}(\epsilon)$ such that for all $n \geq L^{\alpha,b}(\epsilon)$,
	\begin{align*}
	P\left((\alpha b)_0^{n-1},01\right) \leq \frac{3}{16}+\epsilon.	
	\end{align*}
	We construct $x$ by specifying longer and longer prefixes of $x$ in a stage-wise manner. Initially, let the prefix constructed until stage $0$ be $\sigma=\lambda$. 
	
	In stage $i$, if $i$ is odd, we do the following. For a fixed
        $i$, let $K_i$ be a large enough integer such that for all $n
        \geq K_i$
	\begin{align}
	\label{eqn:kiinequality}
	\frac{\frac{\lvert \sigma \rvert+n}{2^i}}{\frac{\lvert \sigma \rvert+n}{2^i} + M_i^{\lambda,b}(\frac{1}{2^i})}\left( \frac{1}{2}-\frac{1}{2^{i+1}} \right) \geq 	\frac{1}{2}-\frac{1}{2^{i}}.
	\end{align}
	Let $N_i$ be any integer greater than $\max\{M_i^{\sigma,
          a}(2^{-(i+1)})-\lvert \sigma
        \rvert,J_i,L^{\sigma,a}(2^{-i}), 2^i K_i \}$ such that
        $2^{i+1} $ divides $ \lvert \sigma \rvert + N_i $. Let
        $\sigma_i$ be the $N_i$ length prefix of $a$. We attach
        $\sigma_i$ to the end of the string $\sigma$ constructed until
        stage $i-1$. Now, we set $\sigma$ equal to this longer string
        $\sigma \sigma_i$. If $i$ is even, we perform the same steps
        as above by interchanging the roles of $a$ and $b$. We set $x$
        to be the infinite sequence obtained by concatenating
        $\sigma_i$s, i.e, $x=\sigma_1 \sigma_2 \sigma_3 \sigma_4
        \dots$. Now, we show that $x$ satisfies the required
        properties.
	
	It can be easily seen that $x$ satisfies conditions \ref{item:nicondition} and \ref{item:njcondition} since each $N_i \geq L^{\sigma,a}(2^{-i})$ (or $L^{\sigma,b}(2^{-i})$ if $i$ is even). This forces the slide count probability of $01$ in $\sigma_1\sigma_2\sigma_3 \dots \sigma_i$ to be at least $1/4-2^{-i}$ in odd stages and at most $3/16 + 2^{-i}$ in even stages.
	
	Now, we show that $\dim_{FS}(x)=\Dim_{FS}(x)=1/2$.
 Towards this end, we first show that for any $i$,
\begin{align}
\label{eqn:h2ixequation}
H_{2^i}^d (x_0^{n-1}) \geq \frac{1}{2}-\frac{1}{2^i} 	
\end{align}
provided that $n \geq \lvert \sigma_1\sigma_2\sigma_3 \dots \sigma_i \rvert$. For any $\alpha \in \Sigma^*$ and $\beta \in
\Sigma^*$, we write $\alpha \sqsubseteq \beta$ if
$\alpha$ is a prefix of $\beta$. In order to show \ref{eqn:h2ixequation}, it is enough to show that for any $i$ and $\alpha$ such that $\sigma_1\sigma_2\sigma_3 \dots \sigma_i \sqsubseteq \alpha$ and $\alpha \sqsubseteq \sigma_1\sigma_2\sigma_3 \dots \sigma_i \sigma_{i+1}$,
\begin{align}
\label{eqn:h2ialphaequation}
H_{2^i}^d (\alpha) \geq \frac{1}{2}-\frac{1}{2^i}.
\end{align}
If \ref{eqn:h2ialphaequation} holds for all $i$, then \ref{eqn:h2ixequation} holds for all $i$. This is because if $k$ is the number such that $\lvert \sigma_1\sigma_2\sigma_3 \dots \sigma_{i+k} \rvert \leq n \leq \lvert \sigma_1\sigma_2\sigma_3 \dots \sigma_{i+k+1} \rvert$ then,
\begin{align*}
H_{2^{i+k}}^d (x_0^{n-1}) \geq \frac{1}{2}-\frac{1}{2^{i+k}}
\end{align*}
due to \ref{eqn:h2ialphaequation}. Now, since $\lvert \sigma_1\sigma_2\sigma_3 \dots \sigma_{i+k} \rvert \geq \lvert \sigma_{i+k-1}\rvert = N_{i+k-1} \geq J_{i+k-1}$,
\begin{align*}
H_{2^{i+k-1}}^d (x_0^{n-1}) &\geq H_{2^{i+k}}^d (x_0^{n-1}) - \frac{1}{2^{i+k}}\\
&\geq \frac{1}{2}-\frac{1}{2^{i+k-1}}.
\end{align*}
Continuing this process we get,
\begin{align*}
H_{2^{i}}^d (x_0^{n-1})	\geq \frac{1}{2}-\frac{1}{2^i}.
\end{align*}
This proves the claim in \ref{eqn:h2ixequation}. Now, we prove the claim in \ref{eqn:h2ialphaequation}. If $\alpha=\sigma_1 \sigma_2 \sigma_3 \dots \sigma_i$ then,
\begin{align}
\label{eqn:strongh2ialphaequation}
H_{2^i}^d (\alpha) \geq \frac{1}{2}-\frac{1}{2^{i+1}}.
\end{align}
This is because $\alpha$ is $\sigma_1 \sigma_2 \sigma_3 \dots
\sigma_{i-1}$ concatenated with the first $N_i$ bits of $a$ if $i$ is
odd. And, $\alpha$ is $\sigma_1 \sigma_2 \sigma_3 \dots \sigma_{i-1}$
concatenated with the first $N_i$ bits of $b$ if $i$ is even. If $i$
is odd, since
\begin{align*}
N_i \geq M_i^{\sigma_1 \sigma_2 \sigma_3 \dots \sigma_{i-1}, a}\left (\frac{1}{2^{i+1}}\right)-\lvert \sigma_1 \sigma_2 \sigma_3 \dots \sigma_{i-1} \rvert
\end{align*}
equation \ref{eqn:strongh2ialphaequation} follows due to the definition of $M_i^{\sigma_1 \sigma_2 \sigma_3 \dots \sigma_{i-1}, a}(2^{-(i+1)})$. A similar argument works if $i$ is even with $a$ replaced with $b$. For $\alpha \neq \sigma_1 \sigma_2 \sigma_3 \dots \sigma_{i}$, there are,
\begin{align*}
\frac{\lvert \sigma_1 \sigma_2 \sigma_3 \dots \sigma_{i} \rvert}{2^i} = \frac{\lvert \sigma_1 \sigma_2 \sigma_3 \dots \sigma_{i-1} \rvert}{2^i} + \frac{\lvert \sigma_i \rvert}{2^i}	
\end{align*}
disjoint blocks of length $2^i$ within the first $\lvert \sigma_1 \sigma_2 \sigma_3 \dots \sigma_{i} \rvert$ bits of $\alpha$. Let $n_\alpha =\lvert \alpha \rvert-\lvert \sigma_1 \sigma_2 \sigma_3 \dots \sigma_{i} \rvert$ be the number of remaining digits of $\alpha$. Since $2^{i+1}$ divides $\lvert \sigma_1 \sigma_2 \sigma_3 \dots \sigma_{i} \rvert$, $2^i$ also divides $\lvert \sigma_1 \sigma_2 \sigma_3 \dots \sigma_{i} \rvert$.  Hence, there are $\lfloor n_\alpha /2^i \rfloor$ disjoint blocks of length $2^i$ in the $n_\alpha$ length suffix of $\alpha$ which is by construction, a prefix of $a$ or $b$ (depending on whether $i$ is odd or even). Then, due to the concavity of Shannon entropy, 
\begin{align}
\label{eqn:concavityblockcounting}
H_{2^i}^d (\alpha) \geq \frac{\frac{\lvert \sigma_1 \sigma_2 \sigma_3 \dots \sigma_{i-1} \rvert}{2^i} + \frac{\lvert \sigma_i \rvert}{2^i}}{\frac{\lvert \sigma_1 \sigma_2 \sigma_3 \dots \sigma_{i-1} \rvert}{2^i} + \frac{\lvert \sigma_i \rvert}{2^i}+ \lfloor \frac{n_\alpha} {2^i} \rfloor} H_{2^i}^d (\sigma_1 \sigma_2 \dots \sigma_i) + \frac{\lfloor \frac{n_\alpha} {2^i} \rfloor}{\frac{\lvert \sigma_1 \sigma_2 \sigma_3 \dots \sigma_{i-1} \rvert}{2^i} + \frac{\lvert \sigma_i \rvert}{2^i}+ \lfloor \frac{n_\alpha} {2^i} \rfloor} H_{2^i}^d (b_0^ {n_\alpha-1})	
\end{align}
Above we assumed that $i$ is odd. In the even case, $H_{2^i}^d (b_0^{n_\alpha-1})$ must be replaced by $H_{2^i}^d (a_0^{n_\alpha-1})$. Now if $n_\alpha \leq M_i^{\lambda,b}(\frac{1}{2^i})$, considering only the first term on the right we have,
\begin{align*}
H_{2^i}^d (\alpha) &\geq 	\frac{\frac{\lvert \sigma_1 \sigma_2 \sigma_3 \dots \sigma_{i-1} \rvert}{2^i} + \lfloor \frac{N_i}{2^i} \rfloor}{\frac{\lvert \sigma_1 \sigma_2 \sigma_3 \dots \sigma_{i-1} \rvert}{2^i} + \lfloor \frac{N_i}{2^i} \rfloor+ M_i^{\lambda,b}(\frac{1}{2^i})}\left( \frac{1}{2}-\frac{1}{2^{i+1}} \right) \\
&\geq \frac{1}{2}-\frac{1}{2^i}.
\end{align*}
The last inequality follows from \ref{eqn:kiinequality} since $N_i \geq 2^i K_i$. If $n_\alpha > M_i^{\lambda,b}(2^{-i})$ then,
\begin{align*}
H_{2^i}^d (b_0^{n_\alpha-1}) &\geq \frac{1}{2}-\frac{1}{2^i}
\end{align*}
from the definition of $M_i^{\lambda,b}(2^{-i})$. Now, using the above inequality and the inequality in \ref{eqn:strongh2ialphaequation} in \ref{eqn:concavityblockcounting} we get that \ref{eqn:h2ialphaequation} is true in the case when $n_\alpha > M_i^{\lambda,b}(2^{-i})$. The proof of the claim in \ref{eqn:h2ialphaequation} is thus complete which in turn implies that \ref{eqn:h2ixequation} is true. Having established \ref{eqn:h2ixequation}, it follows that $\dim_{FS}(x) \geq 1/2$.  

Now, we are left to show that $\Dim_{FS}(x) \leq 1/2$. By choosing $N_i$ to be such that $2^{i+1}$ divides $\lvert \sigma \rvert + N_i$, it is guaranteed that $2^{i+1}$ divides $\lvert \sigma_1 \sigma_2 \sigma_3 \dots \sigma_{i} \rvert$. Hence, for any fixed $i$, it follows that $2^i$ divides $\lvert \sigma_1 \sigma_2 \sigma_3 \dots \sigma_{k} \rvert$ for any $k \geq i-1$. Hence, for large enough $n$, the length $2^i$ disjoint blocks encountered in calculating the disjoint count probability $P_{2^i}^d (x_0^{n-1})$ shall be predominantly of the following two types. Either these blocks match the pattern $(0 \star 0 \star)^{2^{i-2}}$ or these blocks match the pattern $(0 \star \star 0)^{2^{i-2}}$. Occurrences of any block that does not match this pattern can only happen in the prefix $\sigma_1 \sigma_2 \sigma_3 \dots \sigma_{i-1}$. But these occurrences becomes negligible as $n$ becomes large. Hence,
\begin{align*}
H_{2^i}^d (x_0^{n-1}) &\leq \frac{\log(2^{2^i/2}+2^{2^i/2})}{2^i} + \frac{o(n)}{n} \\
&= \frac{2^i/2+1}{2^i} + \frac{o(n)}{n}. 	
\end{align*}
This implies that $$\limsup_{n \to \infty} H_{2^i}^d (x_0^{n-1}) \leq
(2^i/2+1)/2^i.$$ Using \ref{eqn:limitstrongdimension}, we get that that $\Dim_{FS}(x) = \lim_{i \to
  \infty}\limsup_{n \to \infty} H_{2^i}^d (x_0^{n-1})$.
%% SN - this refers to a commented discussion
%% due to \ref{eqn:limitstrongdimension}.
%% SN - end comment
Hence, it follows that,
\begin{align*}
\Dim_{FS}(x) &\leq \lim_{i \to \infty}	 \frac{2^i/2+1}{2^i}
= \frac{1}{2}.
\end{align*}
And hence, $x$ satisfies all the required properties.
\end{proof}

Generalizing the construction of diluted sequences in
\cite{dai2004finite}, we define an $x$ with $v(x)\in \T\setminus\mathbb{D}$
and $\dim_{FS}(x)=\Dim_{FS}(x)=1/2$, but where for some $k \in \Z$,
the sequences of Weyl sum averages diverge. The idea of dilution is as
follows. Let $y \in \Sigma^\infty$ be normal. Define $a \in
\Sigma^\infty$ by $a_{2n}=0$, $a_{2n+1}=y_{n}$, $n \in \N$. Then
$\dim_{FS}(a)=\Dim_{FS}(a)=1/2$. Note that $b \in \Sigma^\infty$
defined by $b_{4n}=b_{4n+3}=0$, and $b_{4n+1}=y_{2n}$,
$b_{4n+2}=y_{2n+1}$, $n \in \N$ is also a regular sequence with
$\dim_{FS}(b)=\Dim_{FS}(b)=1/2$. But, the sliding block frequency of
$01$ in $a$ is $1/4$, whereas it is $3/16$ in $b$. We leverage the
existence of such distinct sequences with equal dimension. The
disjoint blocks of $x$ alternate between the above two patterns in a
controlled manner to satisfy the following conditions.
\begin{enumerate}
  \itemsep=0em
\item \label{item:dimcondition} $\dim_{FS}(x)=\Dim_{FS}(x)=1/2$
\item \label{item:nicondition} There is an increasing sequence of
  indices $\langle n_i \rangle_{i=1}^{\infty}$ such that 
  $
  \lim_{i \to \infty} P(x_0^{n_i-1},01)=1/4.	
  $
\item \label{item:njcondition} There is an increasing sequence of
  indices $\langle n_i \rangle_{i=1}^{\infty}$ such that 
	$
	\lim_{j \to \infty} P(x_0^{n_j-1},01)=3/16.	
	$
\end{enumerate}

%% Let $\langle \nu_n \rangle_{n=1}^{\infty}$ be the sequence of
%% averages of Dirac average measures constructed out of the sequence
%% $\langle T^n x\rangle_{n=0}^{\infty}$. Let us assume that the Weyl
%% averages $\langle n^{-1} \sum_{j=1}^{n} e^{2 \pi i k (v(T^j x)) }
%% \rangle_{n=1}^{\infty}$ are convergent for every $k \in \Z$. Let
%% $\langle \nu'_n \rangle_{n=1}^{\infty}$ be the sequence of averages
%% of Dirac measures on $\T$ constructed out of the sequence $\langle
%% 2^n v(x) \Mod 1 \rangle_{n=0}^{\infty}$. Using the same steps in
%% the proof of Theorem
%% \ref{thm:weylconvergenceimpliescylinderconvergence}, we get that
%% $\nu_n' \wto \mu'$ where $\mu'$ is the unique measure on $\T$
%% having Fourier coefficients equal to the limits of the Weyl
%% averages. The $x$ that we construct will be such that $v(x) \in \T
%% \setminus \mathbb{D}$ and thus we can use Theorem
%% \ref{thm:weylconvergenceimpliescylinderconvergence} to infer that
%% $\nu(C_{01})$ is convergent. Using Lemma
%% \ref{lem:deltameasuresandcoutning}, we infer that $\lim_{n \to
%% \infty}P(x_0^{n-1},01)$ exists. But, we know from conditions
%% \ref{item:nicondition} and \ref{item:njcondition} that
%% $P(x_0^{n-1},01)$ is not convergent. Hence, we arrive at a
%% contradiction. Therefore, our assumption that the Weyl averages
%% $\langle n^{-1} \sum_{j=1}^{n} e^{2 \pi i k (v(T^j x)) }
%% \rangle_{n=1}^{\infty}$ are convergent for every $k \in \Z$ must be
%% wrong.

Let $\langle \nu_n \rangle_{n=1}^{\infty}$ be the sequence of averages
of Dirac measures constructed out of $\langle T^n
x\rangle_{n=0}^{\infty}$, and $\langle \nu'_n \rangle_{n=1}^{\infty}$,
those from $\langle 2^n v(x) \Mod 1 \rangle_{n=0}^{\infty}$. Assume
that $\langle n^{-1} \sum_{j=0}^{n-1} e^{2 \pi i k (v(T^j x)) }
\rangle_{n=1}^{\infty}$ converge for every $k \in \Z$. Using the same
steps in the proof of Theorem
\ref{thm:weylconvergenceimpliescylinderconvergence}, we get that
$\nu_n' \wto \mu'$ where $\mu'$ is the unique measure on $\T$ having
Fourier coefficients equal to the limits of the Weyl averages. Since
$v(x) \in \T \setminus \mathbb{D}$, Theorem
\ref{thm:weylconvergenceimpliescylinderconvergence} implies that
$\nu(C_{01})$ is convergent. Using Lemma
\ref{lem:deltameasuresandcoutning}, we infer that $\lim_{n \to
  \infty}P(x_0^{n-1},01)$ exists. But, we know from conditions
\ref{item:nicondition} and \ref{item:njcondition} that
$P(x_0^{n-1},01)$ is not convergent. Hence, we arrive at a
contradiction. Therefore, for some $k \in \Z$, the Weyl averages
$\langle n^{-1} \sum_{j=0}^{n-1} e^{2 \pi i k (v(T^j x)) }
\rangle_{n=1}^{\infty}$ diverge. The above construction is easily adapted to show that for any
rational number $p/q \in (0,1)$, there exists $x \in \Sigma^\infty$ with
$\dim_{FS}(x)=\Dim_{FS}(x)=p/q$ such that some Weyl average of $x$
diverges.
\begin{theorem}
\label{thm:nonconvergentexamplerational}
For any rational number $p/q \in (0,1)$, there exists $x \in \Sigma^\infty$ with $\dim_{FS}(x)=\Dim_{FS}(x)=p/q$ such that for some $k \in \Z$, the sequence $\langle \sum_{j=0}^{n-1}e^{2 \pi i k (v(T^j x))}/n \rangle_{n=1}^{\infty}$ is not convergent.
\end{theorem}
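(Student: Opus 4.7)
The plan is to adapt the construction used in Lemma \ref{lem:nonconvergentexample} from the specific value $1/2$ to an arbitrary rational $p/q \in (0,1)$, which we may assume is in lowest terms. The argument decomposes into (i) producing two ``template'' sequences of dimension exactly $p/q$ whose sliding block frequencies for some fixed string $w$ converge to different limits, (ii) concatenating their prefixes in stages so that these two limits are alternately approached by $P(x_0^{n-1},w)$ while $\dim_{FS}(x)=\Dim_{FS}(x)=p/q$ is maintained, and (iii) invoking Theorem \ref{thm:weylconvergenceimpliescylinderconvergence} to deduce that the Weyl averages of $x$ must diverge for some $k \in \Z$.

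For step (i), fix a normal sequence $y \in \Sigma^\infty$ and pick two patterns $P_a, P_b \in \{0,\star\}^{2q}$ each with exactly $2p$ star positions but with combinatorially different arrangements. A convenient choice when $p \le q/2$ (the complementary case is symmetric, obtained by exchanging the roles of $0$s and $\star$s) is
\begin{align*}
P_a \;=\; (\star 0)^{2p}\, 0^{\,2q - 4p}, \qquad P_b \;=\; \star^{2p}\, 0^{\,2q - 2p}.
\end{align*}
Let $a, b \in \Sigma^\infty$ be obtained by substituting successive bits of $y$ into the stars of $(P_a)^\infty$ and $(P_b)^\infty$. Since $y$ is normal, each star position is asymptotically an independent fair coin flip, so both $a$ and $b$ are regular with $\dim_{FS} = \Dim_{FS} = p/q$. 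A direct counting argument yields explicit distinct limiting sliding block frequencies $\alpha := \lim_n P(a_0^{n-1}, 01)$ and $\beta := \lim_n P(b_0^{n-1}, 01)$: in $P_a$ the block $01$ arises only from transitions $0\star$ with the star equal to $1$, whereas in $P_b$ the adjacent star pairs contribute additional $\star\star$ transitions of the form $0,1$, producing a different numerical value.

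For step (ii), we mimic the stagewise construction from the proof of Lemma \ref{lem:nonconvergentexample}, replacing the base block length $2^i$ by $2q \cdot 2^i$ so that every block length used divides the pattern period. In odd (respectively even) stages we append a prefix of $a$ (respectively $b$) long enough to (a) drive the current sliding block frequency of $01$ within $2^{-i}$ of $\alpha$ (respectively $\beta$), (b) force the disjoint block entropy $H^d_{2q\cdot 2^i}$ of the current prefix to exceed $p/q - 2^{-i}$, and (c) make the total length so far a multiple of $2q \cdot 2^{i+1}$. The analogue of Lemma \ref{lem:disjointblockentropyinequality}, whose subadditivity/concavity proof is insensitive to the base block length, then propagates the entropy lower bound to all aligned block lengths and yields $\dim_{FS}(x) \ge p/q$ via \eqref{eqn:limitdimension}. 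For the matching upper bound $\Dim_{FS}(x) \le p/q$, once $n$ is large enough that the boundary effects between stages are negligible, every length-$2q \cdot 2^i$ disjoint block of $x$ is either a block of $(P_a)^\infty$ or of $(P_b)^\infty$; there are at most $2 \cdot 2^{\,2p\cdot 2^i}$ such blocks, giving
\begin{align*}
H^d_{2q\cdot 2^i}(x_0^{n-1}) \;\le\; \frac{2p\cdot 2^i+1}{2q\cdot 2^i} + \frac{o(n)}{n} \;\xrightarrow[i\to\infty]{}\; \frac{p}{q}.
\end{align*}

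Finally, for step (iii), the oscillation of $P(x_0^{n-1},01)$ between $\alpha$ and $\beta$ prevents $\lim_n \nu_n(C_{01})$ from existing (using Lemma \ref{lem:deltameasuresandcoutning}). Since the construction ensures $v(x)\notin\mathbb{D}$, Theorem \ref{thm:weylconvergenceimpliescylinderconvergence} then forces some Weyl average $n^{-1}\sum_{j=0}^{n-1} e^{2\pi i k (v(T^j x))}$ to diverge, exactly as in the proof of Lemma \ref{lem:nonconvergentexample}. The main obstacle is step (i): verifying uniformly across all $p/q \in (0,1)$ that two patterns of the same star density can be chosen whose dilutions yield provably distinct limiting $01$-frequencies, and tracking the resulting numerical constants through the entropy inequalities. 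The stagewise construction and the Weyl-divergence argument in steps (ii) and (iii) follow the same template already executed for $p/q = 1/2$.
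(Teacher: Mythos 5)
Your overall strategy is the same as the paper's: replace the two period-$4$ templates in the proof of Lemma \ref{lem:nonconvergentexample} by period-$2q$ templates with $2p$ free star positions, so that diluting a normal sequence through either template yields dimension $p/q$ but a different limiting sliding-count frequency of $01$, then run the same stagewise alternation and the same appeal to Lemma \ref{lem:deltameasuresandcoutning} and Theorem \ref{thm:weylconvergenceimpliescylinderconvergence}. The built-in doubling to period $2q$ is a mild tidying-up of the paper's ``assume $p$ and $q$ even'' device, and for $p \le q/2$ it does work cleanly: $P_a=(\star 0)^{2p}0^{2q-4p}$ gives $01$-frequency $p/2q$, $P_b=\star^{2p}0^{2q-2p}$ gives $(2p+1)/8q$, and these coincide only if $4p=2p+1$, which is impossible for integer $p$. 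The entropy bookkeeping and the final divergence argument are exactly the paper's.

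The gap is the case $p>q/2$, which you dispose of with ``the complementary case is symmetric, obtained by exchanging the roles of $0$s and $\star$s.'' That does not work: $P_a=(\star 0)^{2p}0^{2q-4p}$ literally does not exist when $2p>q$, and swapping $0$ and $\star$ in $P_a$ produces a pattern with $2q-2p$ stars rather than $2p$, so the diluted sequence would have dimension $(q-p)/q$, not $p/q$. The paper handles $2p>q$ with a genuinely different first pattern, $(0\star)^{q-p}\star^{2p-q}$ (still $p$ stars out of $q$ positions), paired with $0^{q-p}\star^p$; the resulting $01$-frequencies are $1/4$ and $(p+1)/4q$, which differ unless $q-p=1$, excluded by the evenness normalization. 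In your doubled setup the analogue would be $P_a'=(0\star)^{2(q-p)}\star^{4p-2q}$ paired with $P_b'=\star^{2p}0^{2(q-p)}$, giving $01$-frequencies $1/4$ and $(2p+1)/8q$, which are never equal. You need to write out such patterns explicitly rather than asserting a symmetry that changes the star density.
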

\begin{proof}
	 If $2p < q$ (equivalently $p/q < 1/2$) then the patterns $(0
         \star)^p 0^{q-2p}$ and $0^{q-p} \star^{p}$ can be alternated
         in the construction of $x$ in Theorem
         \ref{lem:nonconvergentexample} and the sliding count
         probability of the string $01$ can be made to oscillate
         between $p/2q$ and $(p+1)/4q$. These probabilities are equal
         if and only if $p=1$ which is avoided by assuming that $p$
         and $q$ are even (if they are not both even, then we perform
         the construction with the required dimension being
         $2p/2q$ and setting $p=2p$ and $q=2q$ in the design of the patterns). And if $2p > q$ (equivalently $p/q < 1/2$) then the
         patterns must be carefully selected. Let us assume without
         loss of generality that both $p$ and $q$ are even which
         implies that $(q-p) \geq 2$. Then, the patterns $(0
         \star)^{q-p} \star^{2p-q}$ and $0^{q-p} \star^{p}$ can be
         alternated in the construction of $x$ in Theorem
         \ref{lem:nonconvergentexample} and the sliding count
         probability of the string $01$ can be made to oscillate
         between $1/4$ and $(p+1)/4q$. These probabilities are equal
         if and only if $q-p=1$ which is avoided by assuming that $p$
         and $q$ are even as indicated above (if they are not both
         even, then we perform the construction with the required
         dimension being $2p/2q$ and setting $p=2p$ and $q=2q$ in the design of the patterns).
\end{proof}

\section{Weyl's criterion for finite-state dimension}
We saw in Lemma \ref{lem:nonconvergentexample} that Weyl averages may
diverge for $x$ having finite-state dimension less than $1$, even if
$x$ is regular. Hence, it is necessary for us to deal with divergent
Weyl averages and obtain their relationship with the finite-state
dimension of $x$. We know from Theorem \ref{thm:weylcriterioncantor}
that Weyl's criterion for normality (Theorem
\ref{thm:weyls_criterion_on_cantor_space}) is equivalently expressed
in terms of weak convergence of a sequence of measures over
$\Sigma^\infty$. In section \ref{subsec:weylcriterionforfsd}, we
generalize the weak convergence formulation to handle arbitrary finite
state dimension. Applying this, in section
\ref{subsec:weylaveragesconvergentcase}, we generalize the exponential
sum formulation.

%We demonstrate the utility of the exponential sum formulation by
%giving a new, Fourier-analytic, proof of Schnorr and Stimm's
%Theorem \cite{SchnorrStimm72}.

\subsection{Weak convergence and finite-state dimension}
\label{subsec:weylcriterionforfsd}

We know from Theorem \ref{thm:weylcriterioncantor} that $x \in
\Sigma^\infty$ is normal (equivalently, $\dim_{FS}(x)=1$) if and only
if $\nu_n \to \mu$, where $\mu$ is the uniform distribution over
$\Sigma^\infty$. In this subsection we give a generalization of this
formulation of Weyl's criterion which applies for $x$ having any
finite-state dimension. Lemma \ref{lem:nonconvergentexample} and
Theorem \ref{thm:cylinderconvergenceimpliesweylconvergence} together
imply that $\nu_n$'s need not be weakly convergent even if $x$ is
guaranteed to be regular. However, studying the subsequence limits of
$\langle \nu_n \rangle_{n=1}^{\infty}$ gives us the following
generalization of Weyl's criterion for arbitrary $x \in
\Sigma^\infty$.

\begin{theorem}
\label{thm:weylcriterionforfsd}
Let $x \in \Sigma^\infty$. Let $\langle \nu_n \rangle_{n=1}^{\infty}$
be the sequence of averages of Dirac measures on $\Sigma^\infty$
constructed out of the sequence $\langle T^n x
\rangle_{n=0}^{\infty}$. Let $\mathcal{W}_x$ be the collection of all
subsequence weak limits of $\langle \nu_n
\rangle_{n=1}^{\infty}$. i.e, 
$
\mathcal{W}_x = \{\mu \mid \exists \langle n_m \rangle_{m=0}^{\infty} \text{ such that } \nu_{n_m} \Rightarrow \mu\}
$. Then, $\dim_{FS}(x) = \inf_{\mu \in \mathcal{W}_x} H^{-}(\mu)$ and $\Dim_{FS}(x) = \sup_{\mu \in \mathcal{W}_x} H^{+}(\mu)$.	
\end{theorem}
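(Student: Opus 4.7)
The plan is to reduce the theorem to a statement about entropy rates of shift-invariant measures. First I would observe that every $\mu \in \mathcal{W}_x$ is $T$-invariant, since $T_*\nu_n - \nu_n = n^{-1}(\delta_{T^n x} - \delta_x)$ vanishes weakly, so any weak limit along a subsequence of $\nu_n$ is preserved by $T$. For shift-invariant $\mu$ the joint Shannon entropies $\mathbf{H}_n(\mu)$ are subadditive, so Fekete's lemma forces $h(\mu) := \lim_l \mathbf{H}_l(\mu)/l = \inf_l \mathbf{H}_l(\mu)/l$ to exist, and in particular $H^{-}(\mu) = H^{+}(\mu) = h(\mu)$. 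Writing $h_l(\mu) = \mathbf{H}_l(\mu)/l$, the theorem then reduces to
\begin{align*}
\dim_{FS}(x) = \inf_{\mu \in \mathcal{W}_x} h(\mu) \quad\text{and}\quad \Dim_{FS}(x) = \sup_{\mu \in \mathcal{W}_x} h(\mu).
\end{align*}

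The computational backbone is the identity $h_l(\nu_n) = H_l(x_0^{n+l-2})$, immediate from Lemma \ref{lem:deltameasuresandcoutning}. Since each cylinder $C_w$ is clopen in $\Sigma^\infty$, the functional $\nu \mapsto \nu(C_w)$ and hence each $h_l$ is continuous under weak convergence, so $\nu_{n_m} \Rightarrow \mu$ yields $h_l(\mu) = \lim_m H_l(x_0^{n_m+l-2})$. Writing $a_l = \liminf_n H_l(x_0^{n-1})$ and $b_l = \limsup_n H_l(x_0^{n-1})$, this sandwiches $a_l \leq h_l(\mu) \leq b_l$ for every $\mu \in \mathcal{W}_x$ and every $l$, and passing $l \to \infty$ through (\ref{eqn:limitdimension}) and (\ref{eqn:limitstrongdimension}) gives the two easy inequalities $\inf_\mu h(\mu) \geq \dim_{FS}(x)$ and $\sup_\mu h(\mu) \leq \Dim_{FS}(x)$.

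For the matching direction for $\dim_{FS}$, for each $l$ I would select a subsequence $n^{(l)}_m \to \infty$ with $H_l(x_0^{n^{(l)}_m + l - 2}) \to a_l$, apply Prokhorov's theorem to refine so that $\nu_{n^{(l)}_m} \Rightarrow \mu_l \in \mathcal{W}_x$, and read off $h_l(\mu_l) = a_l$ by continuity. The shift-invariance inequality $h_{kl} \leq h_l$ then gives $h(\mu_l) \leq h_l(\mu_l) = a_l$, and letting $l \to \infty$ forces $\inf_\mu h(\mu) \leq \dim_{FS}(x)$.

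The main obstacle is the matching direction for $\Dim_{FS}$, because the analogous measure $\mu_l$ with $h_l(\mu_l) = b_l$ only bounds $h(\mu_l) \leq b_l$ from above, never from below. I plan to overcome this with a factorial diagonal trick: construct $\mu_l \in \mathcal{W}_x$ with $h_l(\mu_l) = b_l$ as before, and extract a weak limit $\mu$ of $\{\mu_{k!}\}_{k \geq 1}$ via Prokhorov. For any fixed $l'$ and all $k \geq l'$ we have $l' \mid k!$, so subadditivity gives $h_{l'}(\mu_{k!}) \geq h_{k!}(\mu_{k!}) = b_{k!}$, and weak continuity of $h_{l'}$ passes this to the limit, yielding $h_{l'}(\mu) \geq \lim_k b_{k!} = \Dim_{FS}(x)$. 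Taking $\inf_{l'}$ gives $h(\mu) \geq \Dim_{FS}(x)$. Finally a routine diagonal construction approximating each $\mu_{k!}$ in the Prokhorov metric by a suitable $\nu_n$ shows $\mathcal{W}_x$ is weakly closed, so $\mu \in \mathcal{W}_x$ and the proof is complete.
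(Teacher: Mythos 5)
Your proof is correct and takes a genuinely different, more streamlined route than the paper. The paper treats $H^-$ and $H^+$ separately and never invokes shift-invariance of the limit measures; instead it relies on a finite-$n$ subadditivity inequality for sliding-block entropies (Lemma \ref{lem:subadditivityofentropy} / Corollary \ref{lem:subadditivtycorollary1}, with $o(n)/n$ error terms that must be tracked carefully), and its analogue of Lemma \ref{lem:liminfreplacedwithinf} is proved as a separate technical step. You instead observe at the outset that every $\mu \in \mathcal{W}_x$ is $T$-invariant (from $T_*\nu_n - \nu_n = n^{-1}(\delta_{T^n x}-\delta_x) \to 0$), which immediately collapses $H^-(\mu) = H^+(\mu) = h(\mu)$ by Fekete and reduces the whole theorem to bounding the entropy rate $h$. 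This buys a cleaner two-sided sandwich $a_l \le h_l(\mu) \le b_l$ and makes the easy inequalities transparent. The only place both proofs face real resistance is the $\Dim_{FS} \le \sup_\mu H^+(\mu)$ direction: the paper runs a diagonal over dyadic block lengths $2^k l'$, while you run a factorial diagonal ($\mu_{k!}$, using $l' \mid k!$) followed by closedness of $\mathcal{W}_x$ (a general fact about sets of subsequential limits in a metric space, valid here since the space of Borel probabilities on the compact $\Sigma^\infty$ is a compact metrizable space). Both are correct; yours is arguably tighter because the subadditivity inequality $h_{l'}(\mu_{k!}) \ge h_{k!}(\mu_{k!})$ holds exactly for the invariant limit measure, with no approximation error to manage. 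One small point: you write $\lim_k b_{k!} = \Dim_{FS}(x)$, which invokes the Kozachinskiy--Shen limit form (\ref{eqn:limitstrongdimension}); you could in fact get by with $\liminf_k b_{k!} \ge \inf_l b_l = \Dim_{FS}(x)$, which requires only the definitional infimum. Overall, leveraging $T$-invariance and the stationary entropy rate is a genuine simplification over the paper's argument.
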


We require the following technical lemmas for proving Theorem
\ref{thm:weylcriterionforfsd}. 
\begin{lemma}
\label{lem:subadditivityofentropy}	
For any $l$ and $m$,
$
(l+m)H_{l+m}(x_0^{n-1}) \leq lH_l (x_0^{n-1}) + mH_m (x_0^{n-1}) + o(n)/n
$ where the speed of convergence of the error term only depends on $l$
and $m$. 
\end{lemma}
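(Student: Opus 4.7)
The plan is to apply the standard subadditivity of Shannon entropy, $H(U,V) \leq H(U) + H(V)$, to the joint empirical distribution on pairs consisting of a length-$l$ block followed by a length-$m$ block, and then absorb boundary discrepancies using the uniform continuity of Shannon entropy on a finite alphabet. This is the natural analogue of the argument used in Lemma \ref{lem:disjointblockentropyinequality}, but sliding block counts require extra care because the marginals of the joint count are only approximately equal to the individual sliding counts.

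Concretely, for $u \in \Sigma^l$ and $v \in \Sigma^m$, define the joint empirical distribution $Q(u,v) = P(x_0^{n-1}, uv)$ on $\Sigma^l \times \Sigma^m$. By the definition of $H_{l+m}$, the Shannon entropy of $Q$ equals $(l+m)\, H_{l+m}(x_0^{n-1})$. Subadditivity then gives
\begin{align*}
(l+m)\, H_{l+m}(x_0^{n-1}) \;=\; H(Q) \;\leq\; H(Q_1) + H(Q_2),
\end{align*}
where $Q_1(u) = \sum_{v \in \Sigma^m} Q(u,v)$ and $Q_2(v) = \sum_{u \in \Sigma^l} Q(u,v)$ are the marginals. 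The next step is to compare $Q_1$ with the genuine length-$l$ sliding distribution $P_l(u) := P(x_0^{n-1}, u)$. The two differ only in that the starting-index range for $Q_1$ is $[0, n-l-m]$ whereas that for $P_l$ is $[0, n-l]$, and the denominators $n-l-m+1$ and $n-l+1$ differ multiplicatively by $1 + O(m/n)$. A routine count shows $\|Q_1 - P_l\|_{\mathrm{TV}} \leq C_{l,m}/n$, where $C_{l,m}$ depends only on $l$ and $m$. An entirely symmetric argument yields $\|Q_2 - P_m\|_{\mathrm{TV}} \leq C'_{l,m}/n$.

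Finally, I invoke the standard continuity bound: for probability distributions $\mu, \nu$ on a set of size $k$ with $\|\mu - \nu\|_{\mathrm{TV}} \leq \varepsilon \leq 1/2$, we have $|H(\mu) - H(\nu)| \leq \varepsilon \log(k/\varepsilon)$. Applied with $k = 2^l$ and $\varepsilon = C_{l,m}/n$ this gives $|H(Q_1) - l\, H_l(x_0^{n-1})| = o(n)/n$, and similarly $|H(Q_2) - m\, H_m(x_0^{n-1})| = o(n)/n$, in both cases with a rate governed solely by $l$ and $m$. Substituting into the subadditivity inequality above yields the desired conclusion.

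The main obstacle is the bookkeeping required to check that the total variation bound between $Q_1$ and $P_l$ (and between $Q_2$ and $P_m$) truly depends only on $l$ and $m$, uniformly in $x$; once that is nailed down, the continuity-of-entropy estimate immediately delivers the required uniform $o(n)/n$ error term.
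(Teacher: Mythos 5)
Your proof is correct and reaches the same inequality by the same underlying idea — compare the sliding $(l+m)$-block empirical entropy to the sum of the $l$- and $m$-block empirical entropies, and absorb the boundary discrepancies using continuity of Shannon entropy — but your packaging is genuinely more modular than the paper's. The paper essentially re-derives subadditivity inline: it expands $(l+m)H_{l+m}$ via a cross-entropy/chain-rule decomposition $-\sum_u Q_1(u)\log P_l(u) - \sum_{u,v}Q(u,v)\log\bigl(Q(u,v)/P_l(u)\bigr)$, bounds the second term by $m H_m$ through an application of Jensen's inequality to $\log$, and then controls the mismatch between $Q_1$ and $P_l$ (and between $Q_2$ and $P_m$) by direct manipulation of ratios such as $P(x_0^{n-m-1},u)/P(x_0^{n-1},u)$. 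You instead treat $Q$ as a joint distribution on $\Sigma^l \times \Sigma^m$, cite $H(Q) \le H(Q_1)+H(Q_2)$ as a black box, observe that $Q_1$ and $P_l$ (resp.\ $Q_2$ and $P_m$) differ by $O_{l,m}(1/n)$ in total variation because the sliding windows differ only in $O(m)$ (resp.\ $O(l)$) boundary positions, and then invoke a quantitative Fannes-type bound $|H(\mu)-H(\nu)| \le \varepsilon\log(k/\varepsilon)$ to convert this into an $o(1)$ entropy error with rate depending only on $l$ and $m$. Your version buys cleanliness and a sharper explicit error rate ($O((\log n)/n)$, uniform in $x$); the paper's version is self-contained in the sense of not relying on the Fannes-type inequality but is harder to audit because the Jensen step is applied to weights that only approximately sum to one. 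Both correctly establish the claim.
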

The above lemma can be proved by using the techniques in the proof of
Lemma 1 from \cite{LZ78}. 
\begin{proof}[Proof of Lemma \ref{lem:subadditivityofentropy}]
\begin{align*}
	(l+m)H_{l+m}(x_0^{n-1}) &= -\sum_{w \in \Sigma^{l+m}} P(x_0^{n-1},w) \log(P(x_0^{n-1},w))\\
	&= -\sum_{u \in \Sigma^l} \sum_{v \in \Sigma^m} P(x_0^{n-1},uv) \left( \log\left( \frac{P(x_0^{n-1},uv)}{P(x_0^{n-1},u)} \right) + \log(P(x_0^{n-1},u)) \right)\\
	&= -\sum_{u \in \Sigma^l} P(x_0^{n-m-1},u) \log (P(x_0^{n-1},u)) - \sum_{u \in \Sigma^l} \sum_{v \in \Sigma^m} P(x_0^{n-1},uv)\log\left( \frac{P(x_0^{n-1},uv)}{P(x_0^{n-1},u)} \right)
\end{align*}
Let us analyze the first term on the RHS. 
\begin{align*}
  -\sum_{u \in \Sigma^l} P(x_0^{n-m-1}) \log (P(x_0^{n-1},u)) &=
  -\sum_{u \in \Sigma^l} P(x_0^{n-m-1},u) \log (P(x_0^{n-m-1},u)) \\
  &\quad\quad
  -\sum_{u \in \Sigma^l} P(x_0^{n-m-1},u) \log \frac{P(x_0^{n-1},u)}{P(x_0^{n-m-1},u)} \\
&= lH_l (x_0^{n-m-1}) + o(n)/n
\end{align*}
This is because when $n \to \infty$, $P(x_0^{n-m-1},u) \to P(x_0^{n-1},u)$, where the speed of convergence depends only on $m$. Hence, the error term depends only on $m$ and $\lvert \Sigma \rvert^l$.  Due to the same reason, using the continuity of entropy we get that,
\begin{align*}
lH_l (x_0^{n-m-1}) &= lH_l (x_0^{n-1}) + o(n)/n.
\end{align*}
Again, the error term in the above equation depends only on $m$ and $\lvert \Sigma^l \rvert$. Now, let us analyze the second term. Using the concavity of the $\log$ function,
\begin{align*}
 \sum_{u \in \Sigma^l} \sum_{v \in \Sigma^m} P(x_0^{n-1},uv)\log\frac{P(x_0^{n-1},u)}{P(x_0^{n-1},uv)}  &= \sum_{v \in \Sigma^m} P(x_0^{n-1},v) \sum_{u \in \Sigma^l} \frac{P(x_0^{n-1},uv)}{P(x_0^{n-1},v)}\log\left( \frac{P(x_0^{n-1},u)}{P(x_0^{n-1},uv)} \right)\\
 &\leq \sum_{v \in \Sigma^m} P(x_0^{n-1},v) \log\left( \sum_{u \in \Sigma^l}\frac{P(x_0^{n-1},u)}{P(x_0^{n-1},v)} \bigg/ \sum_{u \in \Sigma^l} \frac{P(x_0^{n-1},uv)}{P(x_0^{n-1},v)}\right)\\
 &= \sum_{v \in \Sigma^m} P(x_0^{n-1},v) \log\left( \frac{1}{P(x_0^{n-1},v)} \bigg/ \frac{P(x_{l}^{n-1},v)}{P(x_0^{n-1},v)}\right) \\
& =\sum_{v \in \Sigma^m} P(x_0^{n-1},v) \log
 \frac{1}{P(x_0^{n-1},v)}+\\
 &\quad\quad\quad
 \sum_{v \in \Sigma^m} P(x_0^{n-1},v) \log\frac{P(x_0^{n-1},v)}{P(x_{l}^{n-1},v)} \\
&= mH_m (x_0^{n-1}) + o(n)/n
\end{align*}
Where the last equality follows since $P(x_l^{n-1},v) \to P(x_0^{n-1},v)$ as $n \to \infty$ and the speed of convergence depends only on $l$. Hence, the error term depends only on $l$ and $\lvert \Sigma \rvert^m$. The required inequality follows from the bounds on both the terms on the RHS. We remark that every $o(n)/n$ term in the above bounds depends only on $l$ and $m$.
\end{proof}

The following is an immediate corollary of the above
inequality. 
\begin{lemma}
\label{lem:subadditivtycorollary1}
For any $l$ and $m$, $H_{ml}(x_0^{n-1}) \leq H_{l}(x_0^{n-1}) +
o(n)/n$ and $H_{l}(x_0^{n-1}) \geq H_{ml}(x_0^{n-1}) - o(n)/n$ where
the speed of convergence of the error term only depends on $l$ and
$m$. 
\end{lemma}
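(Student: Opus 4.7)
The plan is to derive Lemma \ref{lem:subadditivtycorollary1} directly from Lemma \ref{lem:subadditivityofentropy} by induction on $m$. Note that the two displayed inequalities in the corollary are literally the same statement rewritten, so it suffices to prove
$H_{ml}(x_0^{n-1}) \leq H_l(x_0^{n-1}) + o(n)/n$
with an error term depending only on $l$ and $m$.

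The base case $m=1$ is trivial. For the inductive step, I would apply Lemma \ref{lem:subadditivityofentropy} with the decomposition $(m+1)l = ml + l$, which gives
\begin{align*}
(m+1)l \cdot H_{(m+1)l}(x_0^{n-1}) \leq ml\cdot H_{ml}(x_0^{n-1}) + l\cdot H_l(x_0^{n-1}) + o(n)/n,
\end{align*}
where the error depends only on $ml$ and $l$, i.e., only on $l$ and $m$. Applying the inductive hypothesis $H_{ml}(x_0^{n-1}) \leq H_l(x_0^{n-1}) + o(n)/n$ (with error depending only on $l$ and $m$) and combining the two $o(n)/n$ terms, the right-hand side is bounded by $(m+1)l\cdot H_l(x_0^{n-1}) + o(n)/n$ with the new error term still depending only on $l$ and $m+1$. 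Dividing through by $(m+1)l$ yields the desired inequality for $m+1$.

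The second displayed inequality of the corollary is merely the first rewritten by moving $o(n)/n$ to the left side, so it follows immediately. There is no substantive obstacle here beyond bookkeeping the error terms; the only point to be careful about is confirming that each invocation of Lemma \ref{lem:subadditivityofentropy} introduces an error depending only on the two block-length parameters involved, so that finitely many applications (at most $m$ of them) preserve the property that the final $o(n)/n$ depends only on $l$ and $m$.
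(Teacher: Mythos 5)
Your proof is correct and is exactly the expected derivation: the paper merely calls this an ``immediate corollary'' of Lemma~\ref{lem:subadditivityofentropy}, and the natural justification is precisely your induction on $m$ via the decomposition $(m+1)l = ml + l$, taking care (as you do) that the $o(n)/n$ error at each of the finitely many steps depends only on the block lengths and hence the final error depends only on $l$ and $m$. Your observation that the second displayed inequality is just a rearrangement of the first is also right.
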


Now, we prove Theorem \ref{thm:weylcriterionforfsd} by proving the
equalities in the conclusion separately. 
\begin{lemma}
\label{lem:weylcriterionforfsd1}
	$\dim_{FS}(x) = \inf_{\mu \in \mathcal{W}_x} H^-(\mu)$
\end{lemma}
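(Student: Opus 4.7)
The plan is to establish both inequalities separately, relying on two key ingredients: (a) the pointwise convergence $\nu_{n_m}(C_w) = P(x_0^{n_m+l-2}, w) \to \mu(C_w)$ (from Lemmas \ref{lem:cylindersetconvergenceandweakconvergence} and \ref{lem:deltameasuresandcoutning}), which combined with the continuity of $-p\log p$ yields $H_l(x_0^{n_m+l-2}) \to \mathbf{H}_l(\mu)/l$ for each fixed $l$; and (b) the shift-invariance of every $\mu \in \mathcal{W}_x$. For (b), I would note that $T_*\nu_n - \nu_n = n^{-1}(\delta_{T^n x} - \delta_x) \to 0$ weakly, so any weak limit $\mu$ satisfies $T_*\mu = \mu$ by continuity of $T$. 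Shift-invariance makes $n \mapsto \mathbf{H}_n(\mu)$ subadditive (via the chain rule for the initial $n$ and the subsequent $m$ coordinates, whose marginal distribution is the same as the first $m$), so by Fekete's lemma $H^-(\mu) = \lim_n \mathbf{H}_n(\mu)/n = \inf_n \mathbf{H}_n(\mu)/n$.

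For the upper bound $\dim_{FS}(x) \leq \inf_{\mu} H^-(\mu)$, I would fix any $\mu \in \mathcal{W}_x$ with $\nu_{n_m} \Rightarrow \mu$. By ingredient (a), for every $l$ one has $\liminf_n H_l(x_0^{n-1}) \leq \lim_m H_l(x_0^{n_m + l - 2}) = \mathbf{H}_l(\mu)/l$. Invoking Definition \ref{def:finitestatedimension} and taking the infimum over $l$, $\dim_{FS}(x) \leq \inf_l \mathbf{H}_l(\mu)/l \leq \liminf_l \mathbf{H}_l(\mu)/l = H^-(\mu)$ (shift-invariance is not actually needed here). Since $\mu$ is arbitrary, $\dim_{FS}(x) \leq \inf_{\mu \in \mathcal{W}_x} H^-(\mu)$.

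For the lower bound, I would set $b_l := \liminf_n H_l(x_0^{n-1})$ and pick a subsequence $(n_k^{(l)})_k$ with $H_l(x_0^{n_k^{(l)}-1}) \to b_l$. By Prokhorov's theorem on the compact metric space $\Sigma^\infty$, a further sub-subsequence satisfies $\nu_{n_k^{(l)}} \Rightarrow \mu_l$ for some $\mu_l \in \mathcal{W}_x$, and ingredient (a) forces $\mathbf{H}_l(\mu_l)/l = b_l$. Invoking step (b), $\mu_l$ is shift-invariant, so $H^-(\mu_l) = \inf_n \mathbf{H}_n(\mu_l)/n \leq \mathbf{H}_l(\mu_l)/l = b_l$. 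Hence $\inf_{\mu \in \mathcal{W}_x} H^-(\mu) \leq b_l$ for every $l$, and letting $l \to \infty$ using equation (\ref{eqn:limitdimension}) yields $\inf_{\mu \in \mathcal{W}_x} H^-(\mu) \leq \dim_{FS}(x)$, closing the argument.

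The decisive step is shift-invariance. Without it, the weak-limit argument only produces the single-scale equality $\mathbf{H}_l(\mu_l)/l = b_l$, and there is no mechanism to pass from information at one block length $l$ to $H^-(\mu_l)$, which is a liminf over all block lengths. Shift-invariance, via subadditivity and Fekete's lemma, upgrades the single-scale equality into a single-scale \emph{upper bound} on $H^-(\mu_l)$, and this is what allows the two halves of the theorem to match.
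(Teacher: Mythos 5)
Your proof is correct, and the lower-bound half takes a genuinely different route from the paper's. Both halves share ingredient (a) --- weak convergence gives $P(x_0^{n_m+l-2},w) \to \mu(C_w)$, hence $H_l(x_0^{n_m+l-2}) \to \mathbf{H}_l(\mu)/l$ by continuity of entropy --- and the upper bound $\dim_{FS}(x)\le\inf_\mu H^-(\mu)$ is essentially as in the paper. The divergence is in the lower bound, where both proofs face the same gap: weak convergence pins down $\mathbf{H}_l(\mu)/l$ at a single scale $l$, but $H^-(\mu)$ is a $\liminf$ over all scales. The paper closes this gap by applying its sliding-window subadditivity (Corollary \ref{lem:subadditivtycorollary1}, $H_{kl'}(x_0^{n-1})\le H_{l'}(x_0^{n-1})+o(n)/n$) to the finite strings, passing to the limit along the fixed weakly convergent subsequence to obtain $\mathbf{H}_{kl'}(\mu')/(kl')<s+\epsilon$ for every $k$, and then taking $\liminf_l$. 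You instead observe that every $\mu\in\mathcal{W}_x$ is shift-invariant (a clean consequence of $T_*\nu_n-\nu_n = n^{-1}(\delta_{T^n x}-\delta_x)\to 0$), so that $\mathbf{H}_{n+m}(\mu)\le\mathbf{H}_n(\mu)+\mathbf{H}_m(\mu)$ holds exactly, and Fekete's lemma upgrades $H^-(\mu)$ to $\inf_n\mathbf{H}_n(\mu)/n$, making the single-scale bound immediately sufficient. Your route is arguably tidier: there are no $o(n)/n$ error terms to track, no need for Corollary \ref{lem:subadditivtycorollary1}, and the shift-invariance of the subsequence weak limits is a clean structural fact that the paper does not state (it appears only implicitly, and only for the convergent case, in Lemma \ref{lem:fsdofmunormal}). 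The paper's route, on the other hand, works entirely through the finite-string entropies, which keeps it closer to the machinery already developed for Lemma \ref{lem:subadditivityofentropy} and reused in Lemma \ref{lem:weylcriterionforfsd2}.
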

\begin{proof}
	We will first show that $\dim_{FS}(x)\leq \inf_{\mu \in \mathcal{W}_x} H^- (\mu)$. Let $s = \inf_{\mu \in \mathcal{W}_x} H^- (\mu)$. Let $\epsilon >0$ and $\mu'$ be any measure such that $H^-(\mu') < s+ \epsilon$. That is,
	\begin{align}
	\label{eqn:liminfequation1}
	\liminf_{l \to \infty} -\frac{1}{l} \sum_{w \in \Sigma^l} \mu'(C_w)\log(\mu'(C_w)) < s+\epsilon
	\end{align}
	Let $l'$ be any number such that
	\begin{align}
	\label{eqn:firstlemmaeq2}
	-\frac{1}{l'} \sum_{w \in \Sigma^{l'}} \mu'(C_w)\log(\mu'(C_w)) < s+\epsilon
	\end{align}
	which exists due to \ref{eqn:liminfequation1} (in fact there exist infinitely many $l'$ satisfying \ref{eqn:firstlemmaeq2}). Let $\langle \nu_{n_m} \rangle_{m=0}^{\infty}$ be any subsequence of $\<\nu_n\>_{n=0}^{\infty}$ such that $\nu_{n_m} \Rightarrow \mu'$. Hence, for any $w \in \Sigma^{l'}$, $\nu_{n_m}(C_w) \to \mu'(C_w)$ as $m \to \infty$. From, Lemma \ref{lem:deltameasuresandcoutning}, we get $P(x_0^{n_m+l'-2},w) \to \mu'(C_w)$ as $m \to \infty$. Hence, from \ref{eqn:firstlemmaeq2} and continuity of the entropy function we get that, there exist infinitely many $n_m$ such that
	\begin{align*}
	-\frac{1}{l'} \sum_{w \in \Sigma^{l'}} P(x_0^{n_m+l'-2},w)\log(P(x_0^{n_m+l'-2},w)) < s+\epsilon	
	\end{align*}
	This implies,
	\begin{align}
	\label{eqn:dimfsproofliminfstep}
	\liminf_{n \to \infty}-\frac{1}{l'} \sum_{w \in \Sigma^{l'}} P(x_0^{n-1},w)\log(P(x_0^{n-1},w)) < s+\epsilon	
	\end{align}
	The required inequality directly follows from above. Now we show that conversely, $\inf_{\mu \in \mathcal{W}_x} H^- (\mu) \leq \inf_l \liminf_{n \to \infty} H_l (x_0^{n-1})$. Let $s=\inf_l \liminf_{n \to \infty} H_l (x_0^{n-1})$. Let $\epsilon>0$ and $l'$ be any number such that $\liminf_{n \to \infty}H_{l'} (x_0^{n-1})<s+\epsilon$. Hence,
	\begin{align*}
	\liminf_{n \to \infty}-\frac{1}{l'} \sum_{w \in \Sigma^{l'}} P(x_0^{n-1},w)\log(P(x_0^{n-1},w)) < s+\epsilon	
	\end{align*}
	Hence, there are infinitely many $\langle n_m \rangle_{m=0}^{\infty}$ such that
	\begin{align}
	\label{eqn:firstlemmaeq4}
	-\frac{1}{l'} \sum_{w \in \Sigma^{l'}} P(x_0^{n_m-1},w)\log(P(x_0^{n_m-1},w)) < s+\epsilon	
	\end{align}
	For large enough $n_m$, due to the continuity of the entropy function we have,
	\begin{align}
	\label{eqn:firstlemmaeq3}
	-\frac{1}{l'} \sum_{w \in \Sigma^{l'}} P(x_0^{n_m+l'-2},w)\log(P(x_0^{n_m+l'-2},w)) < s+\epsilon	
	\end{align}
	Let $\langle \nu_{n_{m_i}} \rangle_{i=0}^{\infty}$ be any convergent subsequence of $\langle \nu_{n_m} \rangle_{m=0}^{\infty}$ such that $\nu_{n_{m_i}} \Rightarrow \mu'$ for some probability measure $\mu'$ as $i \to \infty$, which exists due to Prokhorov's theorem. We have from Lemma \ref{lem:deltameasuresandcoutning} that for any $w \in \Sigma^{l'}$,
	$
	P(x_0^{n_{m_i}+l'-2},w)= \nu_{n_{m_i}}(C_w).	
	$
	Hence we get,
	$
	\lim_{i \to \infty} P(x_0^{n_{m_i}+l'-2},w) = \mu'(C_w)	
	$
	for any $w \in \Sigma^{l'}$. Using continuity of entropy, \ref{eqn:firstlemmaeq3} implies,
	\begin{align}
	\label{eqn:firstlemmaeq5}
	-\frac{1}{l'} \sum_{w \in \Sigma^{l'}} \mu'(C_w)\log(\mu'(C_w)) < s+\epsilon
	\end{align}
	Now since \ref{eqn:firstlemmaeq4} is true, for any $k \in \N$ and large enough $n_m$ using Corollary \ref{lem:subadditivtycorollary1} we get that, 
	\begin{align}
	-\frac{1}{kl'} \sum_{w \in \Sigma^{kl'}} P(x_0^{n_m-1},w)\log(P(x_0^{n_m-1},w)) < s+\epsilon	
	\end{align}
	From \ref{eqn:firstlemmaeq4}, we get that the above is true for infinitely many $n_m$. The steps used in proving \ref{eqn:firstlemmaeq5} can be repeated (by choosing the same $\mu'$ as before) for $kl'$-length strings to obtain,
	\begin{align}
	\label{eqn:firstlemmaeq6}
	-\frac{1}{kl'} \sum_{w \in \Sigma^{kl'}} \mu'(C_w)\log(\mu'(C_w)) < s+\epsilon
	\end{align}
	Hence, \ref{eqn:firstlemmaeq6} is true for partitions having length in $\{kl'\}_{k=1}^{\infty}$. This implies,
	\begin{align*}
	\liminf_{l \to \infty} -\frac{1}{l} \sum_{w \in \Sigma^l} \mu'(C_w)\log(\mu'(C_w)) < s+\epsilon
	\end{align*}
	It follows that $\inf_{\mu \in \mathcal{W}_x} H^- (\mu) \leq s+\epsilon$. Letting $\epsilon \to 0$, we get the desired inequality.
\end{proof}

\begin{lemma}
\label{lem:weylcriterionforfsd2}
	$\Dim_{FS}(x) = \sup_{\mu \in \mathcal{W}_x} H^+(\mu)$
\end{lemma}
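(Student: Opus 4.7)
The plan is to prove the two inequalities $\Dim_{FS}(x) \geq \sup_{\mu \in \mathcal{W}_x} H^+(\mu)$ and $\Dim_{FS}(x) \leq \sup_{\mu \in \mathcal{W}_x} H^+(\mu)$ separately, paralleling the overall structure of Lemma~\ref{lem:weylcriterionforfsd1} but with a crucial asymmetry. The subadditivity of Lemma~\ref{lem:subadditivityofentropy} used in the $\liminf$ case propagates \emph{upper} bounds on $H_l(x_0^{n-1})$ from short blocks to longer ones, fitting naturally with the $\liminf$ that defines $H^-$. For $H^+$, which is a $\limsup$, I need \emph{lower} bounds that persist at many block lengths, and this direction of propagation is not provided by subadditivity alone. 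The main obstacle will therefore be the reverse inequality, which I plan to overcome via a weak-compactness argument combined with the shift-invariance of every measure in $\mathcal{W}_x$.

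The first inequality follows the proof strategy of Direction~1 of Lemma~\ref{lem:weylcriterionforfsd1} with $\liminf \leftrightarrow \limsup$. Given $\mu' \in \mathcal{W}_x$ realized as $\nu_{n_m} \wto \mu'$ and $\epsilon > 0$, the identity $H^+(\mu') = \limsup_l \mathbf{H}_l(\mu')/l$ produces infinitely many lengths $l'$ with $\mathbf{H}_{l'}(\mu')/l' > H^+(\mu') - \epsilon$. For each such $l'$, weak convergence gives $\nu_{n_m}(C_w) \to \mu'(C_w)$ for every $w \in \Sigma^{l'}$; combined with Lemma~\ref{lem:deltameasuresandcoutning} and continuity of the entropy function, this yields $H_{l'}(x_0^{n_m + l' - 2}) \to \mathbf{H}_{l'}(\mu')/l'$, so $\limsup_n H_{l'}(x_0^{n-1}) > H^+(\mu') - \epsilon$ for infinitely many $l'$. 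Invoking equation~\eqref{eqn:limitstrongdimension} in the form $\Dim_{FS}(x) = \lim_l \limsup_n H_l(x_0^{n-1})$, and using that the existence of this limit forces it to coincide with $\limsup_l$ of the same quantity, I conclude $\Dim_{FS}(x) \geq H^+(\mu') - \epsilon$; letting $\epsilon \to 0$ and taking the supremum over $\mu' \in \mathcal{W}_x$ finishes this direction.

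For the reverse inequality, I would first establish two auxiliary facts. First, every $\mu \in \mathcal{W}_x$ is shift-invariant: since $T_* \nu_n - \nu_n = n^{-1}(\delta_{T^n x} - \delta_x)$ has total variation at most $2/n \to 0$, continuity of the shift together with weak convergence forces $T_* \mu = \mu$ for any weak limit of $\langle \nu_n \rangle$. Second, for shift-invariant $\mu$, the standard conditional-entropy argument (conditioning on additional variables cannot increase entropy) shows that $\mathbf{H}_l(\mu)/l$ is non-increasing in $l$. Setting $D_l = \limsup_n H_l(x_0^{n-1})$ and $D = \Dim_{FS}(x) = \lim_l D_l$, the continuity of $\mu \mapsto \mathbf{H}_l(\mu)$ on the Prokhorov-compact space of probability measures on $\Sigma^\infty$, together with the weak-closedness of $\mathcal{W}_x$ (as the set of cluster points of $\langle \nu_n \rangle$) and Lemma~\ref{lem:deltameasuresandcoutning}, implies $\sup_{\mu \in \mathcal{W}_x} \mathbf{H}_l(\mu)/l = D_l$, attained at some $\mu_l \in \mathcal{W}_x$. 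By compactness, extract a weakly convergent subsequence $\mu_{l_j} \wto \mu^* \in \mathcal{W}_x$; the monotonicity of $\mathbf{H}_k(\mu_{l_j})/k$ in $k$ then gives $\mathbf{H}_k(\mu_{l_j})/k \geq \mathbf{H}_{l_j}(\mu_{l_j})/l_j = D_{l_j}$ whenever $l_j \geq k$, and letting $j \to \infty$ with continuity of $\mathbf{H}_k$ yields $\mathbf{H}_k(\mu^*)/k \geq D$ for every $k$, hence $H^+(\mu^*) \geq D$, completing the proof.
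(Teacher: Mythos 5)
Your proof is correct, and it takes a genuinely different route from the paper's in both directions, even though direction~1 superficially follows the template of Lemma~\ref{lem:weylcriterionforfsd1}. For the inequality $\Dim_{FS}(x)\geq\sup_{\mu\in\mathcal{W}_x}H^+(\mu)$, the paper's proof fixes an $l'$ nearly achieving the $\inf_l$ and propagates the upper entropy bound to all block lengths $kl'$ and then to all large $j$ via the subadditivity of Lemma~\ref{lem:subadditivityofentropy}; you bypass this entirely by invoking the Kozachinskiy--Shen identity \eqref{eqn:limitstrongdimension} to replace $\inf_l$ by $\lim_l$, after which the infinitely many favorable $l'$ immediately force the lower bound. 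This is noticeably shorter but leans on the external equivalence, whereas the paper's argument is self-contained modulo the subadditivity lemma. For the inequality $\Dim_{FS}(x)\leq\sup_{\mu\in\mathcal{W}_x}H^+(\mu)$, the paper carries out an explicit diagonalization: it builds a subsequence $\langle a_m\rangle$ along which block entropies at lengths $2^i l'$ are simultaneously large for all $i<m$ and then passes to a weak limit. You replace that construction with a structural argument: every $\mu\in\mathcal{W}_x$ is $T$-invariant (via the clean total-variation estimate $\lVert T_*\nu_n-\nu_n\rVert\le 2/n$, an argument the paper only makes, in a more combinatorial form, in Lemma~\ref{lem:fsdofmunormal} and only for full-sequence limits), so $\mathbf{H}_l(\mu)/l$ is non-increasing; combining the identity $\sup_{\mu\in\mathcal{W}_x}\mathbf{H}_l(\mu)/l=\limsup_n H_l(x_0^{n-1})$ with closedness and compactness of $\mathcal{W}_x$ and a diagonal weak-limit extraction of the per-level maximizers then yields $\mu^*\in\mathcal{W}_x$ with $\mathbf{H}_k(\mu^*)/k\ge\Dim_{FS}(x)$ for all $k$. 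Your approach is more conceptual, makes the role of stationarity explicit (in fact it shows $H^-(\mu)=H^+(\mu)$ for every $\mu\in\mathcal{W}_x$, so the two Lemmas \ref{lem:weylcriterionforfsd1} and \ref{lem:weylcriterionforfsd2} are really about the same functional, the entropy rate), and yields the extra information that the supremum is attained; the paper's approach is more elementary and avoids invoking compactness of $\mathcal{W}_x$ or the stationary-process entropy fact.
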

\begin{proof}
	We first show that $\Dim_{FS}(x)\leq \sup_{\mu \in \mathcal{W}_x} H^+(\mu)$. It is enough to show that if $\inf_l \limsup_{n \to \infty} H_l (x_0^{n-1}) > s$ then $\sup_{\mu \in \mathcal{W}_x} H^+(\mu) \geq s$. If $\inf_l \limsup_{n \to \infty} H_l (x_0^{n-1}) >s$ then for any length $l$,
	\begin{align}
	\label{eqn:weylcriterionforfsd2eqn1}
	\limsup_{n \to \infty}-\frac{1}{l} \sum_{w \in \Sigma^l} P(x_0^{n-1},w)\log(P(x_0^{n-1},w)) > s+\epsilon_1	
	\end{align}
	for some small $\epsilon_1>0$. Hence, for any $l'$ there exists infinitely many $n$ such that
	\begin{align*}
	-\frac{1}{l'} \sum_{w \in \Sigma^{l'}} P(x_0^{n-1},w)\log(P(x_0^{n-1},w)) > s+\epsilon_1.
	\end{align*}
	Let $\langle n^1_{m} \rangle_{m=0}^{\infty}$ be any increasing sequence such that
	\begin{align*}
	-\frac{1}{l'} \sum_{w \in \Sigma^{l'}} P(x_1^{n^1_{m}},w)\log(P(x_1^{n^1_{m}},w)) > s+\epsilon_1
	\end{align*}
	for all $m$. Since, \ref{eqn:weylcriterionforfsd2eqn1} is true for any length, using Corollary \ref{lem:subadditivtycorollary1}, we can choose a sequence $\langle n^2_{m} \rangle_{m=0}^{\infty}$ such that for each $m$,
	\begin{align*}
	-\frac{1}{2l'} \sum_{w \in \Sigma^{2l'}} P(x_1^{n^2_{m}},w)\log(P(x_1^{n^2_{m}},w)) > s+\epsilon_2
	\end{align*}
	and,
	\begin{align*}
	-\frac{1}{l'} \sum_{w \in \Sigma^{l'}} P(x_1^{n^2_{m}},w)\log(P(x_1^{n^2_{m}},w)) > s+\epsilon_2
	\end{align*}
	for some $0<\epsilon_2<\epsilon_1$. Similarly, for any $k>0$ we can choose $\langle n^k_{m} \rangle_{m=0}^{\infty}$ such that
	\begin{align*}
	-\frac{1}{2^il'} \sum_{w \in \Sigma^{2^il'}} P(x_1^{n^k_{m}},w)\log(P(x_1^{n^k_{m}},w)) > s+\epsilon_2
	\end{align*}
	for any $i \in \{0,1,2,3,\dots k-1\}$. Let $\langle a_{m}\rangle_{m=1}^{\infty}$ be any increasing sequence chosen such that $a_k$ is a member of $\langle n^k_{m} \rangle_{m=0}^{\infty}$. Now, consider $\langle \nu_{a_m} \rangle_{m=1}^{\infty}$ and let $\mu'$ be the weak limit point of any subsequence of $\langle \nu_{a_m} \rangle_{m=1}^{\infty}$, which exists due to Prokhorov's theorem. Now, for any $k \in \N$ and $m \geq k+1$,
	\begin{align*}
	-\frac{1}{2^kl'} \sum_{w \in \Sigma^{2^kl'}} P(x_1^{a_m},w)\log(P(x_1^{a_m},w)) > s+\epsilon_2.
	\end{align*}
	For any fixed $k$, using Corollary \ref{lem:subadditivtycorollary1} and continuity of the entropy function, for large enough $a_m$ along this sequence we have that,
	\begin{align*}
	-\frac{1}{2^kl'} \sum_{w \in \Sigma^{2^kl'}} P(x_1^{a_m+2^kl'-2},w)\log(P(x_1^{a_m+2^kl'-2},w)) > s+\epsilon_2.
	\end{align*}
	From Lemma \ref{lem:deltameasuresandcoutning} we get that for any $w \in \Sigma^{2^kl'}$, $P(x_0^{a_m+2^kl'-2},w) = \nu_{a_m}(C_w)$. Since, $\nu_{a_m}(C_w)$ converges along a subsequence to $\mu'(C_w)$, using continuity of the entropy function, we get, 
	\begin{align*}
	-\frac{1}{2^kl'} \sum_{w \in \Sigma^{2^kl'}} \mu'(C_w)\log(\mu'(C_w)) > s+\epsilon_2.
	\end{align*}
	Since the above is true for any $k$, we get that,
	\begin{align*}
	\limsup_{l \to \infty}-\frac{1}{l} \sum_{w \in \Sigma^l} \mu'(C_w)\log(\mu'(C_w)) > s+\epsilon_2.
	\end{align*}
	This implies that $\sup_{\mu \in \mathcal{W}_x} H^+(\mu) > s +\epsilon_2$ for $\epsilon_2>0$, which in turn implies that $\sup_{\mu \in \mathcal{W}_x} H^+(\mu) \geq s$. This completes the proof of the first part. Conversely, let us show that $s=\inf_l \limsup_{n \to \infty} H_l (x_0^{n-1}) \geq \sup_{\mu \in \mathcal{W}_x} H^+(\mu)$. For any $\epsilon>0$, there exists an $l'$ such that
	\begin{align*}
	\limsup_{n \to \infty}-\frac{1}{l'} \sum_{w \in \Sigma^{l'}} P(x_0^{n-1},w)\log(P(x_0^{n-1},w)) < s+\epsilon	
	\end{align*}
	Hence, for a small enough $\epsilon'>0$, there exists $ N(\epsilon,l')\in \N$ such that for all $n \geq N(\epsilon,l')$,
	\begin{align*}
	-\frac{1}{l'} \sum_{w \in \Sigma^{l'}} P(x_0^{n-1},w)\log(P(x_0^{n-1},w)) < s+\epsilon-\epsilon'	
	\end{align*}
	From the above, by using Corollary \ref{lem:subadditivtycorollary1}, it can be shown that for all $k>0$, there exists $ N(\epsilon, kl') \in \N$ such that for all $ n \geq N(\epsilon, kl')$,
	\begin{align*}
	-\frac{1}{kl'} \sum_{w \in \Sigma^{kl'}} P(x_0^{n-1},w)\log(P(x_0^{n-1},w)) < s+\epsilon	
	\end{align*}
	For $1<r<l'-1$, let $j$ be any number such $j=kl'+r$. Now, from Corollary \ref{lem:subadditivtycorollary1}, we have
	\begin{align*}
	H_j(x_0^{n-1})=H_{kl'+r}(x_0^{n-1}) \leq \frac{kl'}{kl'+r}H_{kl'}(x_0^{n-1}) + \frac{r}{kl'+r} H_{r}(x_0^{n-1}) + o(n)/n	
	\end{align*}
 	 Using the fact that $H_r$ is at most $1$, for large enough $j$ (equivalently for  large enough $k$) we have,
 	\begin{align*}
	H_j(x_0^{n-1})=H_{kl'+r}(x_0^{n-1}) \leq H_{kl'}(x_0^{n-1}) + \epsilon + o(n)/n	
	\end{align*}
 	Since this conclusion can be obtained for any $r$, for large enough $j$,  there exists $ N(\epsilon,j)$ such that for all $n \geq N(\epsilon,j)$,
 	\begin{align*}
 		-\frac{1}{j} \sum_{w \in \Sigma^{j}} P(x_0^{n-1},w)\log(P(x_0^{n-1},w)) < s+2\epsilon
 	\end{align*}

	Using the continuity of entropy, by considering large enough $N(\epsilon,j)$ we can ensure that for all $ n \geq N(\epsilon, j)$,
	\begin{align}
	\label{eqn:secondlemmaeq1}
	-\frac{1}{j} \sum_{w \in \Sigma^{j}} P(x_0^{n+j-2},w)\log(P(x_0^{n+j-2},w)) < s+2\epsilon	
	\end{align}
	Consider any $\mu' \in \mathcal{W}_x$. Let $\langle \nu_{n_m} \rangle_{m=0}^{\infty}$ be any subsequence of $\<\nu_n\>_{n=0}^{\infty}$ such that $\nu_{n_m} \wto \mu'$ as $m \to \infty$. Now, for any $j$ and $w \in \Sigma^{j}$, from Lemma \ref{lem:deltameasuresandcoutning} we have $P(x_0^{n_m+j-2},w)=\nu_{n_m}(C_w) \to \mu'(C_w)$ as $m \to \infty$. From \ref{eqn:secondlemmaeq1} using continuity of entropy, we get that,
	 $
	-j^{-1} \sum_{w \in \Sigma^j} \mu'(C_w)\log(\mu'(C_w)) < s+2\epsilon.
	$
	Since the above holds for all large enough $j$, we get
	\begin{align*}
	\limsup_{l \to \infty}-\frac{1}{l} \sum_{w \in \Sigma^l} \mu'(C_w)\log(\mu'(C_w)) < s+2\epsilon.
	\end{align*}
	Since the above holds for any $\mu' \in \mathcal{W}_x$ we get that,
	$
	\sup_{\mu \in \mathcal{W}_x} H^+(\mu) \leq s+2\epsilon.
	$
	By letting $\epsilon \to 0$ we obtain the desired inequality.
\end{proof}

Theorem \ref{thm:weylcriterionforfsd} now follows from Lemma \ref{lem:weylcriterionforfsd1} and Lemma \ref{lem:weylcriterionforfsd2}. Now, we prove an equivalent version of Theorem \ref{thm:weylcriterionforfsd} which we require in section \ref{subsec:weylaveragesconvergentcase}. From the definition of lower average entropy, Theorem \ref{thm:weylcriterionforfsd} shows that, $
 \dim_{FS}(x) = \inf_{\mu \in \mathcal{W}_x} \liminf_{l \to \infty} \mathbf{H}_{l}(\mu)/l$. We show that the limit inferior in this expression can be replaced by an infimum.
\begin{lemma}
\label{lem:liminfreplacedwithinf}
$
 \dim_{FS}(x) = \inf_{\mu \in \mathcal{W}_x} \inf_{l} \mathbf{H}_{l}(\mu)/l 
$
\end{lemma}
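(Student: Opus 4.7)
The plan is to show both inequalities separately, exploiting the fact that $\liminf_l \mathbf{H}_l(\mu)/l \ge \inf_l \mathbf{H}_l(\mu)/l$ trivially, so that the nontrivial content lies in one direction only.

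For the easy direction, I would start from Theorem \ref{thm:weylcriterionforfsd}, which gives $\dim_{FS}(x) = \inf_{\mu \in \mathcal{W}_x} H^-(\mu) = \inf_{\mu \in \mathcal{W}_x} \liminf_{l \to \infty} \mathbf{H}_l(\mu)/l$. Since for any $\mu$ we have $\inf_l \mathbf{H}_l(\mu)/l \leq \liminf_{l \to \infty} \mathbf{H}_l(\mu)/l$, taking $\inf_{\mu \in \mathcal{W}_x}$ on both sides yields $\inf_{\mu \in \mathcal{W}_x}\inf_l \mathbf{H}_l(\mu)/l \leq \dim_{FS}(x)$.

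For the reverse direction, I would prove the pointwise statement: for every $\mu \in \mathcal{W}_x$ and every $l \in \N$,
\begin{align*}
\liminf_{n \to \infty} H_l(x_0^{n-1}) \leq \frac{\mathbf{H}_l(\mu)}{l}.
\end{align*}
Fix $\mu \in \mathcal{W}_x$ and a subsequence $\langle n_m \rangle_{m \in \N}$ with $\nu_{n_m} \wto \mu$. By Lemma \ref{lem:cylindersetconvergenceandweakconvergence}, for every $w \in \Sigma^l$ we have $\nu_{n_m}(C_w) \to \mu(C_w)$, and by Lemma \ref{lem:deltameasuresandcoutning}, $\nu_{n_m}(C_w) = P(x_0^{n_m + l - 2}, w)$. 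Hence $P(x_0^{n_m+l-2}, w) \to \mu(C_w)$ as $m \to \infty$ for each of the finitely many $w \in \Sigma^l$, so by continuity of the finite Shannon entropy we obtain $H_l(x_0^{n_m + l - 2}) \to \mathbf{H}_l(\mu)/l$. This forces $\liminf_{n \to \infty} H_l(x_0^{n-1}) \leq \mathbf{H}_l(\mu)/l$ as claimed.

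Taking the infimum over $l$ first and then over $\mu \in \mathcal{W}_x$ gives
\begin{align*}
\dim_{FS}(x) = \inf_l \liminf_{n \to \infty} H_l(x_0^{n-1}) \leq \inf_{\mu \in \mathcal{W}_x} \inf_l \frac{\mathbf{H}_l(\mu)}{l},
\end{align*}
where the first equality is the block-entropy definition of finite-state dimension. Combining the two directions completes the proof. No step seems to present a real obstacle here: once Theorem \ref{thm:weylcriterionforfsd} and Lemma \ref{lem:deltameasuresandcoutning} are in hand, the only subtlety is making sure to use the $n_m + l - 2$ index shift correctly so that the convergence of sliding block probabilities transfers cleanly to the entropy $H_l$ evaluated along the subsequence.
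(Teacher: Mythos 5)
Your proposal is correct and takes essentially the same approach as the paper: the easy inequality follows from $\inf_l \le \liminf_l$ together with Theorem \ref{thm:weylcriterionforfsd}, and the reverse inequality uses the same combination of Lemma \ref{lem:cylindersetconvergenceandweakconvergence}, Lemma \ref{lem:deltameasuresandcoutning}, and continuity of the finite Shannon entropy that the paper invokes (by reusing the first half of the proof of Lemma \ref{lem:weylcriterionforfsd1}). The only cosmetic difference is that you state the bound $\liminf_n H_l(x_0^{n-1}) \le \mathbf{H}_l(\mu)/l$ pointwise for each $\mu$ and $l$ rather than running the paper's $\epsilon$-argument, but these are equivalent.
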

\begin{proof}
Observe that from Lemma \ref{lem:weylcriterionforfsd1}
\begin{align*}
\inf_{\mu \in \mathcal{W}_x} \inf_{l} \frac{\mathbf{H}_{l}(\mu)}{l} \leq \inf_{\mu \in \mathcal{W}_x} \liminf_{l \to \infty} \frac{\mathbf{H}_{l}(\mu)}{l}
\end{align*}
we get that $\inf_{\mu \in \mathcal{W}_x} \inf_{l} \mathbf{H}_{l}(\mu)/l \leq \dim_{FS}(x)$. Hence, it is enough to show that $ \dim_{FS}(x) \leq \inf_{\mu \in \mathcal{W}_x} \inf_{l} \mathbf{H}_{l}(\mu)/l$. This can be shown using the same steps in the proof of the first part of Lemma \ref{lem:weylcriterionforfsd1} since the existence of an $l'$ satisfying \ref{eqn:firstlemmaeq2} is true if we assume that $s=\inf_{\mu \in \mathcal{W}_x} \inf_{l} \mathbf{H}_{l}(\mu)/l$. The rest of the proof follows from \ref{eqn:firstlemmaeq2} using identical steps.
\end{proof}

\subsection{Weyl averages and finite-state dimension}
\label{subsec:weylaveragesconvergentcase}
%% In this section we relate the convergence of Weyl averages and
%% finite-state dimension using Theorem \ref{thm:weylcriterionforfsd}
%% and the results from Section
%% \ref{sec:weylscriterionandweakconvergence}. We obtain the main
%% result of this paper by relating the subsequence limits of Weyl
%% averages and the finite-state dimension of the given
%% sequence. Thereafter, we investigate sequences with convergent Weyl
%% averages as a special case of the main theorem. Using the
%% convergent case Weyl's criterion we obtain a relationship between
%% the regularity of the sequence and convergence of Weyl averages. We
%% also investigate Weyl's criterion for normality where the Weyl
%% averages are convergent to $0$ in light of the convergent case
%% Weyl's criterion.

We now obtain the main result of the paper by relating subsequence
limits of Weyl averages and finite-state dimension. In case the Weyl
averages converge, we show that the sequence is regular. In
particular, when the Weyl averages converge to 0, then the regular
sequence is normal. 

%This latter result provides a Fourier analytic
%proof of Schnorr and Stimm's theorem \cite{SchnorrStimm72}.

We know from Lemma \ref{lem:nonconvergentexample} that there exist
regular sequences with non-convergent Weyl averages. In the absence of
limits, we investigate the subsequence limits of Weyl averages in
order to obtain a relationship with the finite-state dimension. If for some $x \in \Sigma^\infty$, there exist a sequence of natural numbers $\langle n_m \rangle_{m \in \N}$ and constants
$\langle c_k \rangle_{k \in \Z}$ such that $ \lim_{m \to
  \infty}\frac{1}{n_m}\sum_{j=0}^{n_m-1} e^{2 \pi i k (v(T^j x)) } =
c_k$. Then, using Theorem
\ref{thm:weylconvergenceimpliescylinderconvergence}, we get that there
exists a measure $\mu$ on $\T$ such that $c_k=\int e^{2 \pi i k y}
d\mu$ and $\lim_{m \to \infty}\nu_{n_m}(C_w)=\hat{\mu}(C_w)$ for every
$w \neq 0^{\lvert w \rvert}$ and $w \neq 1^{\lvert w \rvert}$. %% We
%% cannot in general obtain convergence of $\nu_{n_m}(C_{0^l})$ and
%% $\nu_{n_m}(C_{1^l})$, since there exist sequences with convergent Weyl
%% averages having no definite limit distribution for the strings $0^l$
%% and $1^l$ for all $l$.
But, $\nu_{n_m}(C_{0^l})$ and $\nu_{n_m}(C_{1^l})$ need not converge.
Simple examples of such strings can be obtained by concatenating
increasingly large runs of $0$'s and $1$'s in an alternating stage
wise manner. However, the probabilities of the strings $0^l$ and $1^l$
have negligible effect on the finite-state dimension as $l$ gets
large.  Using Theorem \ref{thm:weylcriterionforfsd} we obtain the
following.

\begin{theorem}[Weyl's criterion for finite-state dimension]
\label{thm:nonconvergentcaseweylcriterion}
	Let $x \in \Sigma^\infty$. If for any $\langle n_m
        \rangle_{m=0}^{\infty}$ there exist constants $c_k$ for $k \in
        \Z$ such that $\lim_{m \to
          \infty}\frac{1}{n_m}\sum_{j=0}^{n_m-1} e^{2 \pi i k (v(T^j
          x)) } = c_k$, for every $k \in \Z$, then there exists a
        measure $\mu$ on $\T$ such that for every $k$, $c_k = \int
        e^{2 \pi i k y } d\mu$. Let $\widehat{\mathcal{W}}_x$ be
        the collection of the lifted measures $\hat{\mu}$ on
        $\Sigma^\infty$ for all $\mu$ on $\T$ that can be obtained as
        subsequence limits of Weyl averages. Then,
\begin{align*}
  \dim_{FS}(x) = \inf_{}
  \{H^{-}(\hat{\mu}) \mid \hat{\mu} \in \widehat{\mathcal{W}}_x\}
  \quad\text{and}\quad
  \Dim_{FS}(x) = \sup
  \{H^{+}(\hat{\mu}) \mid \hat{\mu} \in \widehat{\mathcal{W}}_x\}
\end{align*}        
\end{theorem}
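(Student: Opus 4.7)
The plan is to deduce Theorem \ref{thm:nonconvergentcaseweylcriterion} from Theorem \ref{thm:weylcriterionforfsd} by establishing a correspondence between $\widehat{\mathcal{W}}_x$ and $\mathcal{W}_x$ that preserves the upper and lower average entropies. The initial assertion of the statement is directly Theorem \ref{thm:weylconvergenceimpliescylinderconvergence}: any subsequence $\langle n_m \rangle$ along which the Weyl averages converge to constants $\langle c_k \rangle_{k \in \Z}$ uniquely determines a measure $\mu$ on $\T$ having those values as its Fourier coefficients, and by definition the lift $\hat{\mu}$ then lies in $\widehat{\mathcal{W}}_x$.

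I would then build the correspondence in both directions. Given $\mu' \in \mathcal{W}_x$ with $\nu_{n_m} \wto \mu'$, continuity of $y \mapsto e^{2\pi i k v(y)}$ on $\Sigma^\infty$ forces the Weyl averages along $\langle n_m \rangle$ to converge to $\int e^{2\pi i k v(y)}\, d\mu'$ for every $k \in \Z$, so the first part of the theorem furnishes a matching $\hat{\mu} \in \widehat{\mathcal{W}}_x$. Conversely, given $\hat{\mu} \in \widehat{\mathcal{W}}_x$ arising from Weyl convergence along $\langle n_m \rangle$, Prokhorov's theorem supplies a subsubsequence $\langle n_{m_i} \rangle$ with $\nu_{n_{m_i}} \wto \mu'$ for some $\mu' \in \mathcal{W}_x$. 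In either direction, Theorem \ref{thm:weylconvergenceimpliescylinderconvergence} yields the key cylinder-level identity
\begin{align*}
\mu'(C_w) = \hat{\mu}(C_w) \quad \text{for every } w \in \Sigma^* \text{ with } w \neq 0^{\lvert w \rvert} \text{ and } w \neq 1^{\lvert w \rvert}.
\end{align*}

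The remaining step is to promote this cylinder-level agreement to the entropy equalities $H^{-}(\mu') = H^{-}(\hat{\mu})$ and $H^{+}(\mu') = H^{+}(\hat{\mu})$. Since $t \mapsto -t \log t$ is continuous and bounded on $[0,1]$, the only terms that can differ between $\mathbf{H}_l(\mu')$ and $\mathbf{H}_l(\hat{\mu})$ are those indexed by $w = 0^l$ and $w = 1^l$, giving a uniform bound $\lvert \mathbf{H}_l(\mu') - \mathbf{H}_l(\hat{\mu}) \rvert \leq C$ with $C$ independent of $l$. Dividing by $l$ and passing to $\liminf$ and $\limsup$, the quantities defining $H^{-}$ and $H^{+}$ must agree for $\mu'$ and $\hat{\mu}$. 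Applying Theorem \ref{thm:weylcriterionforfsd} then yields the desired characterizations of $\dim_{FS}(x)$ and $\Dim_{FS}(x)$.

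The main technical subtlety, and the reason the result does not follow immediately from Theorem \ref{thm:weylconvergenceimpliescylinderconvergence} alone, is that the empirical sequence $\langle \nu_{n_m}\rangle$ on $\Sigma^\infty$ need not be weakly convergent when the Weyl averages converge---as noted before the statement, $\nu_{n_m}(C_{0^l})$ and $\nu_{n_m}(C_{1^l})$ may oscillate due to the failure of Bochner's uniqueness on Cantor space. Prokhorov's theorem circumvents this obstruction by extracting a convergent subsubsequence into $\mathcal{W}_x$, and the boundedness of $-t \log t$ ensures that the two ``boundary'' cylinders on which $\mu'$ and $\hat{\mu}$ may disagree contribute only a bounded, hence asymptotically negligible, amount to $\mathbf{H}_l/l$.
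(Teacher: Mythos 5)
Your proof is correct and follows essentially the same route as the paper's: reduce to Theorem \ref{thm:weylcriterionforfsd} by setting up a bidirectional correspondence between $\mathcal{W}_x$ and $\widehat{\mathcal{W}}_x$ (via Theorem \ref{thm:weylconvergenceimpliescylinderconvergence} and Prokhorov's theorem) in which matched measures agree on all cylinders except $C_{0^l}$ and $C_{1^l}$, then absorb the bounded per-level entropy contribution of these two exceptional cylinders. The paper packages the entropy estimate through the auxiliary quantities $\widetilde{\mathbf{H}}_l$, $\widetilde{H}^{\pm}$ and also splits into a dyadic and non-dyadic case for $v(x)$ (its technical Lemma \ref{lem:nonconvergentcaseweyltechnicallemma} assumes $v(x)\notin\mathbb{D}$), whereas your version states the $O(1)$ bound directly and leans on Theorem \ref{thm:weylconvergenceimpliescylinderconvergence} as a black box, which handles both cases uniformly.
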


Hence, the finite-state dimension and finite-state strong dimension are related to the lower and upper average entropies of the subsequence limits of the Weyl averages. 

We require the following definitions and technical lemmas for proving Theorem \ref{thm:nonconvergentcaseweylcriterion}. We define,

\begin{align*}
\widetilde{H}_l(x_0^{n-1})= 	\frac{-1}{l }\sum_{w \in \Sigma^l \setminus \{0^l,1^l\}} P(x_0^{n-1},w)\log(P(x_0^{n-1},w)).
\end{align*}

Using the above notion of entropy, we define,
\begin{align*}
\widetilde{\dim}_{FS}(x)=\liminf_{l \to \infty} \liminf_{n \to \infty} \widetilde{H}_l(x_0^{n-1}) \\
 \widetilde{\Dim}_{FS}(x)=\liminf_{l \to \infty} \limsup_{n \to \infty} \widetilde{H}_l(x_0^{n-1}).
\end{align*}

The following lemma easily follows from the definitions.
\begin{lemma}
\label{lem:newentropyandoldentropy}
$
\widetilde{H}_l(x_0^{n-1}) \leq H_l(x_0^{n-1}) \leq \widetilde{H}_l(x_0^{n-1})+2/l.
$
\end{lemma}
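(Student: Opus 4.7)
The lemma is essentially a matter of examining the two missing terms in the restricted sum, and the proof should be short and direct rather than requiring any of the heavier machinery developed in the paper.

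The plan is to treat the two inequalities separately. For the left inequality $\widetilde{H}_l(x_0^{n-1}) \leq H_l(x_0^{n-1})$, I would simply observe that for every string $w \in \Sigma^l$ the quantity $P(x_0^{n-1},w)$ lies in $[0,1]$, so each summand $-P(x_0^{n-1},w)\log P(x_0^{n-1},w)$ is nonnegative (with the convention $0\log 0 = 0$). The sum defining $H_l$ differs from the one defining $\widetilde{H}_l$ only by the two nonnegative contributions corresponding to $w = 0^l$ and $w = 1^l$, so dropping them can only decrease the value. This gives the left-hand inequality immediately.

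For the right inequality, I would write out the difference explicitly:
\begin{align*}
H_l(x_0^{n-1}) - \widetilde{H}_l(x_0^{n-1}) = \frac{1}{l}\bigl[-P(x_0^{n-1},0^l)\log P(x_0^{n-1},0^l) - P(x_0^{n-1},1^l)\log P(x_0^{n-1},1^l)\bigr].
\end{align*}
The key observation is the elementary analytic fact that $-p\log p \leq 1$ for every $p \in [0,1]$ (the function $-p\log_2 p$ attains its maximum at $p = 1/e$, where its value is $(\log_2 e)/e < 1$). Applying this bound to each of the two terms in the bracket yields an upper bound of $2$, and hence the difference is at most $2/l$, which is exactly what is claimed.

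I do not anticipate any serious obstacle here: the only subtlety is to remember the $0\log 0 = 0$ convention so that the argument works even when one of the two excluded strings does not appear in $x_0^{n-1}$, and to cite (or verify in one line) the universal bound $-p\log p \leq 1$ on $[0,1]$. Everything else is arithmetic.
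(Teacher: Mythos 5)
Your proof is correct and fills in exactly the elementary argument the paper leaves implicit (the paper merely asserts that the lemma ``easily follows from the definitions'' and gives no proof). The only thing worth adding is that you can get a marginally cleaner constant: since $-p\log_2 p$ is maximized at $p=1/e$ with value $(\log_2 e)/e < 1$, the two excluded terms together contribute at most $2(\log_2 e)/e \approx 1.06$, so the bound $2/l$ is actually not tight; but $2/l$ is what the lemma asks for and your argument establishes it. The convention $0\log 0 = 0$ that you flag is indeed the only point requiring care.
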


We use Lemma \ref{lem:newentropyandoldentropy} to prove the following lemma.

\begin{lemma}
For any $x \in \Sigma^\infty$, $\widetilde{\dim}_{FS}(x)=\dim_{FS}(x)$.
\end{lemma}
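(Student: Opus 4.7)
The plan is to derive the equality directly from the two-sided comparison in Lemma \ref{lem:newentropyandoldentropy}, together with the limit formulation of $\dim_{FS}$ in equation (\ref{eqn:limitdimension}).

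First I would take $\liminf_{n \to \infty}$ across the chain
\begin{align*}
\widetilde{H}_l(x_0^{n-1}) \;\leq\; H_l(x_0^{n-1}) \;\leq\; \widetilde{H}_l(x_0^{n-1}) + \tfrac{2}{l},
\end{align*}
which (for each fixed $l$) yields
\begin{align*}
\liminf_{n \to \infty} \widetilde{H}_l(x_0^{n-1}) \;\leq\; \liminf_{n \to \infty} H_l(x_0^{n-1}) \;\leq\; \liminf_{n \to \infty} \widetilde{H}_l(x_0^{n-1}) + \tfrac{2}{l}.
\end{align*}

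Next I would pass to the outer limits in $l$. By definition, $\widetilde{\dim}_{FS}(x) = \liminf_{l \to \infty} \liminf_{n \to \infty} \widetilde{H}_l(x_0^{n-1})$, while by (\ref{eqn:limitdimension}) we have $\dim_{FS}(x) = \lim_{l \to \infty} \liminf_{n \to \infty} H_l(x_0^{n-1})$, so in particular this $\lim_{l \to \infty}$ exists and coincides with $\liminf_{l \to \infty}$. Taking $\liminf_{l \to \infty}$ in the displayed chain and using that $2/l \to 0$, the lower bound gives $\widetilde{\dim}_{FS}(x) \leq \dim_{FS}(x)$, and the upper bound gives $\dim_{FS}(x) \leq \widetilde{\dim}_{FS}(x) + \lim_{l \to \infty} (2/l) = \widetilde{\dim}_{FS}(x)$. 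The two inequalities together give the required equality.

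The only point that needs care is the compatibility between the $\liminf_{l \to \infty}$ used in the definition of $\widetilde{\dim}_{FS}$ and the $\lim_{l \to \infty}$ form of $\dim_{FS}$; this is precisely what (\ref{eqn:limitdimension}) supplies, so no further argument is required. All other steps are elementary manipulations of $\liminf$ with an additive error that vanishes in $l$.
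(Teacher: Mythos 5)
Your proposal is correct and follows essentially the same route as the paper: both derive the two inequalities from the two-sided comparison in Lemma \ref{lem:newentropyandoldentropy}, pass to $\liminf$ first in $n$ then in $l$, invoke (\ref{eqn:limitdimension}) to reconcile $\lim_l$ with $\liminf_l$, and discard the vanishing $2/l$ error term. No substantive difference.
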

\begin{proof}
From \ref{eqn:limitdimension}, $\dim_{FS}(x)=\lim_{l \to \infty} \liminf_{n \to \infty} H_l(x_0^{n-1})=\liminf_{l \to \infty} \liminf_{n \to \infty} H_l(x_0^{n-1})$. From Lemma \ref{lem:newentropyandoldentropy}, we get that $\widetilde{\dim}_{FS}(x)\leq \dim_{FS}(x)$. Conversely using Lemma \ref{lem:newentropyandoldentropy} we get,
\begin{align*}
\dim_{FS}(x) &= \liminf_{l \to \infty} \liminf_{n \to \infty} H_l(x_0^{n-1})\\
&\leq \liminf_{l \to \infty} \liminf_{n \to \infty} \left( \widetilde{H}_l(x_0^{n-1})+2/l \right)\\
&= \liminf_{l \to \infty}  \left( \liminf_{n \to \infty}\widetilde{H}_l(x_0^{n-1})+2/l \right)\\
&= \liminf_{l \to \infty}   \liminf_{n \to \infty}\widetilde{H}_l(x_0^{n-1})\\
&= \widetilde{\dim}_{FS}(x).
\end{align*}
In the second last equality we used the fact if $a_l$ and $b_l$ are sequences such that $ \lim_{l \to \infty}b_l=0$, then $\liminf_{l \to \infty}(a_l+b_l)=\liminf_{l \to \infty} a_l$. 
\end{proof}

We prove the analogous lemma for finite-state strong dimension.

\begin{lemma}
For any $x \in \Sigma^\infty$, $\widetilde{\Dim}_{FS}(x)=\Dim_{FS}(x)$.
\end{lemma}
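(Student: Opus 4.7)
The plan is to imitate the proof of the preceding lemma verbatim, replacing every $\liminf_{n \to \infty}$ with $\limsup_{n \to \infty}$ while leaving the outer $\liminf_{l \to \infty}$ untouched. The two ingredients from the preceding proof transfer cleanly: the pointwise sandwich
\begin{align*}
\widetilde{H}_l(x_0^{n-1}) \leq H_l(x_0^{n-1}) \leq \widetilde{H}_l(x_0^{n-1}) + 2/l
\end{align*}
from Lemma \ref{lem:newentropyandoldentropy}, and the identity
\begin{align*}
\Dim_{FS}(x) = \lim_{l \to \infty} \limsup_{n \to \infty} H_l(x_0^{n-1}) = \liminf_{l \to \infty} \limsup_{n \to \infty} H_l(x_0^{n-1})
\end{align*}
from \eqref{eqn:limitstrongdimension} (the second equality because the outer limit exists, so $\lim = \liminf$).

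For the inequality $\widetilde{\Dim}_{FS}(x) \leq \Dim_{FS}(x)$, I apply the left half of the sandwich, take $\limsup_{n \to \infty}$ on both sides (this preserves order), and then take $\liminf_{l \to \infty}$. For the converse direction, the right half of the sandwich gives
\begin{align*}
\limsup_{n \to \infty} H_l(x_0^{n-1}) \leq \limsup_{n \to \infty} \widetilde{H}_l(x_0^{n-1}) + \frac{2}{l},
\end{align*}
where I am using that $2/l$ is constant in $n$. Taking $\liminf_{l \to \infty}$ of both sides and invoking the same elementary fact used in the preceding proof—namely $\liminf_l(a_l + b_l) = \liminf_l a_l$ whenever $b_l \to 0$—yields $\Dim_{FS}(x) \leq \widetilde{\Dim}_{FS}(x)$.

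No step here presents any real obstacle: the proof of the previous lemma was already set up modularly so that only the inner limit operation is at play, and the required estimate from Lemma \ref{lem:newentropyandoldentropy} is a two-sided pointwise bound, hence agnostic to whether we subsequently take $\liminf_n$ or $\limsup_n$. The only point of mild caution is to invoke \eqref{eqn:limitstrongdimension} rather than \eqref{eqn:limitdimension} when identifying the outer $\liminf_{l \to \infty}$ with $\Dim_{FS}(x)$, ensuring that the strong-dimension version of the Kozachinskiy--Shen equivalence (with $\limsup$'s) is used.
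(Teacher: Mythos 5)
Your proof is correct and follows exactly the paper's argument: apply the pointwise sandwich from Lemma \ref{lem:newentropyandoldentropy}, take $\limsup_n$ then $\liminf_l$, and absorb the $2/l$ term via the elementary $\liminf$ identity. You have also implicitly caught a slip in the paper's own text, which cites \eqref{eqn:limitdimension} where \eqref{eqn:limitstrongdimension} is the one actually needed.
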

\begin{proof}
From \ref{eqn:limitdimension}, $\Dim_{FS}(x)=\lim_{l \to \infty} \limsup_{n \to \infty} H_l(x_0^{n-1})=\liminf_{l \to \infty} \limsup_{n \to \infty} H_l(x_0^{n-1})$. From Lemma \ref{lem:newentropyandoldentropy}, we get that $\widetilde{\Dim}_{FS}(x)\leq \Dim_{FS}(x)$. Conversely using Lemma \ref{lem:newentropyandoldentropy} we get,
\begin{align*}
\Dim_{FS}(x) &= \liminf_{l \to \infty} \limsup_{n \to \infty} H_l(x_0^{n-1})\\
&\leq \liminf_{l \to \infty} \limsup_{n \to \infty} \left( \widetilde{H}_l(x_0^{n-1})+2/l \right)\\
&= \liminf_{l \to \infty}  \left( \limsup_{n \to \infty}\widetilde{H}_l(x_0^{n-1})+2/l \right)\\
&= \liminf_{l \to \infty}   \limsup_{n \to \infty}\widetilde{H}_l(x_0^{n-1})\\
&= \widetilde{\Dim}_{FS}(x).
\end{align*}
In the second last equality we used the fact if $a_l$ and $b_l$ are sequences such that $ \lim_{l \to \infty}b_l=0$, then $\liminf_{l \to \infty}(a_l+b_l)=\liminf_{l \to \infty} a_l$. 
\end{proof}

Now, for any probability measure $\mu$ on $\Sigma^\infty$, we define
\begin{align*}
\widetilde{\mathbf{H}}_n (\mu)=-\sum_{w \in \Sigma^n \setminus \{0^n,1^n\}}\mu(C_w)\log(\mu(C_w)).
\end{align*}
And using the above definition, we define
\begin{align*}
\widetilde{H}^{+}(\mu)=\limsup_{n \to \infty} \frac{\widetilde{\mathbf{H}}_{n}(\mu)}{n}
\end{align*}
\begin{align*}
\widetilde{H}^{-}(\mu)=\liminf_{n \to \infty} \frac{\widetilde{\mathbf{H}}_{n}(\mu)}{n}.
\end{align*}

Analogous to Lemma \ref{lem:newentropyandoldentropy}, we now have the following lemma which easily follows from the definitions.

\begin{lemma}
\label{lem:newaverageentropyandoldaverageentropy}	
$\widetilde{\mathbf{H}}_n (\mu) \leq \mathbf{H}_n (\mu) \leq \widetilde{\mathbf{H}}_n (\mu) +2$.
\end{lemma}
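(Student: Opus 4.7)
The plan is to split the claim into its two inequalities and observe that each follows directly from elementary bounds on the function $f(p) = -p\log p$ for $p \in [0,1]$.

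For the lower inequality $\widetilde{\mathbf{H}}_n(\mu) \leq \mathbf{H}_n(\mu)$, I would simply observe that
\begin{align*}
\mathbf{H}_n(\mu) = \widetilde{\mathbf{H}}_n(\mu) + \bigl(-\mu(C_{0^n})\log\mu(C_{0^n})\bigr) + \bigl(-\mu(C_{1^n})\log\mu(C_{1^n})\bigr),
\end{align*}
and each of the two extra summands is non-negative since $-p\log p \geq 0$ whenever $p \in [0,1]$. So the lower inequality is immediate.

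For the upper inequality, I would use the same decomposition and bound each of the two extra summands by $1$. Specifically, the function $f(p) = -p\log_2 p$ attains its maximum on $[0,1]$ at $p = 1/e$, where $f(1/e) = (\log_2 e)/e < 1$. Therefore both $-\mu(C_{0^n})\log\mu(C_{0^n})$ and $-\mu(C_{1^n})\log\mu(C_{1^n})$ are at most $1$, so their sum is at most $2$, giving $\mathbf{H}_n(\mu) \leq \widetilde{\mathbf{H}}_n(\mu) + 2$.

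There is no real obstacle here; the lemma is essentially a bookkeeping statement that the two strings $0^n$ and $1^n$ contribute a bounded additive amount to the block entropy, independent of $n$. The only minor point to be careful about is the convention $0 \log 0 = 0$, which is implicit in the definition of $\mathbf{H}_n$, and which ensures both summands are well-defined and non-negative even when $\mu(C_{0^n})$ or $\mu(C_{1^n})$ vanishes.
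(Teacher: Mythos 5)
Your proof is correct and takes the same straightforward route the paper intends (the paper states the lemma without proof, saying it "easily follows from the definitions"). The decomposition, the non-negativity of $-p\log p$ for the lower bound, and the bound $-p\log_2 p \le (\log_2 e)/e < 1$ for the upper bound are all exactly the right observations, and your note about the $0\log 0 = 0$ convention is appropriate.
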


Using Lemma \ref{lem:newaverageentropyandoldaverageentropy}, we prove the following lemmas.

\begin{lemma}
\label{lem:newlowerentropyandoldlowerentropy}
For any $\mu$ on $\Sigma^\infty$, $\widetilde{H}^{-}(\mu)=H^{-}(\mu)$.
\end{lemma}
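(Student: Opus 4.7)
The plan is to argue exactly as in the proofs of the preceding two lemmas (showing $\widetilde{\dim}_{FS}(x)=\dim_{FS}(x)$ and $\widetilde{\Dim}_{FS}(x)=\Dim_{FS}(x)$), but one level more abstract, working directly with the measure-theoretic quantities $\mathbf{H}_n(\mu)$ and $\widetilde{\mathbf{H}}_n(\mu)$ instead of the empirical ones $H_l$ and $\widetilde{H}_l$. The key tool is the two-sided pointwise sandwich provided by Lemma \ref{lem:newaverageentropyandoldaverageentropy}, namely
\begin{align*}
\widetilde{\mathbf{H}}_n (\mu) \;\leq\; \mathbf{H}_n (\mu) \;\leq\; \widetilde{\mathbf{H}}_n (\mu) + 2.
\end{align*}

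First, I divide through by $n$ to obtain $\widetilde{\mathbf{H}}_n(\mu)/n \leq \mathbf{H}_n(\mu)/n \leq \widetilde{\mathbf{H}}_n(\mu)/n + 2/n$ for every $n$. Taking $\liminf$ on the left inequality immediately gives $\widetilde{H}^{-}(\mu) \leq H^{-}(\mu)$. For the reverse direction I take $\liminf$ on the right inequality and use the elementary fact (already invoked implicitly in the preceding proofs) that if $a_n$ and $b_n$ are real sequences with $b_n \to 0$, then $\liminf_{n\to\infty}(a_n + b_n) = \liminf_{n\to\infty} a_n$. Applying this with $a_n = \widetilde{\mathbf{H}}_n(\mu)/n$ and $b_n = 2/n$ yields
\begin{align*}
H^{-}(\mu) \;=\; \liminf_{n \to \infty} \frac{\mathbf{H}_n(\mu)}{n} \;\leq\; \liminf_{n \to \infty} \left( \frac{\widetilde{\mathbf{H}}_n(\mu)}{n} + \frac{2}{n} \right) \;=\; \liminf_{n \to \infty} \frac{\widetilde{\mathbf{H}}_n(\mu)}{n} \;=\; \widetilde{H}^{-}(\mu).
\end{align*}
Combining the two inequalities gives the claimed equality $\widetilde{H}^{-}(\mu) = H^{-}(\mu)$.

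There is no real obstacle: the proof is essentially a two-line squeeze argument, and the only thing to be careful about is the standard $\liminf$ manipulation above, which is exactly the step flagged as ``the fact that\ldots'' at the end of the preceding two lemmas' proofs. The same blueprint will also prove the companion statement $\widetilde{H}^{+}(\mu) = H^{+}(\mu)$ (by replacing $\liminf$ with $\limsup$), which will presumably be the next lemma in the paper.
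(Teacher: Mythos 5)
Your proof is correct and follows essentially the same route as the paper's: both directions are obtained from the sandwich inequality of Lemma \ref{lem:newaverageentropyandoldaverageentropy} together with the observation that $\liminf_{n}(a_n+b_n)=\liminf_n a_n$ when $b_n\to 0$.
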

\begin{proof}
From Lemma \ref{lem:newaverageentropyandoldaverageentropy}, it easily follows that $\widetilde{H}^{-}(\mu)\leq H^{-}(\mu)$. Conversely, using Lemma \ref{lem:newaverageentropyandoldaverageentropy} we get,
\begin{align*}
H^-(\mu) &=	\liminf_{n \to \infty} \frac{\mathbf{H}_{n}(\mu)}{n} \\
&\leq \liminf_{n \to \infty} \left(\frac{\widetilde{\mathbf{H}}_n(\mu)}{n} + \frac{2}{n}\right)\\
&= \liminf_{n \to \infty} \frac{\widetilde{\mathbf{H}}_{n}(\mu)}{n} \\
&= \widetilde{H}^{-}(\mu).
\end{align*}
In the second last equality we used the fact if $a_n$ and $b_n$ are sequences such that $ \lim_{n \to \infty}b_n=0$, then $\liminf_{n \to \infty}(a_n+b_n)=\liminf_{n \to \infty} a_n$. 
\end{proof}

\begin{lemma}
\label{lem:newupperentropyandoldupperentropy}
For any $\mu$ on $\Sigma^\infty$, $\widetilde{H}^{+}(\mu)=H^{+}(\mu)$.
\end{lemma}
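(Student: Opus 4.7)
The plan is to repeat verbatim the proof of Lemma \ref{lem:newlowerentropyandoldlowerentropy}, with every occurrence of $\liminf$ replaced by $\limsup$. The only ingredient needed is the two-sided sandwich bound
\begin{align*}
\widetilde{\mathbf{H}}_n(\mu) \;\leq\; \mathbf{H}_n(\mu) \;\leq\; \widetilde{\mathbf{H}}_n(\mu) + 2
\end{align*}
provided by Lemma \ref{lem:newaverageentropyandoldaverageentropy}, together with the elementary fact that a vanishing additive perturbation does not affect the $\limsup$ of a real sequence.

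First I would establish the easy direction. Dividing the left half of the sandwich by $n$ and taking $\limsup_{n \to \infty}$ of both sides gives $\widetilde{H}^{+}(\mu) \leq H^{+}(\mu)$ immediately, with no further argument required.

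For the reverse direction, I would divide the right half of the sandwich by $n$ to obtain
\begin{align*}
\frac{\mathbf{H}_n(\mu)}{n} \;\leq\; \frac{\widetilde{\mathbf{H}}_n(\mu)}{n} + \frac{2}{n},
\end{align*}
take $\limsup_{n \to \infty}$ of both sides, and invoke the observation that if $\lim_{n \to \infty} b_n = 0$ then $\limsup_{n \to \infty}(a_n + b_n) = \limsup_{n \to \infty} a_n$. This collapses the right-hand side to $\widetilde{H}^{+}(\mu)$ and yields $H^{+}(\mu) \leq \widetilde{H}^{+}(\mu)$. Combining the two inequalities gives the desired equality.

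There is essentially no obstacle: the whole argument is a mechanical transcription of the lower-entropy case, and both directions are one-line consequences of the sandwich bound. The only subtlety worth flagging is the $\limsup$ identity used in the second step, which is exactly the analogue of the $\liminf$ identity already cited in the proof of Lemma \ref{lem:newlowerentropyandoldlowerentropy}.
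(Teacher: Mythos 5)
Your proof is correct and matches the paper's argument essentially verbatim: both directions follow from the sandwich bound of Lemma \ref{lem:newaverageentropyandoldaverageentropy} together with the $\limsup$ invariance under a vanishing additive perturbation. (Incidentally, the paper's own proof has a typo in its final line, writing $\widetilde{H}^{-}(\mu)$ where $\widetilde{H}^{+}(\mu)$ is meant; your version avoids this.)
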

\begin{proof}
From Lemma \ref{lem:newaverageentropyandoldaverageentropy}, it easily follows that $\widetilde{H}^{+}(\mu)\leq H^{+}(\mu)$. Conversely, using Lemma \ref{lem:newaverageentropyandoldaverageentropy} we get,
\begin{align*}
H^+(\mu) &=	\limsup_{n \to \infty} \frac{\mathbf{H}_{n}(\mu)}{n} \\
&\leq \limsup_{n \to \infty} \left(\frac{\widetilde{\mathbf{H}}_n(\mu)}{n} + \frac{2}{n}\right)\\
&= \limsup_{n \to \infty} \frac{\widetilde{\mathbf{H}}_{n}(\mu)}{n} \\
&= \widetilde{H}^{-}(\mu).
\end{align*}
In the second last equality we used the fact if $a_n$ and $b_n$ are sequences such that $ \lim_{n \to \infty}b_n=0$, then $\limsup_{n \to \infty}(a_n+b_n)=\limsup_{n \to \infty} a_n$. 
\end{proof}

Now, we prove one of the major technical lemmas used in the proof of Theorem \ref{thm:nonconvergentcaseweylcriterion}. In the following lemma, for any $x \in \Sigma^\infty$, let $\langle \nu_n \rangle_{n=1}^{\infty}$ be the sequence of averages of Dirac measures on $\Sigma^\infty$ constructed out of the sequence $\langle T^n x  \rangle_{n=0}^{\infty}$ and let $\langle \nu'_n \rangle_{n=1}^{\infty}$ be the sequence of averages of Dirac measures on $\T$ constructed out of the sequence $\langle 2^n v(x) \Mod 1  \rangle_{n=0}^{\infty}$.

\begin{lemma}
Let $x \in \Sigma^\infty$ be such that $v(x)$ is not a dyadic rational. Let $\mathcal{W}_x$ be the collection of all subsequence weak limits of $\langle \nu_n \rangle_{n=1}^{\infty}$ and let $\widehat{\mathcal{W}}_x$ be the set constructed in the statement of Theorem \ref{thm:nonconvergentcaseweylcriterion}. Then, 
\begin{align*}
\inf_{\mu \in \mathcal{W}_x} H^{-}(\mu)=\inf_{\hat{\mu} \in
  \widehat{\mathcal{W}}_x} H^{-}(\hat{\mu}) \quad\text{and}\quad
\sup_{\mu \in \mathcal{W}_x} H^{+}(\mu)
=\sup_{\hat{\mu} \in \widehat{\mathcal{W}}_x} H^{+}(\hat{\mu}).
\end{align*}
\end{lemma}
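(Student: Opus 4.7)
The plan is to exhibit a correspondence between $\mathcal{W}_x$ and $\widehat{\mathcal{W}}_x$ under which the lower and upper average entropies are preserved, so that the infima and suprema coincide. The key observation is that if two probability measures $\mu_1, \mu_2$ on $\Sigma^\infty$ agree on every cylinder $C_w$ with $w \neq 0^{\lvert w \rvert}$ and $w \neq 1^{\lvert w \rvert}$, then $\widetilde{\mathbf{H}}_n(\mu_1) = \widetilde{\mathbf{H}}_n(\mu_2)$ for every $n$, and hence by Lemmas \ref{lem:newlowerentropyandoldlowerentropy} and \ref{lem:newupperentropyandoldupperentropy}, $H^-(\mu_1) = H^-(\mu_2)$ and $H^+(\mu_1) = H^+(\mu_2)$. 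Thus the proof reduces to matching elements of the two sets up to disagreement on the ``extreme'' cylinders $C_{0^l}$ and $C_{1^l}$.

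First, I would show that every $\hat{\mu} \in \widehat{\mathcal{W}}_x$ has a partner in $\mathcal{W}_x$ with the same $H^-$ and $H^+$. Fix $\hat{\mu}$ arising from the lift of some $\mu$ on $\T$ obtained as a subsequence limit of Weyl averages along $\langle n_m \rangle_{m=0}^\infty$. By Theorem \ref{thm:weylconvergenceimpliescylinderconvergence}, $\nu_{n_m}(C_w) \to \hat{\mu}(C_w)$ for every $w \neq 0^{\lvert w \rvert}, 1^{\lvert w \rvert}$. Applying Prokhorov's theorem on the compact space $\Sigma^\infty$, pass to a further subsequence $\langle n_{m_j} \rangle$ along which $\nu_{n_{m_j}} \wto \mu^\ast$ for some $\mu^\ast \in \mathcal{W}_x$. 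By Lemma \ref{lem:cylindersetconvergenceandweakconvergence}, $\mu^\ast(C_w) = \lim_j \nu_{n_{m_j}}(C_w) = \hat{\mu}(C_w)$ for every such $w$. The key observation gives $H^-(\mu^\ast) = H^-(\hat{\mu})$ and $H^+(\mu^\ast) = H^+(\hat{\mu})$, so $\inf_{\mu \in \mathcal{W}_x} H^-(\mu) \leq \inf_{\hat{\mu} \in \widehat{\mathcal{W}}_x} H^-(\hat{\mu})$ and $\sup_{\mu \in \mathcal{W}_x} H^+(\mu) \geq \sup_{\hat{\mu} \in \widehat{\mathcal{W}}_x} H^+(\hat{\mu})$.

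For the reverse direction, take any $\mu^\ast \in \mathcal{W}_x$ coming from $\nu_{n_m} \wto \mu^\ast$. Since the map $y \mapsto e^{2\pi i k v(y)}$ is continuous on $\Sigma^\infty$, weak convergence gives $\frac{1}{n_m}\sum_{j=0}^{n_m-1} e^{2\pi i k v(T^j x)} = \int e^{2\pi i k v(y)} d\nu_{n_m} \to \int e^{2\pi i k v(y)} d\mu^\ast$ for every $k \in \Z$. Because $v(x)$ is not a dyadic rational, Theorem \ref{thm:weylconvergenceimpliescylinderconvergence} produces a unique measure $\mu$ on $\T$ with these limits as Fourier coefficients and with $\nu_{n_m}(C_w) \to \hat{\mu}(C_w)$ for every $w \neq 0^{\lvert w \rvert}, 1^{\lvert w \rvert}$; in particular $\hat{\mu} \in \widehat{\mathcal{W}}_x$. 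Combined with the fact that $\mu^\ast(C_w) = \lim_m \nu_{n_m}(C_w)$ on the same cylinders, we again obtain $H^-(\mu^\ast) = H^-(\hat{\mu})$ and $H^+(\mu^\ast) = H^+(\hat{\mu})$, yielding the opposite inequalities.

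The main obstacle, such as it is, lies in the careful bookkeeping that the strings $0^l$ and $1^l$ cause exactly the discrepancy that the $\widetilde{H}$ machinery is designed to absorb: the Weyl averages determine $\hat{\mu}$ on cylinders $C_w$ only when $w$ is neither $0^{\lvert w \rvert}$ nor $1^{\lvert w \rvert}$ (since $v(0^\infty) = v(1^\infty) = 0$ collapses the two endpoints of every dyadic interval), while the weak limit $\mu^\ast$ on $\Sigma^\infty$ sees these two extreme cylinders as independent. The fact that the contribution of the two discarded strings to the $l$-block entropy is $O(1/l)$ and vanishes in the limit is exactly what Lemmas \ref{lem:newlowerentropyandoldlowerentropy} and \ref{lem:newupperentropyandoldupperentropy} encode, so the argument then reduces to the bookkeeping sketched above.
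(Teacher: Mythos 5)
Your proposal is correct and follows essentially the same route as the paper: match each $\mu\in\mathcal{W}_x$ to some $\hat{\mu}\in\widehat{\mathcal{W}}_x$ (and vice versa) that agrees with it on all non-extreme cylinders, then invoke Lemmas \ref{lem:newlowerentropyandoldlowerentropy} and \ref{lem:newupperentropyandoldupperentropy} to conclude the average entropies coincide. The one small economy in your write-up is in the ``$\mathcal{W}_x\to\widehat{\mathcal{W}}_x$'' direction, where you observe that $\nu_{n_m}\wto\mu^\ast$ already forces the Weyl averages to converge and then invoke Theorem \ref{thm:weylconvergenceimpliescylinderconvergence} once, whereas the paper re-runs the torus-side Prokhorov/Portmanteau steps explicitly; this is a slight streamlining, not a different argument.
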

\begin{proof}
In order to show these equalities, it is enough to show that $\{H^-(\mu) \mid \mu \in \mathcal{W}_x\}=\{H^-(\hat{\mu}) \mid \hat{\mu} \in \widehat{\mathcal{W}}_x\}$ and $\{H^+(\mu) \mid \mu \in \mathcal{W}_x\}=\{H^+(\hat{\mu}) \mid \hat{\mu} \in \widehat{\mathcal{W}}_x\}$. In order to show these equalities, we show the following:
\begin{enumerate}
\itemsep=0em
	\item\label{item:newwandoldwlemmacondition1} For every $\mu \in \mathcal{W}_x$, there exists a $\hat{\rho} \in \widehat{\mathcal{W}}_x$ such that $\hat{\rho}(C_w)=\mu(C_w)$ for every $w \neq 1^{\lvert w \rvert}$ and $w \neq 0^{\lvert w \rvert}$.
	\item\label{item:newwandoldwlemmacondition2} For every $\hat{\mu} \in \widehat{\mathcal{W}}_x$, there exists a $\rho \in \mathcal{W}_x$ such that $\hat{\mu}(C_w)=\rho(C_w)$ for every $w \neq 1^{\lvert w \rvert}$ and $w \neq 0^{\lvert w \rvert}$.
\end{enumerate}
If we show the above conditions to be true, then the required conclusion follows from Lemma \ref{lem:newlowerentropyandoldlowerentropy} and Lemma \ref{lem:newupperentropyandoldupperentropy}. We first show condition \ref{item:newwandoldwlemmacondition1}. 

Let $\mu \in \mathcal{W}_x$ and let $\langle n_m \rangle_{m \in \N}$ be such that $\nu_{n_m} \wto \mu$. Hence, using Lemma \ref{lem:cylindersetconvergenceandweakconvergence}, $\lim_{m \to \infty} \nu_{n_m}(C_w) = \mu(C_w)$ for every $w \in \Sigma^*$. Since $v(x)$ is not a dyadic rational, we know from the proof of Theorem \ref{thm:weylconvergenceimpliescylinderconvergence} that for every $w \in \Sigma^*$ and $m \geq 1$, $\nu_{n_m}(C_w) = \nu'_{n_m}(I_w)$. Hence, we get $\lim_{m \to \infty} \nu'_{n_m}(I_w) = \mu(C_w)$ for every $w \in \Sigma^*$. Now, let $\rho$ be a subsequence weak limit of $\langle \nu'_{n_m} \rangle_{m \in \N}$ which exists due to Prokhorov's theorem. Consider any $w$ such that $w \neq 1^{\lvert w \rvert}$ and $w \neq 0^{\lvert w \rvert}$. Now, using Lemma \ref{lem:nonzerodyadicpointsarenotlimitpoints}, we get that $\rho(\{v(w0^\infty)\})=\rho(\{v(w1^\infty)\})=0$. Since $v(w0^\infty)$ and $v(w1^\infty)$ are the end points of $I_w$, using Lemma \ref{lem:weakconvergenceimpliesdyadicconvergence} we get that $\lim_{m \to \infty}\nu'_{n_m}(I_w)=\rho(I_w)$ for every $w \neq 1^{\lvert w \rvert}$ and $w \neq 0^{\lvert w \rvert}$. Since $\lim_{m \to \infty} \nu'_{n_m}(I_w) = \mu(C_w)$, this implies that $\rho(I_w)=\mu(C_w)$ for every $w \neq 1^{\lvert w \rvert}$ and $w \neq 0^{\lvert w \rvert}$. Since $\rho$ is a subsequence weak limit of $\langle \nu'_{n_m} \rangle_{m \in \N}$, due to Theorem \ref{thm:weakconvergenceequivalencetorus}, we get that $\hat{\rho} \in \widehat{\mathcal{W}}_x$. Now, condition \ref{item:newwandoldwlemmacondition2} follows since $\hat{\rho}$ is a measure in $\widehat{\mathcal{W}}_x$ with the required property.

Now, we show condition \ref{item:newwandoldwlemmacondition2}. Let $\hat{\mu} \in \widehat{\mathcal{W}}_x$. From the definition of $\widehat{\mathcal{W}}_x$ we can infer that there exists $\langle n_m \rangle_{m \in \N}$ such that for every $k \in \Z$, $\lim_{m \to \infty}\frac{1}{n_m}\sum_{j=0}^{n_m-1} e^{2 \pi i k (v(T^j x)) }  = \int e^{2\pi i k y} d\mu$. From Theorem \ref{thm:weylconvergenceimpliescylinderconvergence}, we get that $\lim_{m \to \infty}\nu_{n_m}(C_w)=\hat{\mu}(C_w)$ for every $w \neq 1^{\lvert w \rvert}$ and $w \neq 0^{\lvert w \rvert}$. Let $\rho$ be any subsequence weak limit of $\langle \nu_{n_m}\rangle_{m \in \N}$ which exists due to Prokhorov's Theorem. Then, using Lemma \ref{lem:cylindersetconvergenceandweakconvergence} we get that $\rho(C_w) = \hat{\mu}(C_w)$ for every $w \neq 1^{\lvert w \rvert}$ and $w \neq 0^{\lvert w \rvert}$. Since $\rho \in \mathcal{W}_x$, condition \ref{item:newwandoldwlemmacondition2} is true.
\end{proof} 

Now we prove Theorem \ref{thm:nonconvergentcaseweylcriterion}
\begin{proof}[Proof of Theorem \ref{thm:nonconvergentcaseweylcriterion}]

If $v(x)$ is a dyadic rational in $\T$, then it can be easily verified that the Weyl averages are convergent to $1$. The unique measure having all Fourier coefficients equal to $1$ over $\T$ is $\delta_0$. Since $\hat{\delta}_0=\delta_{0^\infty}$, it can be easily verified that $\dim_{FS}(x)=H^{-}(\delta_{0^\infty})=H^+(\delta_{0^\infty})=\Dim_{FS}(x)=0$. Hence, we consider the case when $v(x)$ is not a dyadic rational. We first define analogues of finite-state dimension by avoiding the strings $0^l$ and $1^l$ for all $l$ in calculating the sliding entropies. We define $\widetilde{H}_l(x_0^{n-1}) $ to be the normalized sliding entropy over $x_0^{n-1}$ as in the definition of  $H_l(x_0^{n-1})$, except that the summation is taken over $\Sigma^l \setminus \{0^l,1^l\}$ instead of $\Sigma^l$. Using this notion, we define $\widetilde{\dim}_{FS}(x)=\liminf_{l \to \infty} \liminf_{n \to \infty} \widetilde{H}_l(x_0^{n-1})$ and $\widetilde{\Dim}_{FS}(x)=\liminf_{l \to \infty} \limsup_{n \to \infty} \widetilde{H}_l(x_0^{n-1})$. Since, $\widetilde{H}_l(x_0^{n-1}) \leq H_l(x_0^{n-1}) \leq \widetilde{H}_l(x_0^{n-1})+2/l$, it can be shown using routine arguments that $\dim_{FS}(x)=\widetilde{\dim}_{FS}(x)$ and $\Dim_{FS}(x)=\widetilde{\Dim}_{FS}(x)$. Similarly we define $\widetilde{H}^+$ and $\widetilde{H}^-$ by reducing the range of the sum in the definition of $\mathbf{H}_l$ to $\Sigma^l \setminus \{0^l,1^l\}$ instead of $\Sigma^l$. Using a similar argument as in the case of sliding entropy, it can be shown that $\widetilde{H}^+$ and $\widetilde{H}^-$ are the same as $H^+$ and $H^-$ for any measure on $\Sigma^\infty$. Let $\langle \nu_n \rangle_{n=1}^{\infty}$ be the sequence of averages of Dirac measures on $\Sigma^\infty$ constructed out of the sequence $\langle T^n x  \rangle_{n=0}^{\infty}$. Let $\mathcal{W}_x$ be the set of all weak limits of $\nu_n$ as constructed in Theorem \ref{thm:weylcriterionforfsd}. Since $v(x)$ is not a dyadic rational, using Prokhorov's theorem for weak convergence of $\T$ and weak convergence over $\Sigma^\infty$, it can be shown that, $\inf_{\mu \in \mathcal{W}_x} H^{-}(\mu)=\inf_{\hat{\mu} \in \widehat{\mathcal{W}}_x} H^{-}(\hat{\mu})$ and $\sup_{\mu \in \mathcal{W}_x} H^{+}(\mu)
=\sup_{\hat{\mu} \in \widehat{\mathcal{W}}_x} H^{+}(\hat{\mu})$. The claim now follows from Theorem \ref{thm:weylcriterionforfsd}.
\end{proof}

Using Theorem \ref{thm:nonconvergentcaseweylcriterion}, we get the following theorem in the case when the Weyl averages are convergent.

\begin{theorem}[Weyl's criterion for convergent Weyl averages]
\label{thm:weylaveragesconvergentcase}
Let $x \in \Sigma^\infty$. If there exist $c_k \in \mathbb{C}$ for $k \in
\Z$ such that $\frac{1}{n}\sum_{j=0}^{n-1} e^{2 \pi i k (v(T^j x))} \to
c_k$ as $n \to \infty$, then, there exists a unique measure $\mu$ on
$\T$ such that for every $k$, $c_k = \int e^{2 \pi i k y }
d\mu$. Furthermore, $\dim_{FS}(x) =\Dim_{FS}(x)= H^{-}(\hat{\mu}) =
H^{+}(\hat{\mu})$.
\end{theorem}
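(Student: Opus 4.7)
The plan is to derive this theorem as a direct corollary of Theorem \ref{thm:nonconvergentcaseweylcriterion}, where the extra input is that in the convergent case the set $\widehat{\mathcal{W}}_x$ collapses to a singleton and the unique measure in it turns out to be shift-invariant. First I would dispose of the easy case $v(x) \in \mathbb{D}$ exactly as in the proof sketch of Theorem \ref{thm:nonconvergentcaseweylcriterion}: the Weyl averages all converge to $1$, Bochner's uniqueness forces $\mu = \delta_0$, and a direct computation gives $\dim_{FS}(x) = \Dim_{FS}(x) = H^-(\delta_{0^\infty}) = H^+(\delta_{0^\infty}) = 0$. In the main case $v(x) \notin \mathbb{D}$, the hypothesis that the full Weyl averages converge for every $k$ forces every subsequence to yield the same constants $c_k$; Theorem \ref{thm:weylconvergenceimpliescylinderconvergence} together with Bochner's uniqueness then produces a single measure $\mu$ on $\T$ with $c_k = \int e^{2\pi i k y}\, d\mu$ that serves as the limit along every subsequence. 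Hence $\widehat{\mathcal{W}}_x = \{\hat{\mu}\}$, and Theorem \ref{thm:nonconvergentcaseweylcriterion} immediately yields $\dim_{FS}(x) = H^-(\hat{\mu})$ and $\Dim_{FS}(x) = H^+(\hat{\mu})$.

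The central remaining step, and what I expect to be the main obstacle, is to show $H^-(\hat{\mu}) = H^+(\hat{\mu})$; my plan is to establish this by proving that $\hat{\mu}$ is shift-invariant on $\Sigma^\infty$. Let $S \colon \T \to \T$ denote the doubling map $y \mapsto 2y \bmod 1$, which is continuous on $\T$ viewed as the unit circle. The averages $\nu'_n = n^{-1}\sum_{j=0}^{n-1}\delta_{S^j v(x)}$ converge weakly to $\mu$ by Theorem \ref{thm:weakconvergenceequivalencetorus}, so a standard Krylov--Bogolyubov computation shows $S_*\mu = \mu$: for any continuous $f$ on $\T$,
\begin{align*}
\int (f \circ S - f)\, d\nu'_n = \frac{1}{n}\bigl(f(S^n v(x)) - f(v(x))\bigr) \to 0,
\end{align*}
so $\int f\, d(S_*\mu) = \int f\, d\mu$. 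Since $S^{-1}(I_w) = I_{0w} \sqcup I_{1w}$ and $T^{-1}(C_w) = C_{0w} \sqcup C_{1w}$, the identity
\begin{align*}
\hat{\mu}(T^{-1}C_w) = \mu(I_{0w}) + \mu(I_{1w}) = \mu(S^{-1}I_w) = \mu(I_w) = \hat{\mu}(C_w)
\end{align*}
holds for every $w \in \Sigma^*$, and extends via the standard $\pi$-$\lambda$ argument (cylinders form a $\pi$-system generating $\mathcal{B}(\Sigma^\infty)$) to shift-invariance of $\hat{\mu}$ on all Borel sets.

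Finally, for a shift-invariant $\hat{\mu}$, Shannon entropy is subadditive: $\mathbf{H}_{n+m}(\hat{\mu}) \leq \mathbf{H}_n(\hat{\mu}) + \mathbf{H}_m(\hat{\mu})$, because the joint entropy of the first $n+m$ coordinates is bounded by the sum of the entropies of its two marginals, and shift-invariance makes the marginal on positions $[n, n+m)$ equal in distribution to the marginal on $[0, m)$. Fekete's subadditivity lemma then yields $\lim_n \mathbf{H}_n(\hat{\mu})/n = \inf_n \mathbf{H}_n(\hat{\mu})/n$, so $H^-(\hat{\mu}) = H^+(\hat{\mu})$, and the full chain $\dim_{FS}(x) = \Dim_{FS}(x) = H^-(\hat{\mu}) = H^+(\hat{\mu})$ follows. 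The subtle point in this plan is the verification of shift-invariance: the doubling map and the shift are only conjugate away from the dyadic rationals, but this is already accommodated by the structural results developed in Theorems \ref{thm:weakconvergenceequivalencetorus} and \ref{thm:weylconvergenceimpliescylinderconvergence}, since the cylinder identity we need refers only to preimages of dyadic intervals and not to pointwise behavior at dyadic endpoints.
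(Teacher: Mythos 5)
Your proposal is correct, and its opening moves mirror the paper's: reduce to $v(x)\notin\mathbb{D}$, observe that every subsequence of the full Weyl averages inherits the same limits $c_k$, invoke Theorem~\ref{thm:weylconvergenceimpliescylinderconvergence} and Bochner uniqueness to get a single limit measure $\mu$, conclude $\widehat{\mathcal{W}}_x=\{\hat{\mu}\}$, and read off $\dim_{FS}(x)=H^-(\hat{\mu})$ and $\Dim_{FS}(x)=H^+(\hat{\mu})$ from Theorem~\ref{thm:nonconvergentcaseweylcriterion}. Where you diverge is in establishing the remaining equality $\dim_{FS}(x)=\Dim_{FS}(x)$. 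The paper does this \emph{without} touching $\hat{\mu}$ at all: it notes that $P(x_0^{n-1},w)\to\hat{\mu}(C_w)$ for all $w\neq 0^{|w|},1^{|w|}$, so that the modified entropy $\widetilde{H}_l(x_0^{n-1})$ actually converges in $n$ for each fixed $l$, forcing $\liminf_n=\limsup_n$, hence $\widetilde{\dim}_{FS}=\widetilde{\Dim}_{FS}$, and then transfers this back to $\dim_{FS}=\Dim_{FS}$ via the $\widetilde{H}$-versus-$H$ comparison lemmas. You instead prove $H^-(\hat{\mu})=H^+(\hat{\mu})$ directly by showing $\hat{\mu}$ is $T$-invariant (a Krylov--Bogolyubov telescoping argument for $S_*\mu=\mu$ on $\T$, then transferred to $\Sigma^\infty$ via $S^{-1}I_w = I_{0w}\sqcup I_{1w}$) and then citing subadditivity of $\mathbf{H}_n$ plus Fekete. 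Both arguments are sound. Your route is conceptually cleaner and identifies the structural reason for regularity (the limit measure is stationary, so its entropy rate exists); the paper's route is more local and avoids having to establish invariance, but in fact the paper does essentially your invariance argument later, in Lemma~\ref{lem:fsdofmunormal}, where it needs $\mu$-normality to imply $T$-invariance of $\mu$. So in a sense you have front-loaded a fact the paper uses in a later section, buying a shorter proof of regularity here at the cost of proving invariance first.
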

\begin{proof}
Since the Weyl averages are convergent, any subsequence shall also
converge to the same limit. This implies that $\widehat{\mathcal{W}}_x$ is a singleton set and hence from Theorem
\ref{thm:nonconvergentcaseweylcriterion} it follows that $\dim_{FS}(x)
= H^{-}(\hat{\mu})$ and $\Dim_{FS}(x) = H^{+}(\hat{\mu})$. Let
$\widetilde{H}_l(x_0^{n-1})$ be defined as in the proof of Theorem
\ref{thm:nonconvergentcaseweylcriterion}. From the remarks at the
start of this section and Lemma \ref{lem:deltameasuresandcoutning}, we
have $\lim_{n \to \infty}\nu_n(C_w)=\lim_{n \to \infty}
P(x_0^{n-1},w)=\hat{\mu}(C_w)$ for every $w \neq 0^{\lvert w \rvert}$
and $w \neq 1^{\lvert w \rvert}$. Hence, $\limsup_{n \to \infty}
\widetilde{H}_l(x_0^{n-1})=\liminf_{n \to \infty}
\widetilde{H}_l(x_0^{n-1})$. Therefore, $\widetilde{\dim}_{FS}(x)=\widetilde{\Dim}_{FS}(x)$. Now, $\dim_{FS}(x)=\Dim_{FS}(x)$ follows
because $\dim_{FS}(x)=\widetilde{\dim}_{FS}(x)$ and $\Dim_{FS}(x)=\widetilde{\Dim}_{FS}(x)$ as given in the proof of Theorem
\ref{thm:nonconvergentcaseweylcriterion}.
\end{proof}

As a special case, we derive Weyl's criterion for normality, i.e,
for sequences $x$ such that $\dim_{FS}(x)=\Dim_{FS}(x)=1$ as a special
case of Theorem \ref{thm:weylcriterionforfsd} and Theorem \ref{thm:weylaveragesconvergentcase}. 

%Since $x$ is normal if
%and only if it satisfies Weyl's criterion, the following is a Fourier
%analytic proof of Schnorr and Stimm's result.
\begin{theorem}
\label{thm:normalweylcriterionasaspecialcase}
Let $x \in \Sigma^\infty$. Then $\lim_{n \to
  \infty}\frac{1}{n}\sum_{j=0}^{n-1} e^{2 \pi i k (v(T^j x)) }  = 0$
for every $k \in \Z$ if and only $\dim_{FS}(x)=\Dim_{FS}(x)=1$. 
\end{theorem}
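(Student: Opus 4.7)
The plan is to deduce this as a direct corollary of the two previous criteria: Theorem \ref{thm:weylaveragesconvergentcase} handles the implication from Weyl convergence to dimension, and Theorem \ref{thm:weylcriterionforfsd} handles the reverse. The only new ingredient I need is a rigidity fact: a shift-invariant probability measure on $\Sigma^\infty$ with maximal entropy rate $1$ must be the uniform Bernoulli measure.

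For the $(\Rightarrow)$ direction, assume $n^{-1}\sum_{j=0}^{n-1} e^{2\pi i k v(T^j x)} \to 0$ for every $k\ne 0$ (the $k=0$ average is trivially $1$). Theorem \ref{thm:weylaveragesconvergentcase} yields a unique measure $\mu$ on $\T$ with Fourier coefficients $c_0=1$ and $c_k=0$ for $k\ne 0$. By uniqueness in Bochner's theorem, $\mu$ is Lebesgue measure on $\T$, so its lift $\hat\mu$ to $\Sigma^\infty$ is the uniform Bernoulli-$(1/2)$ measure, giving $\mathbf{H}_n(\hat\mu)=n$ for every $n$ and hence $H^{-}(\hat\mu)=H^{+}(\hat\mu)=1$. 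The final clause of Theorem \ref{thm:weylaveragesconvergentcase} then gives $\dim_{FS}(x)=\Dim_{FS}(x)=1$.

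For the $(\Leftarrow)$ direction, suppose $\dim_{FS}(x)=\Dim_{FS}(x)=1$. By Theorem \ref{thm:weylcriterionforfsd} together with the pointwise bound $H^-(\mu)\le H^+(\mu)\le 1$ on the binary Cantor space, every $\mu\in\mathcal{W}_x$ satisfies $H^-(\mu)=H^+(\mu)=1$. Each such $\mu$ is automatically $T$-invariant, because $T_*\nu_n-\nu_n = n^{-1}(\delta_{T^n x}-\delta_x)$ has total variation $O(1/n)$ and continuous push-forward preserves weak limits. For any $T$-invariant $\mu$, subadditivity $\mathbf{H}_{n+m}(\mu)\le \mathbf{H}_n(\mu)+\mathbf{H}_m(\mu)$ together with stationarity makes $n\mapsto \mathbf{H}_n(\mu)/n$ non-increasing; combined with $\lim_n \mathbf{H}_n(\mu)/n=1$ and $\mathbf{H}_n(\mu)\le n$ this forces $\mathbf{H}_n(\mu)=n$ for every $n$, so every length-$n$ marginal of $\mu$ is uniform and $\mu$ is the Bernoulli measure $\mu_0$. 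Hence $\mathcal{W}_x=\{\mu_0\}$; by Prokhorov compactness every subsequence of $\nu_n$ has a further subsequence converging to $\mu_0$, so $\nu_n\Rightarrow\mu_0$. Applying Theorem \ref{thm:cylinderconvergenceimpliesweylconvergence} then gives, for every $k\ne 0$,
\begin{align*}
\frac{1}{n}\sum_{j=0}^{n-1} e^{2\pi i k v(T^j x)} \;\longrightarrow\; \int e^{2\pi i k v(y)}\,d\mu_0 \;=\; \int_0^1 e^{2\pi i k t}\,dt \;=\; 0.
\end{align*}

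The main obstacle is the rigidity step in the converse: maximum entropy rate alone does not pin down $\mu$, since any fixed non-uniform length-$k$ marginal contributes only an $O(1/n)$ deficit to $\mathbf{H}_n(\mu)/n$ and is therefore consistent with $\mathbf{H}_n(\mu)/n \to 1$. The resolution is that elements of $\mathcal{W}_x$ are empirical averages along a single orbit and hence shift-invariant, and it is precisely stationarity that monotonises $\mathbf{H}_n(\mu)/n$ and upgrades the asymptotic equality to an equality at every length.
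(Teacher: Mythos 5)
Your forward direction is essentially the paper's: apply Theorem \ref{thm:weylaveragesconvergentcase} and identify the unique measure with Fourier coefficients $c_0=1$, $c_k=0$ ($k\ne 0$) as Lebesgue measure on $\T$, whose lift is the uniform Bernoulli measure.

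Your converse, however, takes a genuinely different route. The paper invokes Lemma \ref{lem:liminfreplacedwithinf}, which states $\dim_{FS}(x)=\inf_{\mu\in\mathcal{W}_x}\inf_l \mathbf{H}_l(\mu)/l$; with $\dim_{FS}(x)=1$ and $\mathbf{H}_l(\mu)\le l$, this immediately forces $\mathbf{H}_l(\mu)=l$ for every $l$ and every $\mu\in\mathcal{W}_x$, identifying each $\mu$ with the uniform measure, with no appeal to invariance. You instead use only Theorem \ref{thm:weylcriterionforfsd} to get $H^-(\mu)=H^+(\mu)=1$, then close the gap via a Krylov--Bogolyubov-style observation that every $\mu\in\mathcal{W}_x$ is shift-invariant (total-variation estimate $\|T_*\nu_n-\nu_n\|=O(1/n)$), combined with the stationary-process fact that the conditional entropies $\mathbf{H}_{n}(\mu)-\mathbf{H}_{n-1}(\mu)$ are non-increasing, making $\mathbf{H}_n(\mu)/n$ monotone non-increasing with limit $1$; this upgrades the asymptotic entropy rate to $\mathbf{H}_n(\mu)=n$ at every $n$. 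Your argument is correct (the phrasing ``subadditivity together with stationarity makes $n\mapsto\mathbf{H}_n/n$ non-increasing'' is slightly loose --- the operative fact is the monotonicity of conditional entropy under stationarity, with the average of a non-increasing sequence then being non-increasing --- but the reasoning is sound) and it has the virtue of not requiring Lemma \ref{lem:liminfreplacedwithinf}, trading that for the invariance of $\mathcal{W}_x$, a fact the paper itself observes only later (Lemma \ref{lem:fsdofmunormal}). The final step, passing from uniqueness of the subsequential limit to full weak convergence via Prokhorov compactness, is the same in both proofs, though you spell it out more explicitly than the paper does.
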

\begin{proof}
The forward direction follows from Theorem
\ref{thm:weylaveragesconvergentcase} since the uniform distribution is
the unique measure on $\T$ having all Fourier coefficients equal to
0. Conversely, assume that $\dim_{FS}(x)=\Dim_{FS}(x)=1$. From Lemma
\ref{lem:liminfreplacedwithinf}, for any $\nu \in \mathcal{W}_x$, we
have $\inf_{l} \mathbf{H}_{l}(\nu)/l =1$. Since
$\mathbf{H}_{l}(\nu)\le l$, this implies that for every $l$,
$\mathbf{H}_{l}(\nu)=\sum_{w \in
  \Sigma^l}\nu(C_w)\log(\nu(C_w))=l$. From this we can infer that
$\nu$ is the uniform distribution on $\Sigma^\infty$. Hence, the
uniform distribution is the unique weak limit in the set
$\mathcal{W}_x$ defined in the statement of Theorem
\ref{thm:weylcriterionforfsd}. The claim now follows from the
definition of weak convergence since $e^{2 \pi i k v(y)}$ is a
continuous function over $\Sigma^\infty$.
\end{proof}

The conclusion of Theorem \ref{thm:weylaveragesconvergentcase} says
that $\dim_{FS}(x)=\Dim_{FS}(x)$. i.e, $x$ is a regular
sequence. Hence, Theorem \ref{thm:nonconvergentexamplerational} and
Theorem \ref{thm:weylaveragesconvergentcase} together yield the following.
\begin{corollary}
	If for each $k \in \Z$,
	$
	\lim_{n \to \infty}\frac{1}{n}\sum_{j=0}^{n-1} e^{2 \pi i k (v(T^j x)) }  = c_k
	$
	for a sequence of complex numbers $\langle c_k \rangle_{k \in \Z}$. Then, $x$ is a regular sequence. But there exist regular sequences having non-convergent Weyl averages.
\end{corollary}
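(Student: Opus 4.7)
The statement is essentially a summary combining two results already established in the paper, so the plan is to invoke them in sequence rather than prove anything from scratch.

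For the forward implication, the plan is to apply Theorem \ref{thm:weylaveragesconvergentcase} directly. The hypothesis is exactly that the Weyl averages converge for every $k \in \Z$, with limits $c_k$. That theorem then produces a unique measure $\mu$ on $\T$ whose Fourier coefficients are the $c_k$, and more importantly asserts the chain of equalities
\begin{align*}
\dim_{FS}(x) = \Dim_{FS}(x) = H^{-}(\hat{\mu}) = H^{+}(\hat{\mu}).
\end{align*}
The first equality is precisely the definition of $x$ being regular, so nothing further is required for this direction.

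For the second claim, the plan is to cite the explicit construction of Lemma \ref{lem:nonconvergentexample} (or, more generally, Theorem \ref{thm:nonconvergentexamplerational}). Either one exhibits a sequence $x \in \Sigma^\infty$ with $\dim_{FS}(x) = \Dim_{FS}(x)$ (hence regular) such that for some fixed $k \in \Z$ the averages $\frac{1}{n}\sum_{j=0}^{n-1} e^{2\pi i k v(T^j x)}$ fail to converge. This immediately witnesses the ``but'' clause of the corollary.

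There is no genuine obstacle here: the first half is a one-line invocation of Theorem \ref{thm:weylaveragesconvergentcase}, and the second half is a pointer to the counterexample already constructed in Section \ref{sec:counterexampletoconvergence}. The only mild subtlety is being clear that the hypothesis of Theorem \ref{thm:weylaveragesconvergentcase} matches the hypothesis of the corollary verbatim, so no translation between formulations is needed.
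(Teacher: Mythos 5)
Your proposal matches the paper's own reasoning exactly: the paper derives the corollary by noting that Theorem \ref{thm:weylaveragesconvergentcase} yields $\dim_{FS}(x)=\Dim_{FS}(x)$ (regularity), and that Theorem \ref{thm:nonconvergentexamplerational} (a generalization of Lemma \ref{lem:nonconvergentexample}) provides the regular sequences with divergent Weyl averages. No differences worth noting.
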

%% SN - redundant
%% From Lemma \ref{lem:nonconvergentexample}, we know that there exists
%% $x$ with $\dim_{FS}(x)=\Dim_{FS}(x)=1/2$ having divergent Weyl
%% averages. Let $y$ be the diluted sequence with
%% $\dim_{FS}(x)=\Dim_{FS}(x)=1/2$ (see Theorem 6.2 from
%% \cite{dai2004finite}). Clearly, the sliding probability of every
%% finite string converges in $y$. Now, Lemma
%% \ref{lem:deltameasuresandcoutning} and Theorem
%% \ref{thm:cylinderconvergenceimpliesweylconvergence} lets us conclude
%% that $y$ has convergent Weyl averages. Thus if we are only given the
%% dimensions of a sequence, we cannot infer anything about the
%% convergence of the Weyl averages. However, the conclusion of Theorem
%% \ref{thm:nonconvergentcaseweylcriterion} imposes limits on the upper
%% and lower average entropies of the subsequence limit measures of any
%% sequence with given dimension.

\section{Preservation of finite-state dimension under real arithmetic}
\label{sec:preservationoffsd}

In this section, we demonstrate the utility of our framework by
proving the most general results yet regarding the preservation of
finite-state dimension under arithmetic operations like addition with
reals satisfying a natural independence condition, and multiplication
with non-zero rationals. These results strictly generalize all known
results regarding the preservation of finite-state dimension including
those of Doty, Lutz and Nandakumar \cite{DLN06} and Aistleitner
\cite{Aistleitner2011}.
%% we give a Fourier analytic proof of the result by Doty,
%% Lutz and Nandakumar \cite{DLN06} that finite-state dimension is
%% preserved under rational arithmetic. This result is then utilized to
%% provide completely new results generalizing those of Aistleitner
%% \cite{Aistleitner2011}. He showed that normality is preserved under
%% addition with a certain class of reals, and under combinatorial
%% alterations. We generalize this substantially to show that for
%% \emph{every real}, the finite-state dimension and strong dimensions
%% are preserved under these operations. 
Our Weyl criterion plays a pivotal role in these extensions. We
combine our Weyl criterion along with recent estimates by Hochman
\cite{Hochman2014} for the entropy of convolution of probability
measures.

It is easier to analyze addition and multiplication as operations over
$\T$.
%% Since the operations of addition and multiplication are more amenable
%% to analysis when seen as operations over elements of $\T$, 
Hence we first obtain an equivalent Weyl's criterion for finite-state
dimension in terms of measures over $\T$.
%%  rather than measures over $\Sigma^\infty$ as in Theorem
%% \ref{thm:nonconvergentcaseweylcriterion}
We now define the analogues
of upper and lower average entropies for measures over $\T$. This
turns out to be the notion of R\'enyi dimension as defined by Alfr\'ed
R\'enyi in \cite{renyi1959dimension}. Recall that for any $m$ and $w
\in \Sigma_m^n$, $I^m_w$ denotes the interval $\left[ v_m(w0^\infty),
  v_m(w0^\infty)+m^{-\lvert w \rvert} \right)$ in $\T$.

\begin{definition}[R\'enyi Dimension]
\label{def:renyidimension}
For any probability measure $\mu$ on $\T$ and \emph{partition factor}
$m$, let $\mathbf{H}^m_{n}(\mu)=-\sum_{w \in
  \Sigma_m^n}\mu(I^m_w)\log(\mu(I^m_w))$. The \emph{R\'enyi upper and
lower dimensions}  (see \cite{renyi1959dimension} and
\cite{young1982dimension}) are defined as follows, 
\begin{align*}
\overline{\dim}^m_{R}(\mu)=\limsup\limits_{n \to \infty}
\frac{\mathbf{H}^m_{n}(\mu)}{n \log m} \quad \text{and} \quad 
\underline{\dim}^m_{R}(\mu)=\liminf\limits_{n \to \infty}
\frac{\mathbf{H}^m_{n}(\mu)}{n \log m} 
\end{align*}
If $\overline{\dim}^m_{R}(\mu)=\underline{\dim}^m_{R}(\mu)$ then the
\emph{R\'enyi dimension} of $\mu$ is $\dim_{R}^m(\mu) =
\overline{\dim}^m_{R}(\mu)=\underline{\dim}^m_{R}(\mu)$. 
\end{definition}

From the above definition, it seems as if the notion of R\'enyi
dimension is dependent on the choice of the partition factor
$m$. However, R\'enyi upper and lower dimensions are quantities that
are independent of the partition factor. This important fact regarding
R\'enyi dimension seems to be a folklore result. However, for
completeness we give a full proof of this fact in the appendix.

\begin{lemma}
\label{lem:renyibaseinvariance}
Let $\mu$ be any measure on $\T$. For any two partition factors $m_1$ and $m_2$, $\overline{\dim}^{m_1}_{R}(\mu) = \overline{\dim}_{R}^{m_2}(\mu)$ and $\underline{\dim}^{m_1}_{R}(\mu) =\underline{\dim}_{R}^{m_2}(\mu)$. 
\end{lemma}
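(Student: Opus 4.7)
The plan is to reduce everything to a comparison of the two partition entropies at comparable length scales. Given $n$, I will choose $k = k(n) = \lceil n \log m_1 / \log m_2 \rceil$, so that
\begin{align*}
m_2^{-k} \leq m_1^{-n} < m_2^{-(k-1)}, \qquad n \log m_1 \leq k \log m_2 \leq n \log m_1 + \log m_2.
\end{align*}
At these scales the two partitions $\mathcal{P}_1 = \{I_w^{m_1} : w \in \Sigma_{m_1}^n\}$ and $\mathcal{P}_2 = \{I_v^{m_2} : v \in \Sigma_{m_2}^k\}$ of $\T$ mesh well: each atom of $\mathcal{P}_2$ has length at most that of an atom of $\mathcal{P}_1$, and a straightforward length comparison shows that each atom of $\mathcal{P}_1$ meets at most $m_2 + 2$ atoms of $\mathcal{P}_2$, while each atom of $\mathcal{P}_2$ meets at most $2$ atoms of $\mathcal{P}_1$.

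Next, I will bound $|\mathbf{H}^{m_1}_n(\mu) - \mathbf{H}^{m_2}_k(\mu)|$ by a constant depending only on $m_1, m_2$. Let $\mathcal{P} = \mathcal{P}_1 \vee \mathcal{P}_2$ be the common refinement and let $H(\cdot)$ denote the $\mu$-entropy of a partition. Since refinement only increases entropy, $H(\mathcal{P}) \geq \max\{H(\mathcal{P}_1), H(\mathcal{P}_2)\}$. On the other hand, the chain rule $H(\mathcal{P}) = H(\mathcal{P}_1) + H(\mathcal{P}_2 \mid \mathcal{P}_1)$ combined with the intersection bound yields $H(\mathcal{P}_2 \mid \mathcal{P}_1) \leq \log(m_2+2)$, and symmetrically $H(\mathcal{P}_1 \mid \mathcal{P}_2) \leq \log 2$. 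Therefore
\begin{align*}
\bigl|\mathbf{H}^{m_1}_n(\mu) - \mathbf{H}^{m_2}_{k(n)}(\mu)\bigr| \leq \log(m_2 + 2).
\end{align*}

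Now I convert this additive bound into an asymptotic comparison of the normalised entropies. Splitting the difference of ratios as
\begin{align*}
\left|\frac{\mathbf{H}^{m_1}_n(\mu)}{n \log m_1} - \frac{\mathbf{H}^{m_2}_{k(n)}(\mu)}{k(n) \log m_2}\right|
\leq \frac{|\mathbf{H}^{m_1}_n(\mu) - \mathbf{H}^{m_2}_{k(n)}(\mu)|}{n \log m_1} + \mathbf{H}^{m_2}_{k(n)}(\mu)\left|\frac{1}{n \log m_1} - \frac{1}{k(n) \log m_2}\right|,
\end{align*}
the first term is $O(1/n)$ by the previous step, and using the trivial bound $\mathbf{H}^{m_2}_{k(n)}(\mu) \leq k(n) \log m_2$ together with $|k(n) \log m_2 - n \log m_1| \leq \log m_2$, the second term is at most $(\log m_2)/(n \log m_1)$. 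Hence this difference tends to $0$ as $n \to \infty$.

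Finally, since the sequence $\langle k(n) \rangle_{n \in \N}$ is a subsequence of $\N$ tending to infinity, passing to $\limsup$ gives $\overline{\dim}_R^{m_1}(\mu) \leq \overline{\dim}_R^{m_2}(\mu)$, and the reverse inequality follows by symmetrically choosing $n(k) = \lceil k \log m_2 / \log m_1 \rceil$ and repeating the argument with the roles of $m_1, m_2$ interchanged. The same reasoning applied to $\liminf$ establishes $\underline{\dim}_R^{m_1}(\mu) = \underline{\dim}_R^{m_2}(\mu)$. The main obstacle is the entropy comparison step above: getting an additive, $n$-independent bound is essential for the whole argument and requires the correct pairing of scales through $k(n)$.
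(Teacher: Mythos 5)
Your proof is correct and follows essentially the same route as the paper: match the two partition scales via the logarithmic index relation, bound $\bigl|\mathbf{H}^{m_1}_n(\mu)-\mathbf{H}^{m_2}_{k}(\mu)\bigr|$ by a constant depending only on $m_1,m_2$ through the common-refinement/conditional-entropy argument (the paper's Lemma~\ref{lem:renyibaselemma}), convert the additive bound to an $O(1/n)$ bound on the normalised difference, and pass to $\limsup$ and $\liminf$, with the reverse inequality by symmetry. The only cosmetic difference is that you fix the $m_1$-index $n$ and set $k(n)=\lceil n\log m_1/\log m_2\rceil$ (giving an intersection count of at most $m_2+2$), while the paper fixes the $m_2$-index $l$ and chooses $n$ with $m_1^n\le m_2^l<m_1^{n+1}$ (giving an intersection count of at most $m_1$); both are valid and lead to the same asymptotic conclusion.
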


%\begin{lemma}
%\label{lem:renyibaseinvariance}
%Let $\mu$ be any measure on $\T$. For any two partition factors $m_1$ and $m_2$,
%\begin{align*}
%	\overline{\dim}^{n_1}_{R}(\mu) &= \limsup\limits_{n \to \infty}
%        \frac{\mathbf{H}_{m_1}^n(\mu)}{n \log m_1} = \limsup\limits_{n
%          \to \infty} \frac{\mathbf{H}_{m_2}^n(\mu)}{n\log
%          m_2}=\overline{\dim}_{R}^{m_2}(\mu) \quad\text{and}\\
%	\underline{\dim}^{m_1}_{R}(\mu) &= \liminf\limits_{n \to
%          \infty} \frac{\mathbf{H}_{m_1}^n(\mu)}{n\log m_1} =
%        \liminf\limits_{n \to \infty}
%        \frac{\mathbf{H}_{m_2}^n(\mu)}{n\log
%          m_2}=\underline{\dim}_{R}^{m_2}(\mu). 
%\end{align*}
%\end{lemma}

In the light of Lemma \ref{lem:renyibaseinvariance}, we suppress the
partition factor $m$ in the notations $\overline{\dim}^m_{R}(\mu)$, $
\underline{\dim}^m_{R}(\mu)$ and $\dim_{R}^m(\mu)$ and use
$\overline{\dim}_{R}(\mu)$, $ \underline{\dim}_{R}(\mu)$ and
$\dim_{R}(\mu)$ to refer to the corresponding quantities for a measure
$\mu$ on $\T$. Now,  we state an equivalent Weyl's criterion for
finite-state dimension for $r \in \T$ in terms of weak limit measures over $\T$ and
R\'enyi dimension of measures over $\T$.  

\begin{theorem}[Restatement of Weyl's criterion for finite-state
    dimension (Theorem \ref{thm:nonconvergentcaseweylcriterion})]
  \label{thm:weylaveragesconvergentcase_torus}
	Let $r \in \T$. If for any $\langle n_m
        \rangle_{m=0}^{\infty}$ there exist constants $c_k$ for $k \in
        \Z$ such that \footnote{ The $2^j$ term in (\ref{eqn:weylaveragesconvergentcase_torus}) must be
 replaced with $b^j$ while investigating the above criterion in any
 arbitrary base $b$}
        \begin{align}\label{eqn:weylaveragesconvergentcase_torus}
        \lim_{m \to
          \infty}\frac{1}{n_m}\sum_{j=0}^{n_m-1} e^{2 \pi i k 2^j r } = c_k 
        \end{align}  
          for every $k \in \Z$, then there exists a
        measure $\mu$ on $\T$ such that for every $k$, $c_k = \int
        e^{2 \pi i k y } d\mu$. Let $\mathcal{W}_r$ be
        the collection of all $\mu$ on $\T$ that can be obtained as
        subsequence limits of Weyl averages. Then, $\dim_{FS}(r) = \inf
  \{\underline{\dim}_{R}(\mu) \mid \mu \in \mathcal{W}_r\}$ and $\Dim_{FS}(r) = \sup
  \{\overline{\dim}_{R}(\mu) \mid \mu \in \mathcal{W}_r\}.$        
\end{theorem}
\begin{proof}
If $r$ is a dyadic rational in $\T$, the conclusion is trivial since it is easily verified that $\mathcal{W}_r$ consists only of the measure $\delta_0$. Now, the statement follows since $\dim_R(\delta_0)=0$.
For $r \in \T \setminus \D$, there exists a unique $x \in \Sigma^\infty$ such that $v(x)=r$. And we also have $v(T^j x)=2^j v(x)=2^j r$ for every $j \geq 0$. If we fix the partition factor $m$ to be equal to $|\Sigma|$, then the
quantities $\underline{\dim}_{R}(\mu)$ and $\overline{\dim}_{R}(\mu)$
for any measure $\mu$ on $\T$ coincide with the quantities
$H^-({\hat{\mu}})$ and $H^+({\hat{\mu}})$ of the lifted measure
$\hat{\mu}$. The equivalence of the above theorem with Theorem
\ref{thm:nonconvergentcaseweylcriterion} follows from these observations.	
\end{proof}

\label{sec:fourieranalyticproofofwallstheorem}

D.~D.~Wall in his thesis \cite{Wall49} proved that \emph{if $r \in [0,1]$ and $q$ is any non-zero rational
        number, then $r$ is a normal number if and only if $qr$ and
        $q+r$ are normal numbers}. Doty, Lutz and Nandakumar \cite{DLN06} generalized this result to
arbitrary finite-state dimensions and proved that the finite-state
dimension and finite-state strong dimension of any number are
preserved under multiplication and addition with rational numbers.

\begin{theorem}[\cite{DLN06}]
\label{thm:wallstheorem}
Let $r \in \T$ and $q$ be any non-zero rational number. Then
for any base $b$, $\dim^b_{FS}(r)=\dim^b_{FS}(q+r)=\dim^b_{FS}(qr)$
and $\Dim^b_{FS}(r)=\Dim^b_{FS}(q+r)=\Dim^b_{FS}(qr)$. 
%\begin{align*}
%\dim^b_{FS}(r)&=\dim^b_{FS}(q+r)=\dim^b_{FS}(qr)	\quad\text{and}\\
%\Dim^b_{FS}(r)&=\Dim^b_{FS}(q+r)=\Dim^b_{FS}(qr).	
%\end{align*}	
\end{theorem}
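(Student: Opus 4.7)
The plan is to invoke the Weyl criterion for finite-state dimension (Theorem~\ref{thm:nonconvergentcaseweylcriterion}) for each of $r$, $qr$, and $q+r$, and to compare the three associated sets of subsequence weak-limit measures on $\T$ via pushforward maps that preserve the upper and lower average entropies of the corresponding lifted measures on $\Sigma^\infty$. I will handle the multiplication case first, and then reduce the addition case to it.

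For the multiplication case, consider the continuous map $M_q : \T \to \T$ defined by $M_q(y) = qy \Mod 1$. For any $\alpha \in \T$ and $n \in \N$, let $\nu^{(\alpha)}_n$ denote the average of Dirac measures on $\T$ constructed from $\langle b^j \alpha \Mod 1 \rangle_{j=0}^{n-1}$. A direct Fourier computation gives
\begin{align*}
\widehat{(M_q)_* \nu^{(r)}_n}(k) \,=\, \int e^{2\pi i k M_q(y)}\, d\nu^{(r)}_n(y) \,=\, \frac{1}{n}\sum_{j=0}^{n-1} e^{2\pi i k b^j(qr)} \,=\, \widehat{\nu^{(qr)}_n}(k)
\end{align*}
for every $k \in \Z$, so Bochner's uniqueness theorem on $\T$ forces $\nu^{(qr)}_n = (M_q)_*\nu^{(r)}_n$ for every $n$. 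Since $M_q$ is continuous on $\T$, weak convergence is preserved under pushforward: if $\nu^{(r)}_{n_m} \wto \mu$ then $\nu^{(qr)}_{n_m} \wto (M_q)_*\mu$. This yields the inclusion $(M_q)_*\mathcal{W}_r \subseteq \mathcal{W}_{qr}$, and the analogous argument applied to the pair $(q^{-1}, qr)$ yields $(M_{q^{-1}})_*\mathcal{W}_{qr} \subseteq \mathcal{W}_r$. Each inclusion, combined with the invariance lemma stated in the final paragraph and Theorem~\ref{thm:nonconvergentcaseweylcriterion}, yields one of the two inequalities that together give $\dim^b_{FS}(qr) = \dim^b_{FS}(r)$ and $\Dim^b_{FS}(qr) = \Dim^b_{FS}(r)$.

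For the addition case, write $q = p/s$ with $\gcd(p,s)=1$, so that $sq = p \in \Z$. Then for every $j \in \N$,
\begin{align*}
b^j \cdot s(q+r) \,=\, b^j p + b^j s r \,\equiv\, b^j (sr) \pmod{1},
\end{align*}
so $\nu^{(s(q+r))}_n = \nu^{(sr)}_n$ as measures on $\T$ for every $n$, hence $\mathcal{W}_{s(q+r)} = \mathcal{W}_{sr}$ and $\dim^b_{FS}(s(q+r)) = \dim^b_{FS}(sr)$ (and similarly for $\Dim^b_{FS}$). Applying the multiplication case to the pairs $(q+r, s)$ and $(r, s)$ chains these equalities to
\begin{align*}
\dim^b_{FS}(q+r) \,=\, \dim^b_{FS}(s(q+r)) \,=\, \dim^b_{FS}(sr) \,=\, \dim^b_{FS}(r),
\end{align*}
and analogously for $\Dim^b_{FS}$, so addition collapses to multiplication.

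The main obstacle, and the technical heart of the proof, is the invariance lemma asserting that for every non-zero rational $q$ and every probability measure $\mu$ on $\T$, the lifts satisfy
\begin{align*}
H^{-}(\widehat{\mu}) \,=\, H^{-}(\widehat{(M_q)_*\mu}) \quad\text{and}\quad H^{+}(\widehat{\mu}) \,=\, H^{+}(\widehat{(M_q)_*\mu}).
\end{align*}
This is the Cantor-space analogue of the classical bi-Lipschitz invariance of R\'enyi dimensions, but one must explicitly control how the piecewise-affine map $M_q$ distorts the particular base-$b$ partition used in the lift. I plan to establish it by showing that for every $n$ and every $w \in \Sigma^n$, the preimage $M_q^{-1}(I_w)$ is a disjoint union of a $q$-bounded number of intervals of length $\Theta(b^{-n})$, which can therefore be covered and co-covered by $O_q(1)$ base-$b$ intervals at levels $n \pm O_q(1)$. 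Subadditivity and concavity of Shannon entropy, applied in the style of Lemma~\ref{lem:subadditivityofentropy}, then give $|\mathbf{H}_n(\widehat{\mu}) - \mathbf{H}_n(\widehat{(M_q)_*\mu})| = O_q(1)$, whence dividing by $n$ and taking $\liminf$ or $\limsup$ yields the invariance. Combining this lemma with the pushforward inclusions above and Theorem~\ref{thm:nonconvergentcaseweylcriterion} completes the proof.
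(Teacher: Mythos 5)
Your overall architecture (push forward the subsequence weak limits, show R\'enyi/average-entropy is invariant under the pushforward, then appeal to Theorem~\ref{thm:nonconvergentcaseweylcriterion}) matches the paper's, and the reduction of the addition case to the multiplication case is fine. But the multiplication argument as written has two genuine gaps, both stemming from the decision to work with arbitrary non-zero rational $q$ directly rather than reducing first to positive integer multiplication.

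First, the ``direct Fourier computation'' $\widehat{(M_q)_*\nu^{(r)}_n}(k)=\widehat{\nu^{(qr)}_n}(k)$ is false for non-integer $q$. What the left-hand side actually equals is $\frac{1}{n}\sum_{j=0}^{n-1}e^{2\pi i k\, q(b^jr \Mod 1)}$, and since
\begin{align*}
e^{2\pi i k\, q(b^jr \Mod 1)} \;=\; e^{2\pi i k\, q\,b^j r}\cdot e^{-2\pi i k\, q\lfloor b^j r\rfloor},
\end{align*}
the extra factor $e^{-2\pi i k q\lfloor b^jr\rfloor}$ is $1$ for all $j,k$ only when $q\in\Z$. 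Thus the claimed identity $\nu^{(qr)}_n=(M_q)_*\nu^{(r)}_n$ holds only for integer $q$; the paper avoids this entirely by first reducing multiplication by $q=s/t$ to two applications of multiplication by positive integers (Lemma~\ref{lem:wallstheoremequivalentlemma} plus the short reduction preceding it). Second, even after restricting to a positive integer $m$, the reverse inclusion $(M_{q^{-1}})_*\mathcal{W}_{qr}\subseteq\mathcal{W}_r$ does not follow by ``the analogous argument,'' because $M_{1/m}$ does not invert $M_m$ on $\T$ (e.g.\ $M_{1/m}(M_m(y))=y-i/m$ when $y\in[i/m,(i+1)/m)$), and in fact $(M_{1/m})_*\nu^{(mr)}_n\ne\nu^{(r)}_n$; the Fourier identity is broken by exactly the obstruction above. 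The paper closes the reverse direction differently: given $\mu'\in\mathcal{W}_{mr}$ witnessed along indices $\langle n_m\rangle$, it applies Prokhorov's theorem to the \emph{same} indices of the sequence $\langle\nu^{(r)}_{n_m}\rangle$ to extract $\mu\in\mathcal{W}_r$ with $c_{km}=c'_k$, and then Lemma~\ref{lem:integermultiplicationanddouriercoefficients} (Bochner uniqueness) forces $\mu'=(f_m)_*\mu$. Once you make both repairs, your invariance lemma collapses to the integer case, where the paper's proof is cleaner than the $O_q(1)$-distortion estimate you sketch: using the base-invariance of R\'enyi dimension (Lemma~\ref{lem:renyibaseinvariance}) one may switch to partition factor $m$, under which $f_m^{-1}(I^m_w)=\bigcup_{i\in\Sigma_m}I^m_{iw}$ is exact, and the comparison reduces to the chain rule $|\mathbf{H}^m_{n+1}(\mu)-\mathbf{H}^m_n((f_m)_*\mu)|\le\log m$.
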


In the above $\dim^b$ and $\Dim^b$ denotes the finite-state dimension
and finite-state strong dimension of the number $r$ calculated by
considering the sequence representing the base-$b$ expansion of $r$.
For $r$ having multiple base $b$ expansions, this does not cause any
ambiguity since in this case the finite-state dimensions of $r$ are
$0$ with respect to any of the two possible expansions. 

In the specific case of normal sequences, Wall's result has been
generalized by Aistleitner in the following form. Let $\mathcal{C}$ be
the set of reals $y=0.y_0 y_1 \dots$ such that the ratio
$P(y_0^{n-1},0)$ goes to 1 as $n$ tends to $\infty$. Then we have the
following.

\begin{theorem}
\label{thm:aistleitnerfirsttheorem}
If $y$ is any real number in $\mathcal{C}$, then for any normal $r \in
\T$ and $q \in \Q$, the number $r+qy$ is normal.
\end{theorem}

We strictly generalize all these above results by formulating a
natural independence notion between two reals. We describe the
framework below. Given strings $x$ and $y$ in $\Sigma^\infty$ and
strings $u,w \in \Sigma^\ell$ for some $\ell \geq 1$, we define the
\emph{joint occurrence count} of $u$ and $w$ in $x$ and $y$ up to $n$
as,
\begin{align*}
N_{u,w}(x_0^{n-1},y_0^{n-1}) = \lvert \{i \in [0,n-\ell]:x_i^{i+\ell-1}=u \text{ and }y_i^{i+\ell-1}=w\} \rvert	
\end{align*}
And, the the \emph{joint occurrence probability} of $u$ and $w$ in $x$
and $y$ up to $n$ is defined as $P_{u,w}(x_0^{n-1},y_0^{n-1}) =
\frac{N_{u,w}(x_0^{n-1},y_0^{n-1})}{n-\ell+1}$.

Informally, we define two infinite strings $x$ and $y$ to be independent if for infinitely many lengths $l$, the occurrence probability distributions of $l$-length strings within x and y are \emph{independent} in the limit. The straightforward formulation of independence between $x$ and $y$ is 
\begin{align*}
	\lim_{n \to \infty} P_{u,w}(x_0^{n-1}, y_0^{n-1}) = \lim_{n \to \infty} P(x_0^{n-1}, u) P(y_0^{n-1}, w).
\end{align*}
 But these limits need not exist for general $x$ and $y$. Hence, the more admissible and useful definition is the following.

\begin{definition}
\label{def:independentstrings}	
Any two strings $x$ and $y$ in $\Sigma^\infty$ are said to be
\emph{independent} if for infinitely many $\ell \geq 1$ and for every $u,w \in \Sigma^\ell$,
\begin{align}
\label{eq:independencecondition}
\lim\limits_{n \to \infty} \left\lvert P_{u,w}(x_0^{n-1},y_0^{n-1}) -
P(x_0^{n-1},u)P(y_0^{n-1},w)\right\rvert = 0. 
\end{align}
\end{definition}

{\bf Note.} A basic intuition for our approach can be viewed as
follows. A standard result in probability theory (see for example,
Shiryaev \cite{Shiryaev}, 2nd. edition, Section II.8) is that, if $X$
and $Y$ are two independent random variables, then the distribution of
$X+Y$ is the convolution of the distributions of $X$ and $Y$. Moreover,
the Fourier coefficients of the convolution is the product of the
Fourier coefficients of the individual distributions. Our result may
be viewed as an analogous result using sequences.

The following theorem gives an important connection between the
exponential averages of the sum of independent reals and the
exponential averages of the individual reals.

\begin{theorem}
\label{thm:independenceimpliesweylproduct}
If $x$ and $y$ are real numbers in $\T$ such that $x$ and $y$ are
independent in the sense of condition \ref{def:independentstrings},
then for any integers $d$, $e$ and $q \in \Q$, 
\begin{align}
\label{eq:independentsumproofeq8}
\lim\limits_{n \to \infty}\left\lvert \frac{1}{n}\sum_{j=0}^{n-1} e^{2
  \pi i k 2^j (dx+ey) } - \frac{1}{n}\sum_{j=0}^{n-1} e^{2 \pi i k 2^j
  dx }   \frac{1}{n}\sum_{j=0}^{n-1} e^{2 \pi i k 2^j ey }
\right\rvert = 0. 
\end{align}
\end{theorem}
\begin{proof}
Observe that for any $k \in \Z$, 
\begin{align*}
\frac{1}{n}\sum_{j=0}^{n-1} e^{2 \pi i k 2^j (dx+ey) } = \frac{1}{n}\sum_{j=0}^{n-1} e^{2 \pi i k 2^j dx }  e^{2 \pi i k 2^j ey }
\end{align*}
With a slight abuse of notation we let $0.x_1 x_2 x_3 \dots$ and $0.y_1 y_2 y_3 \dots$ denote the base-$2$ expansions of $r$ and $y$ respectively. Then,
\begin{align*}
\frac{1}{n}\sum_{j=0}^{n-1} e^{2 \pi i k 2^j (dx+ey) } = \frac{1}{n}\sum_{j=0}^{n-1} e^{2 \pi i dk (0.x_j x_{j+1} x_{j+2} \dots) }  e^{2 \pi i ek (0. y_j y_{j+1} y_{j+2} \dots) }.
\end{align*}
Fix an arbitrary $\ell \geq 1$ such that the independence condition is satisfied. Using the inequality $\lvert e^{i \theta}-1 \rvert \leq \lvert \theta \rvert$, we get that
\begin{align*}
\left\lvert e^{2 \pi i dk (0.x_j x_{j+1} x_{j+2} \dots) } - e^{2 \pi i dk (0.x_j  \dots x_{j+\ell -1}) } \right\rvert \leq \frac{2 \pi \lvert dk \rvert}{2^{\ell -1}}. 
\end{align*}
Similarly,
\begin{align*}
\left\lvert e^{2 \pi i ek (0.y_j y_{j+1} y_{j+2} \dots) } - e^{2 \pi i ek (0.y_j  \dots y_{j+\ell -1}) } \right\rvert \leq \frac{2 \pi \lvert ek \rvert }{2^{\ell -1}}. 
\end{align*}
Using the above two inequalities we obtain that
\begin{align*}
\left\lvert e^{2 \pi i dk (0.x_j x_{j+1} x_{j+2} \dots) }e^{2 \pi i ek (0.y_j y_{j+1} y_{j+2} \dots) } - e^{2 \pi i dk (0.x_j  \dots x_{j+\ell -1}) }e^{2 \pi i ek (0.y_j  \dots y_{j+\ell -1}) }  \right\rvert \leq \frac{2 \pi (\lvert d \rvert+\lvert e \rvert)\lvert k \rvert }{2^{\ell -1}}. 
\end{align*}

Therefore, from the above observations we get
\begin{align}
\label{eq:independentsumproofeq1}
\left\lvert \frac{1}{n}\sum_{j=0}^{n-1} e^{2 \pi i k 2^j (dx+ey) } - \frac{1}{n}\sum_{j=0}^{n-1} e^{2 \pi i dk (0.x_j  \dots x_{j+\ell -1}) }e^{2 \pi i ek (0.y_j  \dots y_{j+\ell -1}) } \right\rvert \leq \frac{2 \pi (\lvert d \rvert+\lvert e \rvert)\lvert k \rvert }{2^{\ell -1}}. 
\end{align}
Observe that,
\begin{align}
\label{eq:independentsumproofeq2}
	\frac{1}{n}\sum_{j=0}^{n-1} e^{2 \pi i dk (0.x_j  \dots x_{j+\ell -1}) }e^{2 \pi i ek (0.y_j  \dots y_{j+\ell -1}) } &= \frac{1}{n} \sum\limits_{u,w \in \Sigma^\ell} N_{u,w}(x_0^{n+\ell-2},y_0^{n+\ell-2}) e^{2\pi i dk (0.u)} e^{2\pi i ek (0.w)} \nonumber \\
		&=\sum\limits_{u,w \in \Sigma^\ell} P_{u,w}(x_0^{n+\ell-2},y_0^{n+\ell-2}) e^{2\pi i dk (0.u)} e^{2\pi i ek (0.w)} 
\end{align}
Now,
\begin{align*}
&\left\lvert \sum\limits_{u,w \in \Sigma^\ell} P_{u,w}(x_0^{n+\ell-2},y_0^{n+\ell-2}) e^{2\pi i dk (0.u)} e^{2\pi i ek (0.w)} - \sum\limits_{u,w \in \Sigma^\ell} P(x_0^{n+\ell-2},u) e^{2\pi i dk (0.u)} P(y_0^{n+\ell-2},w) e^{2\pi i ek (0.w)}  \right\rvert\\
&\leq  \sum\limits_{u,w \in \Sigma^\ell} \left\lvert P_{u,w}(x_0^{n+\ell-2},y_0^{n+\ell-2}) - P(x_0^{n+\ell-2},u)  P(y_0^{n+\ell-2},w) \right\rvert.  
\end{align*}
Since the independence condition is satisfied, we get that
\begin{align}
\label{eq:independentsumproofeq3}
\lim\limits_{n \to \infty} \left\lvert \sum\limits_{u,w \in \Sigma^\ell} \left(P_{u,w}(x_0^{n+\ell-2},y_0^{n+\ell-2})  -  P(x_0^{n+\ell-2},u)  P(y_0^{n+\ell-2},w) \right) e^{2\pi i dk (0.u)} e^{2\pi i ek (0.w)}   \right\rvert =0	
\end{align}
Now,
\begin{align}
\label{eq:independentsumproofeq4}
&\sum\limits_{u,w \in \Sigma^\ell} P(x_0^{n+\ell-2},u) e^{2\pi i dk (0.u)} P(y_0^{n+\ell-2},w) e^{2\pi i ek (0.w)} \nonumber\\
&= \sum\limits_{u \in \Sigma^\ell} P(x_0^{n+\ell-2},u) e^{2\pi i dk (0.u)}  \sum\limits_{w \in \Sigma^\ell} P(y_0^{n+\ell-2},w) e^{2\pi i ek (0.w)} \nonumber\\
&= \frac{1}{n} \sum\limits_{u \in \Sigma^\ell} N(x_0^{n+\ell-2},u) e^{2\pi i dk (0.u)} \times  \frac{1}{n} \sum\limits_{w \in \Sigma^\ell} N(y_0^{n+\ell-2},w) e^{2\pi i ek (0.w)} \nonumber\\
&= \frac{1}{n} \sum_{j=0}^{n-1}  e^{2\pi i dk (0.x_{j} \dots x_{j+\ell-1})} \times  \frac{1}{n} \sum_{j=0}^{n-1}  e^{2\pi i ek (0.y_{j} \dots y_{j+\ell-1})}
\end{align}
Therefore, using (\ref{eq:independentsumproofeq2}),
(\ref{eq:independentsumproofeq3}) and
(\ref{eq:independentsumproofeq4}) we get that for any $\ell$
satisfying the independence condition, 
\begin{align}
\label{eq:independentsumproofeq5}
&\lim\limits_{n \to \infty}\left\lvert \frac{1}{n}\sum_{j=0}^{n-1} e^{2 \pi i dk (0.x_j  \dots x_{j+\ell -1}) }e^{2 \pi i ek (0.y_j  \dots y_{j+\ell -1}) } - \frac{1}{n} \sum_{j=0}^{n-1}  e^{2\pi i dk (0.x_{j} \dots x_{j+\ell-1})}   \frac{1}{n} \sum_{j=0}^{n-1}  e^{2\pi i ek (0.y_{j} \dots y_{j+\ell-1})} \right\rvert  \nonumber\\
&=0. 
\end{align}
Observe that,
\begin{align}
\label{eq:independentsumproofeq6}
&\left\lvert \frac{1}{n}\sum_{j=0}^{n-1} e^{2 \pi i dk (0.x_j x_{j+1} x_{j+2}  \dots ) } - \frac{1}{n} \sum_{j=0}^{n-1}  e^{2\pi i dk (0.x_{j} \dots x_{j+\ell-1})}   \right\rvert \nonumber \\
&\leq \frac{1}{n}\sum_{j=0}^{n-1} \left\lvert e^{2 \pi i dk (0.x_j x_{j+1} x_{j+2}  \dots ) }  -e^{2\pi i dk (0.x_{j} \dots x_{j+\ell-1})}  \right\rvert \nonumber\\
&\leq  \frac{1}{n}\sum_{j=0}^{n-1} \frac{2 \pi \lvert dk \rvert}{2^{\ell-1}} \nonumber \\
&=\frac{2 \pi \lvert dk \rvert}{2^{\ell-1}}.
\end{align}
Similarly,
\begin{align}
\label{eq:independentsumproofeq7}
&\left\lvert \frac{1}{n}\sum_{j=0}^{n-1} e^{2 \pi i dk (0.y_j y_{j+1} y_{j+2}  \dots ) } - \frac{1}{n} \sum_{j=0}^{n-1}  e^{2\pi i dk (0.y_{j} \dots y_{j+\ell-1})}   \right\rvert \leq \frac{2 \pi \lvert ek \rvert}{2^{\ell-1}}.
\end{align}

Finally using (\ref{eq:independentsumproofeq1}), (\ref{eq:independentsumproofeq5}), (\ref{eq:independentsumproofeq6}) and (\ref{eq:independentsumproofeq7}) we obtain that,
\begin{align*}
\lim\limits_{n \to \infty}\left\lvert \frac{1}{n}\sum_{j=0}^{n-1} e^{2 \pi i k 2^j (dx+ey) } - \frac{1}{n}\sum_{j=0}^{n-1} e^{2 \pi i k 2^j dx }   \frac{1}{n}\sum_{j=0}^{n-1} e^{2 \pi i k 2^j ey } \right\rvert \leq \frac{4 \pi (\lvert d \rvert+\lvert e \rvert)\lvert k \rvert }{2^{\ell -1}}.
\end{align*}
Since the independence condition holds for infinitely many $\ell$, we get
\begin{align*}
\lim\limits_{n \to \infty}\left\lvert \frac{1}{n}\sum_{j=0}^{n-1} e^{2 \pi i k 2^j (dx+ey) } - \frac{1}{n}\sum_{j=0}^{n-1} e^{2 \pi i k 2^j dx }   \frac{1}{n}\sum_{j=0}^{n-1} e^{2 \pi i k 2^j ey } \right\rvert = 0.
\end{align*}
\end{proof}

 For any measures $\mu_1$ and $\mu_2$ on $\T$, let $\mu_1 \ast \mu_2$
 denote the convolution of these two measures (see \cite{Rudin1962} or
 \cite{folland2016course}). 

\begin{lemma}
\label{lem:weaklimitconvolutionlemma1}
Let $x$ and $y$ be real numbers in $\T$ such that $x$ and $y$ are
independent in the sense of condition \ref{def:independentstrings} and
let $d,e \in \Z$. Then, for any $\mu \in \mathcal{W}_{dx+ey}$ there
exist $\mu_1 \in \mathcal{W}_{dx}$ and $\mu_2 \in \mathcal{W}_{ey}$
such that $\mu = \mu_1 \ast \mu_2$.
\end{lemma}
\begin{proof}
Consider any $\mu \in \mathcal{W}_{dx+ey}$. Let $\langle n_m
\rangle_{m=0}^{\infty}$ be a subsequence such that 
\begin{align*}
	\lim\limits_{m \to \infty} \frac{1}{n_m}\sum_{j=0}^{n_m-1} e^{2 \pi i k 2^j (dx+ey) } =c_k
\end{align*}
where $\langle c_k \rangle_{k \in \Z}$ are the Fourier coefficients of
$\mu$. Using Prokhorov's Theorem, we assume without loss of generality
that $\langle n_m \rangle_{m=0}^{\infty}$  is such that, 
\begin{align*}
	\lim\limits_{m \to \infty} \frac{1}{n_m}\sum_{j=0}^{n_m-1} e^{2 \pi i k 2^j dx } =c^1_k
\end{align*}
and,
\begin{align*}
	\lim\limits_{m \to \infty} \frac{1}{n_m}\sum_{j=0}^{n_m-1} e^{2 \pi i k 2^j ey } =c^2_k
\end{align*}
where $\langle c^1_k \rangle_{k \in \Z}$ and $\langle c^2_k \rangle_{k
  \in \Z}$ are the Fourier coefficients of measures $\mu_1 \in
\mathcal{W}_{dx}$ and $\mu_2 \in \mathcal{W}_{ey}$ respectively. Using
(\ref{eq:independentsumproofeq8}) we obtain that $c_k=c^1_k c^2_k$ for
every $k \in \Z$. 

  If $\langle c^\ast_k \rangle_{k \in \Z}$ denotes the Fourier
  coefficients of $\mu_1 \ast \mu_2$, then it follows that
  $c^\ast_k=c^1_k c^2_k =c_k$ for every $k \in \Z$ (see Theorem 1.3.3
  from \cite{Rudin1962}). Hence, using the Bochner's Theorem we obtain
  that $\mu = \mu_1 \ast \mu_2$.  
\end{proof}

\begin{lemma}
\label{lem:weaklimitconvolutionlemma2}
If $x$ and $y$ are real numbers in $\T$ such that $x$ and $y$ are
independent in the sense of condition \ref{def:independentstrings}.
Let $d,e \in \Z$  and $q \in \Q$. Then, for any $\mu_1 \in
\mathcal{W}_{dx}$ there exist $\mu \in \mathcal{W}_{dx+ey}$ and $\mu_2
\in \mathcal{W}_{ey}$ such that $\mu = \mu_1 \ast \mu_2$. 
\end{lemma}
\begin{proof}
Let $\langle n_m \rangle_{m=0}^{\infty}$ be a subsequence such that 
\begin{align*}
	\lim\limits_{m \to \infty} \frac{1}{n_m}\sum_{j=0}^{n_m-1} e^{2 \pi i k 2^j dx } =c^1_k
\end{align*}
where $\langle c^1_k \rangle_{k \in \Z}$ are the Fourier coefficients
of $\mu_1$. Using the Prokhorov's Theorem, we assume without loss of
generality that $\langle n_m \rangle_{m=0}^{\infty}$  is such that, 
\begin{align*}
  \lim\limits_{m \to \infty} \frac{1}{n_m}\sum_{j=0}^{n_m-1} e^{2 \pi i k 2^j (dx+ey) } =c_k
\end{align*}
and, 
\begin{align*}
	\lim\limits_{m \to \infty} \frac{1}{n_m}\sum_{j=0}^{n_m-1} e^{2 \pi i k 2^j ey } =c^2_k
\end{align*}
where $\langle c_k \rangle_{k \in \Z}$ and $\langle c^2_k \rangle_{k
  \in \Z}$ are the Fourier coefficients of measures $\mu \in
\mathcal{W}_{dx+ey}$ and $\mu_2 \in \mathcal{W}_{ey}$ respectively.
Using (\ref{eq:independentsumproofeq8}) we obtain that $c_k=c^1_k
c^2_k$ for every $k \in \Z$. Now the lemma follows using the same
argument in the proof of Lemma \ref{lem:weaklimitconvolutionlemma1}. 
\end{proof}

The proofs of the following bounds on R\'{e}nyi dimension of
convolutions crucially employ inequalities from Hochman
\cite{Hochman2014}.

\begin{lemma}
\label{lem:renyidimensionofconvolutionlowerbound}
For any measures $\mu_1$ and $\mu_2$ on $\T$,
$\underline{\dim}_{R}(\mu_1 \ast \mu_2) \geq
\max\{\underline{\dim}_{R}(\mu_1),\underline{\dim}_{R}(\mu_2)\}$ 
\end{lemma}
\begin{proof}
 Using Corollary 4.10 from \cite{Hochman2014} and the fact that $\mu_1
 \ast \mu_2 = \mu_2 \ast \mu_1$ (see \cite{Rudin1962} or
 \cite{folland2016course}) we obtain that, 
 \begin{align*}
 \frac{\mathbf{H}^2_n (\mu_1 \ast \mu_2)}{n} \geq \frac{\mathbf{H}^2_n
   (\mu_1)}{n} - O\left(\frac{1}{n}\right). 	 
 \end{align*}
 By applying $\liminf$ on both sides, we obtain that
 $\underline{\dim}_{R}(\mu_1 \ast \mu_2) \geq
 \underline{\dim}_{R}(\mu_1)$. Using Corollary 4.10 from
 \cite{Hochman2014} and the fact that $\mu_1 \ast \mu_2 = \mu_2 \ast
 \mu_1$ it follows that  $\underline{\dim}_{R}(\mu_1 \ast \mu_2) \geq
 \underline{\dim}_{R}(\mu_2)$. This completes the proof of the lemma. 
\end{proof}

The upper bound for the R\'{e}nyi dimension of the convolution is as
follows. 
\begin{lemma}
\label{lem:renyidimensionofconvolutionupperbound}
Let $\mu_1$ and $\mu_2$ be measures on $\T$. Then,
$\underline{\dim}_{R}(\mu_1 \ast \mu_2) \leq
\underline{\dim}_{R}(\mu_1)+\overline{\dim}_{R}(\mu_2)$ 
\end{lemma}
\begin{proof}
	As in \cite{Hochman2014} for any measure $\mu$ on $\T$, let
        $\sigma_m \mu$ denote the $m$-discretization of $\mu$ defined
        as $\sigma_m \mu = \sum_{w \in \Sigma^m} \mu(I_w)
        \delta_{v(w0^\infty)}$. Let $X$ denote the discrete random
        variables which takes the value $v(w0^\infty)$ with
        probability $\mu_1(I_w)$ for every $w \in \Sigma^m$. Let $Y$
        be the analogous random variable defined using $\mu_2$ such
        that $X$ and $Y$ are independent. Using routine information
        theoretic arguments involving the data processing inequality
        (see \cite{CovTho91}), it follows that $H(X + Y) \leq H(X) +
        H(Y)$. Since $X$ and $Y$ are independent, $\sigma_m \mu_1 \ast
        \sigma_m \mu_2$ denotes the distribution of the random
        variable $X+Y$. Therefore,
	\begin{align}
	\label{eqn:renyidimensionupperboundeq1}
	\frac{\mathbf{H}^2_m(\sigma_m \mu_1 \ast \sigma_m \mu_2)}{m}
        &\leq \frac{\mathbf{H}^2_m(\sigma_m \mu_1) +
          \mathbf{H}^2_m(\sigma_m \mu_2)}{m} = \frac{\mathbf{H}^2_m(
          \mu_1) + \frac{1}{m}\mathbf{H}^2_m( \mu_2)}{m}.
	\end{align}
	Using Lemma 4.8 from \cite{Hochman2014}, we have,
	\begin{align}
	\label{eqn:renyidimensionupperboundeq2}
	\left\lvert \frac{1}{m}\mathbf{H}^2_m(\sigma_m \mu_1 \ast \sigma_m \mu_2) - \frac{1}{m}\mathbf{H}^2_m(\mu_1 \ast \mu_2) \right\rvert \leq O\left( \frac{1}{m} \right).
	\end{align}
	From (\ref{eqn:renyidimensionupperboundeq1}) and (\ref{eqn:renyidimensionupperboundeq2}), we get 
	\begin{align*}
	\frac{\mathbf{H}^2_m(\mu_1 \ast \mu_2)}{m} &\leq \frac{\mathbf{H}^2_m( \mu_1)+\mathbf{H}^2_m( \mu_2)}{m} + O\left( \frac{1}{m} \right) 
	\leq \frac{\mathbf{H}^2_m( \mu_1)}{m} + \sup_{n \geq m} \frac{\mathbf{H}^2_n( \mu_2)}{n} + O\left( \frac{1}{m} \right). 
	\end{align*}
	Taking $\liminf$ on both sides we get that, and noting that
        the second and third terms above have limits, we get the
        required result.
%% 	\begin{align*}
%% 	\liminf\limits_{m \to \infty} \frac{\mathbf{H}^2_m(\mu_1 \ast \mu_2)}{m} &\leq \liminf\limits_{m \to \infty} \frac{\mathbf{H}^2_m( \mu_1)}{m} + \lim\limits_{m \to \infty}\sup\limits_{n \geq m} \frac{\mathbf{H}^2_n( \mu_2)}{n} + \lim\limits_{m \to \infty} O\left( \frac{1}{m} \right) 
%% %% 	&= \liminf\limits_{m \to \infty} \frac{\mathbf{H}^2_m( \mu_1)}{m} + \limsup\limits_{m \to. \infty} \frac{\mathbf{H}^2_m( \mu_2)}{m}.
%% 	\end{align*}
%% 	The required inequality directly follows from the above.
\end{proof}

By applying $\limsup$ instead of $\liminf$ in the proofs of the bounds above we get the following inequalities for R\'{e}nyi upper dimension of $\mu_1 \ast \mu_2$.

\begin{lemma}
\label{lem:boundsforrenyiupperdimension}
For any measures $\mu_1$ and $\mu_2$ on $\T$,
$\overline{\dim}_{R}(\mu_1 \ast \mu_2) \geq
\max\{\overline{\dim}_{R}(\mu_1),\overline{\dim}_{R}(\mu_2)\}$ and $\overline{\dim}_{R}(\mu_1 \ast \mu_2) \leq
\overline{\dim}_{R}(\mu_1)+\overline{\dim}_{R}(\mu_2)$ 
\end{lemma}

The following is our main result.

\begin{theorem}
\label{thm:independentaddtioninequalities}
Let $x$ and $y$ be real numbers in $\T$ such that $x$ and $y$ are
independent in the sense of condition \ref{def:independentstrings}.
Then for any $d,e \in \Z$, 
\begin{enumerate}
	\item\label{item:dimensionboundsitem1} $\dim_{FS}(dx+ey) \geq \max
\{\dim_{FS}(dx), \dim_{FS}(ey) \}$ and $\dim_{FS}(dx+ey) \leq
\dim_{FS}(dx) + \Dim_{FS}(ey)$
 	\item\label{item:dimensionboundsitem2} $\Dim_{FS}(dx+ey) \geq \max
\{\Dim_{FS}(dx), \Dim_{FS}(ey) \}$ and $\Dim_{FS}(dx+ey) \leq
\Dim_{FS}(dx) + \Dim_{FS}(ey)$
\end{enumerate}
 
\end{theorem}

\begin{proof}
We prove the bounds in \ref{item:dimensionboundsitem1}. in Consider any $\mu \in \mathcal{W}_{dx+ey}$. Using Lemma
\ref{lem:weaklimitconvolutionlemma1} we get that there exists $\mu_1
\in \mathcal{W}_{dx}$ and $\mu_2 \in \mathcal{W}_{ey}$ such that $\mu
= \mu_1 \ast \mu_2$. Now, it follows from Lemma
\ref{lem:renyidimensionofconvolutionlowerbound} that
$\underline{\dim}_{R}(\mu)=\underline{\dim}_{R}(\mu_1 \ast \mu_2) \geq
\underline{\dim}_{R}(\mu_1)$. On applying Theorem
\ref{thm:weylaveragesconvergentcase_torus} for $dx \in \T$, we get
$\underline{\dim}_{R}(\mu) \geq \dim_{FS}(dx)$. Since $\mu$ was
arbitrary, applying Theorem \ref{thm:weylaveragesconvergentcase_torus}
for $dx+ey \in \T$, we obtain $\dim_{FS}(dx+ey) \geq \dim_{FS}(dx)$.
The proof of $\dim_{FS}(dx+ey) \geq \dim_{FS}(ey)$ is similar. This
completes the proof of the first inequality.  

In order to show the second inequality, consider any $\mu_1 \in
\mathcal{W}_{dx}$. Using Lemma \ref{lem:weaklimitconvolutionlemma2},
there exist $\mu \in \mathcal{W}_{dx+ey}$ and $\mu_2 \in
\mathcal{W}_{ey}$ such that $\mu = \mu_1 \ast \mu_2$. Now using Lemma
\ref{lem:renyidimensionofconvolutionupperbound}, it follows that
$\underline{\dim}_{R}(\mu) = \underline{\dim}_{R}(\mu_1 \ast \mu_2)
\leq \underline{\dim}_{R}(\mu_1) + \overline{\dim}_{R}(\mu_2)$. On
applying Theorem \ref{thm:weylaveragesconvergentcase_torus} for the
points $dx+ey \in \T$ and $ey \in \T$, we get $\dim_{FS}(dx+ey) \leq
\underline{\dim}_{R}(\mu_1) + \Dim_{FS}(ey)$. Since $\mu_1$ was
arbitrary, applying Theorem \ref{thm:weylaveragesconvergentcase_torus}
for $dx \in \T$, we obtain $\dim_{FS}(dx+ey) \leq \dim_{FS}(dx) +
\Dim_{FS}(ey)$. The inequalities for finite-state strong dimension in \ref{item:dimensionboundsitem2} follows using similar arguments by using Lemma \ref{lem:boundsforrenyiupperdimension} instead of Lemmas \ref{lem:renyidimensionofconvolutionlowerbound} and \ref{lem:renyidimensionofconvolutionupperbound}.
\end{proof}

The following is an immediate corollary of the above theorem.

\begin{corollary}
\label{cor:dimensionofindependentsum}
If $x$ and $y$ are real numbers in $\T$ such that $x$ and $y$ are
independent in the sense of condition \ref{def:independentstrings},
then for any $q \in \Q$, 
\begin{enumerate}
	\item $\dim_{FS}(x+qy) \geq \max
\{\dim_{FS}(x),\dim_{FS}(y)\}$ and  $\dim_{FS}(x+qy)	\leq
\dim_{FS}(x)+\Dim_{FS}(y)$
	\item $\Dim_{FS}(x+qy) \geq \max
\{\Dim_{FS}(x),\Dim_{FS}(y)\}$ and  $\Dim_{FS}(x+qy)	\leq
\Dim_{FS}(x)+\Dim_{FS}(y)$
\end{enumerate}
\end{corollary}
\begin{proof}
 Let $e,d$ be integers such that $q=e/d$ in the reduced form. Since
 $\dim_{FS}(r+e/d \times y)=\dim_{FS}(dr+ey)$ and $\Dim_{FS}(r+e/d \times y)=\Dim_{FS}(dr+ey)$ due to Theorem
 \ref{thm:wallstheorem}, the required conclusion follows immediately
 from Theorem \ref{thm:independentaddtioninequalities}.
 \end{proof}
 
On considering the case when $\Dim_{FS}(y)=0$, we obtain the following
corollaries, generalizing earlier results by Doty, Lutz, Nandakumar
\cite{DLN06} and Aistleitner \cite{Aistleitner2011}, regarding the
preservation of finite-state dimension under addition with an
independent sequence having zero finite-state strong dimension.
  
 \begin{corollary}
\label{cor:preservationofdimensiononindependentaddition}
 	If $x$ and $y$ are real numbers in $\T$ such that $x$ and $y$
        are independent in the sense of condition
        \ref{def:independentstrings} with $\Dim_{FS}(x)=0$, then for
        any $q \in \Q$, $\dim_{FS}(x+qy)=\dim_{FS}(x)$ and $\Dim_{FS}(x+qy)=\Dim_{FS}(x)$. 
 \end{corollary}
 
%%%%%%%%%%%%%%%%%%%%%%%%%%%%%%%%
%% Aistleitner as a corollary %%
%%                            %%
%%%%%%%%%%%%%%%%%%%%%%%%%%%%%%%%

It is easy to verify that any string in $\mathcal{C}$ is independent
of any other string $x \in \Sigma^\infty$. Thus we obtain the
following generalization of Aistleitner's result to every dimension
\cite{Aistleitner2011}.

\begin{corollary}
\label{thm:aistleitnerfirsttheoremgeneralization}
	If $y$ is any real number in $\mathcal{C}$, then for any $x
        \in \T$ and $q \in \Q$, $\dim_{FS}(x+qy) = \dim_{FS}(x)$ and $\Dim_{FS}(x+qy) = \Dim_{FS}(x).$
\end{corollary}

\section*{Acknowledgements}
The authors would like to thank Michael Hochman for technical clarifications regarding his paper \cite{Hochman2014}.

\bibliography{main}

\appendix

\section*{Appendix}
In section A, we give a brief account of some
important equivalent characterizations of finite-state dimension.

\section{Equivalent characterizations of finite-state dimension}

Finite-state dimension was originally defined using finite-state
$s$-gales by Dai, Lathrop, Lutz and Mayordomo \cite{dai2004finite}. We
employed the equivalent characterization of finite-state dimension in
terms of block entropy rates given by Bourke, Hitchcock and
Vinodchandran \cite{bourke2005entropy} in Definition
\ref{def:finitestatedimension}. In this section, we give a brief
account of the original definition of finite-state dimension in terms
of finite-state $s$-gales and an equivalent characterization in terms
of finite-state compression ratios given in \cite{dai2004finite} and
\cite{athreya2007effective}. We give these formulations for the binary
alphabet $\Sigma=\{0,1\}$ for the sake of simplicity. It is routine to
extend these characterizations to arbitrary alphabets.

\subsection{Finite-state dimension using finite-state $s$-gales (\cite{dai2004finite}, \cite{athreya2007effective})}
We first define an $s$-gale
\begin{definition}[$s$-gale \cite{dai2004finite},\cite{LutzDimension2003}]
	Let $s \in [0,\infty)$. A function $d:\Sigma^* \to [0,\infty)$
            is an $s$-gale if it satisfies, $d(\lambda) < \infty$ and
	\begin{align*}
	d(w) = \frac{1}{2^s} \left(d(w0)+d(w1)\right)
	\end{align*}
	for every $w \in \Sigma^*$.
\end{definition}

Now, we define the success criteria for $s$-gales and the corresponding winning sets,
\begin{definition}[Success criteria for $s$-gales \cite{dai2004finite}, \cite{athreya2007effective}]
Let $s \in [0,\infty)$ and let $d$ be an $s$-gale.
\begin{enumerate}
	\item We say that $d$ \emph{succeeds on the sequence} $x \in \Sigma^\infty$ if,
	\begin{align*}
	\limsup\limits_{n \to \infty} d(x_0^{n-1})=\infty.	
	\end{align*}
 	And, the \emph{success set} of $d$ is defined as $S^\infty[d]=\{x \in \Sigma^\infty \mid d \text{ succeeds on } x\}$.
 	\item We say that $d$ \emph{succeeds strongly on the sequence} $x \in \Sigma^\infty$ if,
	\begin{align*}
	\liminf\limits_{n \to \infty} d(x_0^{n-1})=\infty.	
	\end{align*}
 	And, the \emph{strong success set} of $d$ is defined as $S_{str}^\infty[d]=\{x \in \Sigma^\infty \mid d \text{ succeeds strongly on } x\}$.
\end{enumerate}
\end{definition}

Now, we define finite-state gamblers.
\begin{definition}[Finite-state gamblers \cite{dai2004finite}, \cite{SchnorrStimm72}, \cite{Feder1991}]
	A \emph{finite-state gambler} is a $5$-tuple, $G=(Q,\delta,\beta,q_0,c_0)$ where,
	\begin{itemize}
		\item $Q$ is a non-empty set of \emph{states}.
		\item $\delta: Q \times \Sigma \to Q$ is the \emph{transition function}.
		\item $\beta:Q \to \Q \cap [0,1]$ is the \emph{betting function}.
		\item $q_0 \in Q$ is the \emph{initial state}.
		\item $c_0 \geq 0$ is the \emph{initial capital} of the gambler.
	\end{itemize}
\end{definition}

Let $\delta^*: Q \times \Sigma^* \to \Q$ denote the natural \emph{extension} of $\delta$ to finite strings in $\Sigma^*$ defined recursively as,
\begin{align*}
\delta(q,\lambda) &= q \\
\delta(q,wb) &=\delta(\delta^*(q,w),b).	
\end{align*}

Finite-state dimension was defined in \cite{dai2004finite} in terms of finite-state $s$-gales. We define finite-state $s$-gales corresponding to finite-state gamblers.

\begin{definition}[Finite-state $s$-gales \cite{dai2004finite}]
	An $s$-gale of a finite-state gambler $G=(Q,\delta,\beta,q_0,c_0)$ is the function $d^{(s)}_G: \Sigma^* \to [0,\infty)$ defined recursively as,
	\begin{align*}
	d^{(s)}_G(\lambda)&=c_0\\
	d^{(s)}_G(wb)&=2^s d^{(s)}_G(w) \left((1-b)(1-\beta(\delta^*(w)))+b \beta(\delta^*(w)) \right)	
	\end{align*}
	for every $w \in \Sigma^*$ and $b \in \Sigma$. A \emph{finite-state $s$-gale} is an $s$-gale $d$ for which there exists a finite-state gambler $G$ such that $d=d^{(s)}_G$.
\end{definition}

The following is the original definition of finite-state dimension in terms finite-state $s$-gales, given in \cite{dai2004finite}.
\begin{definition}[Finite-state dimension \cite{dai2004finite}]
	Let $x \in \Sigma^\infty$. The \emph{finite-state dimension} of $x \in \Sigma^\infty$ is defined as,
	\begin{align*}
		\dim_{FS}(x)=\inf\{s \in [0,\infty) \mid \exists \text{ a finite-state }s\text{-gale }d\text{ such that }x \in S^{\infty}[d]\}.
	\end{align*}	
\end{definition}

Similarly, the finite-state strong dimension was defined in \cite{athreya2007effective} by replacing $S^{\infty}[d]$ with $S_{str}^{\infty}[d]$.
\begin{definition}[Finite-state strong dimension \cite{athreya2007effective}]
	Let $x \in \Sigma^\infty$. The \emph{finite-state strong dimension} of $x \in \Sigma^\infty$ is defined as,
	\begin{align*}
		\Dim_{FS}(x)=\inf\{s \in [0,\infty) \mid \exists \text{ a finite-state }s\text{-gale }d\text{ such that }x \in S_{str}^{\infty}[d]\}.
	\end{align*}	
\end{definition}

We remark that finite-state dimension and finite-state strong dimension were defined in \cite{dai2004finite} and \cite{athreya2007effective} more generally for subsets of $\Sigma^\infty$. But, we only require the concept of finite-state dimensions of individual sequences in $\Sigma^\infty$ for developing our results.

\subsection{Finite-state compression and finite-state dimension (\cite{dai2004finite}, \cite{athreya2007effective})}

Finite-state dimension is also characterized in terms of compression ratios using information lossless finite-state compressors (\cite{dai2004finite},\cite{athreya2007effective},\cite{bourke2005entropy}). Let $\mathcal{C}$ be the collection of all information lossless finite-state compressors. Let $\mathcal{C}_k$ be the collection of all $k$-state information lossless finite-state compressors. The following compressibility characterization of finite-state dimension and finite-state strong dimension were given in \cite{dai2004finite} and \cite{athreya2007effective} respectively.
\begin{theorem}[\cite{dai2004finite},\cite{athreya2007effective},\cite{bourke2005entropy}]
	For any $x\in \Sigma^\infty$,
	\begin{align*}
	\dim_{FS}(x) = \inf\limits_{C \in \mathcal{C}} \liminf\limits_{n \to \infty} \frac{\lvert C(x_0^{n-1}) \rvert}{n}	
	\end{align*}
	and,
	\begin{align*}
	\Dim_{FS}(x) =  \inf\limits_{k \in \N} \limsup\limits_{n \to \infty} \min\limits_{C \in \mathcal{C}_k} 	\frac{\lvert C(x_0^{n-1}) \rvert}{n}.
	\end{align*}
\end{theorem}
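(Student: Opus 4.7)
The plan is to bridge the two compression-ratio quantities with the disjoint block entropy characterization of finite-state dimension given by equations \ref{eqn:limitdimension} and \ref{eqn:limitstrongdimension}, namely
\begin{align*}
\dim_{FS}(x) = \lim_{l\to\infty}\liminf_{n\to\infty} H^d_l(x_0^{n-1}) \quad\text{and}\quad \Dim_{FS}(x) = \lim_{l\to\infty}\limsup_{n\to\infty} H^d_l(x_0^{n-1}).
\end{align*}
It then suffices to show that each side of the two equalities in the theorem coincides with the corresponding right-hand side above. Since these block-entropy characterizations are already imported from \cite{bourke2005entropy} and \cite{kozachinskiy2019two}, the entire proof can be arranged as two matching inequalities between compression rates and block entropy rates.

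For the lower bounds, the plan is to invoke the Ziv--Lempel inequality from \cite{LZ78}: for every $k$-state information lossless compressor $C$, every $l$, and every $n$,
\begin{align*}
|C(x_0^{n-1})| \;\geq\; n\, H^d_{l}(x_0^{n-1}) \;-\; \varepsilon(n,k,l),
\end{align*}
where $\varepsilon(n,k,l)/n \to 0$ as $n \to \infty$ for fixed $k$ and $l$. Dividing by $n$ and applying $\liminf_{n\to\infty}$, then the infimum over $C\in\mathcal{C}$, then sending $l\to\infty$, yields $\inf_{C} \liminf_{n} |C(x_0^{n-1})|/n \geq \dim_{FS}(x)$. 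For the strong-dimension side, I would apply $\min_{C\in\mathcal{C}_k}$ inside, take $\limsup_{n}$, and then $\inf_{k}$, observing that $l$ can be chosen as a function of $k$ (as in the Ziv--Lempel construction) to obtain $\inf_{k}\limsup_{n}\min_{C\in\mathcal{C}_k} |C(x_0^{n-1})|/n \geq \Dim_{FS}(x)$.

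For the upper bounds, the plan is to exhibit, for each block length $l$, a family of information lossless FSCs whose compression ratios approach $H^d_l$. Two options suffice: (i) the universal LZ78 compressor, whose compression ratio is known (again by \cite{LZ78}) to satisfy $|C_{LZ}(x_0^{n-1})|/n \leq H^d_l(x_0^{n-1}) + o(1)$ for every $l$; or (ii) for each $l$ a Shannon--Fano-style block coder with a number of states that depends only on $l$, reading $x$ in non-overlapping $l$-blocks and assigning a prefix-free code to each block $w \in \Sigma^l$. Taking $\liminf_n$ of $|C(x_0^{n-1})|/n$ followed by the infimum over such $C$ (and letting $l \to \infty$) gives $\inf_C \liminf_n |C(x_0^{n-1})|/n \leq \dim_{FS}(x)$. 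For the strong-dimension side, with $l$ fixed the family of $k$-state coders with $k = k(l)$ yields $\min_{C\in\mathcal{C}_{k(l)}}|C(x_0^{n-1})|/n \leq H^d_l(x_0^{n-1}) + o(1)$, whence applying $\limsup_n$ and $\inf_k$ delivers the matching inequality for $\Dim_{FS}$.

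The main obstacle I expect is the $\Dim_{FS}$ upper bound, because the quantity $\inf_{k} \limsup_{n} \min_{C\in\mathcal{C}_k}|C(x_0^{n-1})|/n$ allows a different optimal compressor for each prefix length $n$, whereas a single block-Huffman code tuned to a specific empirical distribution may fail to perform well on other prefixes. The cleanest remedy is to fix the block length $l$ and define a $k(l)$-state coder whose state space is large enough to implement any Huffman tree on $l$-length blocks; then, for each $n$, the inner minimum in $\mathcal{C}_{k(l)}$ is free to pick the coder matched to the empirical distribution of $x_0^{n-1}$, which (by standard Shannon coding bounds) encodes the string with at most $n\,H^d_l(x_0^{n-1}) + O(1)$ bits. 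Taking $\limsup_n$ and then $\inf_l$ (equivalently $\inf_k$) recovers $\Dim_{FS}(x)$, completing the argument.
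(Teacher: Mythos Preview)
The paper does not supply its own proof of this theorem. It is stated in the appendix purely as a citation of known results from \cite{dai2004finite}, \cite{athreya2007effective}, and \cite{bourke2005entropy}, and the remark following Definition~\ref{def:finitestatedimension} already says the equivalence ``follows immediately from the results in \cite{LZ78} and \cite{dai2004finite}.'' So there is nothing in the paper to compare your proposal against.

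That said, your sketch is in the spirit of how those cited works establish the equivalence: the Ziv--Lempel lower bound on information-lossless finite-state compressors, combined with block-coding upper bounds, bridged through the disjoint block entropy characterization. Your identification of the $\Dim_{FS}$ upper bound as the delicate step, and your remedy of fixing $l$ and letting the inner $\min_{C\in\mathcal{C}_{k(l)}}$ select the Huffman/Shannon code matched to the empirical distribution of each prefix, is exactly the standard device used in \cite{athreya2007effective}. So your plan is correct, but it is a reconstruction of the cited literature rather than a comparison point with anything the present paper does.
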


\section{Preliminaries}
For a given block length $l$, we define the \emph{disjoint block
entropy} over $x_0^{n-1}$ as follows.
\begin{align*}
H^d_l(x_0^{n-1}) = -\frac{1}{l}\sum_{w \in \Sigma^l}
P^d(x_0^{n-1},w)\log(P^d(x_0^{n-1},w)).
\end{align*}

Kozachinskiy and Shen (\cite{kozachinskiy2019two}) also demonstrated
that\footnote{Though Shen and Kozachinskiy proved the equivalence
between disjoint block entropies and sliding block entropies for
finite-state dimension, the same techniques in
\cite{kozachinskiy2019two} proves the equivalences for finite-state
strong dimension by replacing $\liminf$'s with $\limsup$'s.},

\begin{align}
\label{eqn:limitdimension}
\dim_{FS}(x) = \lim\limits_{l \to \infty} \liminf\limits_{n \to \infty} H^d_l(x_0^{n-1}) = \lim\limits_{l \to \infty} \liminf\limits_{n \to \infty} H_l(x_0^{n-1})
\end{align}
\begin{align}
\label{eqn:limitstrongdimension}
\Dim_{FS}(x) = \lim\limits_{l \to \infty} \limsup\limits_{n \to \infty} H^d_l(x_0^{n-1}) = \lim\limits_{l \to \infty} \limsup\limits_{n \to \infty} H_l(x_0^{n-1})
\end{align}

These equalities are used in the proofs of certain results in the paper.

\section{Preservation of finite-state dimension under arithmetic and
  combinatorial operations}
  
 We require the following lemma for proving Lemma \ref{lem:renyibaseinvariance}. 
\begin{lemma}
\label{lem:renyibaselemma}
Let $m_1,m_2$ be any two partition factors. For any $l>0$, let $n$ be such that $m_1^n \leq m_2^l < m_1^{n+1}$. Then,
\begin{align*}
\left\lvert \frac{\mathbf{H}_{m_2}^l(\mu')}{l\log m_2}-  \frac{\mathbf{H}_{m_1}^n(\mu')}{n \log m_1}\right\rvert	 \leq \frac{2}{n}.
\end{align*}
\end{lemma}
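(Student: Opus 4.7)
The plan is to compare the two block entropies by passing through the common refinement of their underlying partitions of $\T$. Let $\mathcal{P}_1 = \{I^{m_1}_w : w \in \Sigma_{m_1}^n\}$ and $\mathcal{P}_2 = \{I^{m_2}_v : v \in \Sigma_{m_2}^l\}$, let $\mathcal{Q} = \mathcal{P}_1 \vee \mathcal{P}_2$ be their common refinement, and abbreviate $A = \mathbf{H}^{m_1}_n(\mu')$ and $B = \mathbf{H}^{m_2}_l(\mu')$.

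The hypothesis $m_1^n \leq m_2^l < m_1^{n+1}$ says precisely that the length $m_2^{-l}$ of a $\mathcal{P}_2$-cell lies in the interval $(m_1^{-(n+1)}, m_1^{-n}]$. Hence each $\mathcal{P}_2$-cell, being no longer than a $\mathcal{P}_1$-cell, meets at most two $\mathcal{P}_1$-cells, while each $\mathcal{P}_1$-cell is cut by the endpoints of $\mathcal{P}_2$-cells into at most $\lceil m_2^l/m_1^n \rceil + 1 \leq m_1 + 1$ pieces of $\mathcal{Q}$. Using $H(\mathcal{Q}) = A + H(\mathcal{Q}\mid \mathcal{P}_1) = B + H(\mathcal{Q}\mid \mathcal{P}_2)$ together with the refinement bound $H(\mathcal{Q}) \geq \max(A, B)$ and the above cell counts yields the two asymmetric comparisons
\begin{align*}
A - B \leq \log 2 \quad \text{and} \quad B - A \leq \log(m_1 + 1).
\end{align*}

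To convert these entropy comparisons into the claimed estimate, decompose
\begin{align*}
\frac{B}{l \log m_2} - \frac{A}{n \log m_1}
= \underbrace{\frac{B - A}{l \log m_2}}_{\text{(I)}} + \underbrace{A\left(\frac{1}{l \log m_2} - \frac{1}{n \log m_1}\right)}_{\text{(II)}}.
\end{align*}
Since $n \log m_1 \leq l \log m_2 < (n+1)\log m_1$ and $0 \leq A \leq n \log m_1$, term $\text{(II)}$ is nonpositive with $|\text{(II)}| \leq 1 - n\log m_1/(l\log m_2) < 1/n$. A brief sign analysis on $\text{(I)}$ then finishes the bound: when $B \geq A$, one has $\text{(I)} \geq 0$ and $|\text{(I)}| \leq \log(m_1+1)/(n\log m_1) \leq 2/n$ (using $\log(m_1+1) \leq 2\log m_1$, valid for $m_1 \geq 2$), so the two terms partly cancel and their sum lies in $[-1/n,\, 2/n]$; when $A > B$, both terms are nonpositive, but now the tighter inequality $|B - A| \leq \log 2$ gives $|\text{(I)}| \leq 1/n$, so $|\text{(I)}| + |\text{(II)}| \leq 2/n$.

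The main subtlety is the asymmetry between the two entropy comparisons ($\log 2$ versus $\log(m_1+1)$); this is exactly what forces the sign-based case split, since a naive triangle inequality combining $|\text{(I)}|$ and $|\text{(II)}|$ would only yield the weaker bound $3/n$. Once the cases are separated the algebra is routine, and the inequality $\log(m_1+1) \leq 2\log m_1$ for $m_1 \geq 2$ accounts for the constant $2$ in the statement.
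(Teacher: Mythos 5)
Your proof is correct, and it takes a genuinely different --- and in fact sharper --- route than the paper's. The paper brings the two quotients over a common denominator, expands $l\log m_2 = n\log m_1 + \log\bigl(1 + (m_2^l - m_1^n)/m_1^n\bigr)$, and applies a straight triangle inequality; this produces terms analogous to your (I) and (II). For (II) it obtains $1/n$ exactly as you do. For (I), however, it asserts that each $\mathcal{P}_1$-cell overlaps at most $m_1$ cells of $\mathcal{P}_2$, giving $\lvert A - B\rvert \le \log m_1$ and hence (I) $\le 1/n$. That count is off by one: a cell of length $m_1^{-n}$ can overlap $m_1+1$ cells of length $m_2^{-l}$. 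Take $m_1=2$, $n=2$, $m_2=7$, $l=1$; then $[1/4,1/2)$ meets $[1/7,2/7)$, $[2/7,3/7)$, and $[3/7,4/7)$, and a measure concentrated on $[1/4,1/2)$ with mass $1/3$ on each of those three overlaps shows $B - A$ can equal $\log 3 > \log m_1$. With the corrected bound $\log(m_1+1)$, the paper's two terms sum only to $\bigl(\log(m_1+1)/\log m_1 + 1\bigr)/n$, which exceeds $2/n$ for $m_1=2$.

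Your sign-based case split is what actually rescues the constant $2$. When $B \ge A$, (I) is nonnegative and at most $\log(m_1+1)/(n\log m_1) \le 2/n$, while (II) is nonpositive and at least $-1/n$, so the sum lies in $[-1/n,\,2/n]$ and the positive error partly cancels. When $A > B$, the sharper one-sided bound $A - B \le \log 2$ --- valid because each $\mathcal{P}_2$-cell, being no longer than a $\mathcal{P}_1$-cell, meets at most two of them --- gives $\lvert\text{(I)}\rvert \le 1/n$, and now a plain triangle inequality closes the gap. This asymmetric treatment of the two directions of the entropy comparison, which the paper collapses into a single (and, as noted, too strong) symmetric bound, is the key extra idea. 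For the downstream use in Lemma~\ref{lem:renyibaseinvariance} any $O(1/n)$ bound would suffice, so the application is unaffected either way, but your argument is the one that actually establishes the lemma with the stated constant.
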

\begin{proof}
The left hand side is equal to,
\begin{align*}
&\left\lvert \frac{n \log m_1  \mathbf{H}_{m_2}^l(\mu')-\log (m_1^n+m_2^l-m_1^n)\mathbf{H}_{m_1}^n(\mu')}{l\log m_2 \cdot n \log m_1 } \right\rvert\\ &= \left\lvert \frac{n \log m_1 \mathbf{H}_{m_2}^l(\mu')-\left(n \log m_1 + \log \left(1+\frac{m_2^l-m_1^n}{m_1^n}\right)\right)\mathbf{H}_{m_1}^n(\mu')}{l\log m_2 \cdot n \log m_1 } \right\rvert \\
&= \left\lvert \frac{n \log m_1 (\mathbf{H}_{m_2}^l(\mu')-\mathbf{H}_{m_1}^n(\mu'))- \log \left(1+\frac{m_2^l-m_1^n}{m_1^n}\right)\mathbf{H}_{m_1}^n(\mu')}{l\log m_2 \cdot n \log m_1 } \right\rvert.
\end{align*}
Using the triangle inequality and the fact that $m_2^l \geq m_1^n$, we can upper bound the right hand side above by,
\begin{align*}
\frac{\lvert \mathbf{H}_{m_2}^l(\mu')-\mathbf{H}_{m_1}^n(\mu') \rvert}{\lvert l\log m_2 \rvert}	+ \frac{ \log \left(1+\frac{m_2^l-m_1^n}{m_1^n}\right)  \mathbf{H}_{m_1}^n(\mu')}{\lvert n \log m_1 \rvert^2}.
\end{align*}
 Since, $\mathbf{H}_{m_1}^n(\mu') \leq \log(m_1^n) = n \log m_1$ and $m_2^l-m_1^n \leq m_1^n(m_1-1)$, the second term is at most $1/n$. Since $m_2^l \geq m_1^n$, the first term above can be upper bounded by,
 \begin{align}
 \label{eqn:renyibaseinvariancelemmaterm}
 	\frac{\lvert \mathbf{H}_{m_2}^l(\mu')-\mathbf{H}_{m_1}^n(\mu') \rvert}{n \log m_1}.
 \end{align}
 Now, we make the following two observations. First, $\mathbf{H}_{m_2}^l(\mu')$ is the Shannon entropy corresponding to the probability distribution $\mu'$ over the finite set $\{I_w^{m_2}: w \in \Sigma_{m_2}^l\}$. Similarly, $\mathbf{H}_{m_1}^n(\mu')$ is the Shannon entropy corresponding to the probability distribution $\mu'$ over the finite set $\{I_w^{m_1}: w \in \Sigma_{m_1}^n\}$. Secondly, any interval in $\{I_w^{m_1}: w \in \Sigma_{m_1}^n\}$ intersects with at most $m_1$ other intervals in $\{I_w^{m_2}: w \in \Sigma_{m_2}^l\}$ since $m_1^n \leq m_2^l < m_1^{n+1}$. Hence, it follows that $\lvert \mathbf{H}_{m_2}^l(\mu')-\mathbf{H}_{m_1}^n(\mu') \rvert \leq \log m_1$ since conditioned on any interval in $\{I_w^{m_1}: w \in \Sigma_{m_1}^n\}$, there are at most $m_1$ possibilities among the intervals in $\{I_w^{m_2}: w \in \Sigma_{m_2}^l\}$. So, we obtain that \ref{eqn:renyibaseinvariancelemmaterm} is at most $1/n$. The two bounds that we obtained above completes the proof of the lemma.
\end{proof}

Now, we prove Lemma \ref{lem:renyibaseinvariance}. 
\begin{proof}[Proof of Lemma \ref{lem:renyibaseinvariance}]
	We show that,
	\begin{align*}
	\liminf\limits_{n \to \infty} \frac{\mathbf{H}_{m_1}^n(\mu')}{n\log m_1} = \liminf\limits_{n \to \infty} \frac{\mathbf{H}_{m_2}^n(\mu')}{n \log m_2}.	
	\end{align*}
	We show that the left hand side is less than or equal to the right hand side. The opposite inequality can be shown in a similar way by interchanging the roles of $m_1$ and $m_2$. The required inequality follows if for every positive $\epsilon$, there exists infinitely many $N$ such that,
	\begin{align*}
		\frac{\mathbf{H}_{m_1}^N(\mu')}{N\log m_1} \leq \liminf\limits_{n \to \infty} \frac{\mathbf{H}_{m_2}^n(\mu')}{n\log m_2} + \epsilon.
	\end{align*}
	In order to show this, we consider any $L$ such that,
	\begin{align}
	\label{eqn:liminfinequalityrenyi}
		\frac{\mathbf{H}_{m_2}^L(\mu')}{L\log m_2} \leq \liminf\limits_{n \to \infty} \frac{\mathbf{H}_{m_2}^n(\mu')}{n\log m_2} + \frac{\epsilon}{2}.
	\end{align}
	The existence of infinitely many such $L$ is guaranteed by the definition of limit infimum. Now, if $N$ is such that $m_1^N \leq m_2^L < m_1^{N+1}$, using Lemma \ref{lem:renyibaselemma} we have,
	\begin{align*}
		\frac{\mathbf{H}_{m_1}^N(\mu')}{N\log m_1} \leq \frac{\mathbf{H}_{m_2}^L(\mu')}{L\log m_2} + \frac{1}{N} \leq \liminf\limits_{n \to \infty} \frac{\mathbf{H}_{m_2}^n(\mu')}{n\log m_2} + \frac{\epsilon}{2} + \frac{1}{N}.
	\end{align*}
	For large enough $L$, $N$ also gets large enough so that $1/N$ is at most $\epsilon/2$. For such an $N$ we have,
	\begin{align*}
		\frac{\mathbf{H}_{m_1}^N(\mu')}{N\log m_1} \leq \liminf\limits_{n \to \infty} \frac{\mathbf{H}_{m_2}^n(\mu')}{n\log m_2} + \epsilon.
	\end{align*}
	Since there are infinitely many $L$ satisfying \ref{eqn:liminfinequalityrenyi}, there exists infinitely many $N$ satisfying the last inequality. The proof of the part corresponding to lower R\'enyi dimension is complete. The other part can be proved in a similar way.
\end{proof}

\section{$\mu$-normality and finite-state dimension}
\label{sec:munormality}
There are several known techniques for explicit constructions of
normal numbers (see, for example, the monographs by Kuipers and
Niederreiter \cite{KuipersNiederreiterUniform}, or Bugeaud
\cite{BugeaudUniform}), but constructions of those with finite-state
dimension $s \in [0,1)$ follow two techniques: first, to start with a
  normal sequence, and to ``dilute'' it with an appropriate fraction
  of simple patterns, as we did in Section
  \ref{sec:counterexampletoconvergence}, and second, to start with a
  coin with bias $p$ such that $-p\log_2 p - (1-p)\log_2(1-p)=s$, and
  consider any typical sequence drawn from this distribution (see also
  \cite{Miller2011}). The second technique does not directly yield a
  computable normal. As an application of Theorem
  \ref{thm:weylaveragesconvergentcase}, we show that a construction
  due to Mance and Madritsch \cite{madritsch2016construction}
  explicitly yields such sequences, which are computable if the given measure is computable. This technique involves the notion
  of $\mu$-normality used to generalize the Champernowne sequence
  \cite{Cham33}.

\begin{definition}[Mance, Madritsch \cite{madritsch2016construction}]
Let $\mu$ be a measure on $\Sigma^\infty$. We say that $x
\in \Sigma^\infty$ is \emph{$\mu$-normal} if, for every $w \in
\Sigma^*$, $ \lim_{n \to \infty} P(x_0^{n-1},w) = \mu(C_w) $.
\end{definition}
Let $T$ be the left shift transformation $T(x_0x_1x_2\dots)=x_1x_2x_3\dots$ on $\Sigma^\infty$. We say that a measure $\mu$ on $\Sigma^\infty$ is \emph{invariant} with respect to $T$ if for every $A \in \mathcal{B}(\Sigma^\infty)$, $\mu(T^{-1}(A))=A$. If $x$ is $\mu$-normal, as a consequence of Lemma \ref{lem:deltameasuresandcoutning} and Lemma \ref{lem:cylindersetconvergenceandweakconvergence}, we get that $\nu_n \wto \mu$ where $\langle \nu_n \rangle_{n=1}^{\infty}$ is the sequence of averages of Dirac measures constructed out of $\langle T^n x \rangle_{n=0}^{\infty}$. Therefore, we get the following lemma as a consequence of Theorem \ref{thm:weylcriterionforfsd} and the fact that stationary processes have a well defined entropy rate (see Section 3 from \cite{khinchin1957mathematical}).
\begin{lemma}
\label{lem:fsdofmunormal}
Let $\mu$ be a measure on $\Sigma^\infty$ and $x \in
\Sigma^\infty$ be a $\mu$-normal sequence. Then, $\mu$ is invariant with respect to $T$ and $
\dim_{FS}(x)=\Dim_{FS}(x)=H^+(\mu)=H^-(\mu)$.
\end{lemma}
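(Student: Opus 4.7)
The plan is to carry out the three claims in the lemma in sequence, using the remark immediately preceding the statement as the starting point. First I would establish weak convergence $\nu_n \wto \mu$ explicitly: by Lemma \ref{lem:deltameasuresandcoutning}, for every $w \in \Sigma^*$ with $l = |w|$, one has $\nu_n(C_w) = P(x_0^{n+l-2},w)$, and $\mu$-normality gives $P(x_0^{n+l-2},w) \to \mu(C_w)$ as $n \to \infty$. Lemma \ref{lem:cylindersetconvergenceandweakconvergence} then upgrades convergence on cylinders to full weak convergence, so $\nu_n \wto \mu$.

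Next I would derive $T$-invariance of $\mu$ by the standard Krylov--Bogolyubov style argument. Observe
\begin{align*}
T_*\nu_n \;=\; \frac{1}{n}\sum_{j=0}^{n-1} \delta_{T^{j+1}x} \;=\; \nu_n \;+\; \frac{1}{n}\bigl(\delta_{T^n x} - \delta_{x}\bigr),
\end{align*}
so for every $w \in \Sigma^*$, $|T_*\nu_n(C_w) - \nu_n(C_w)| \le 2/n \to 0$. Since $\nu_n(C_w) \to \mu(C_w)$ and $T_*\nu_n(C_w) \to T_*\mu(C_w)$ (continuity of $T^{-1}$ on cylinders), we obtain $T_*\mu(C_w) = \mu(C_w)$ for all $w$, and hence $T_*\mu = \mu$ on $\mathcal{B}(\Sigma^\infty)$ by the $\pi$-$\lambda$ theorem.

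For the dimension equalities, I would invoke Theorem \ref{thm:weylcriterionforfsd}. Because $\nu_n \wto \mu$ (the whole sequence, not merely a subsequence), every subsequence also converges weakly to $\mu$, and hence by uniqueness of weak limits $\mathcal{W}_x = \{\mu\}$. Theorem \ref{thm:weylcriterionforfsd} then yields $\dim_{FS}(x) = H^-(\mu)$ and $\Dim_{FS}(x) = H^+(\mu)$.

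The main obstacle—and the step that I expect to require the most care—is showing $H^-(\mu) = H^+(\mu)$, i.e.\ that the $\liminf$ and $\limsup$ in Definition \ref{def:upperandloweraverageentropies} coincide. Here the stationarity just established is essential. By $T$-invariance, the block distributions $\{\mu(C_w) : w \in \Sigma^n\}$ form a stationary stochastic process, so by Lemma \ref{lem:subadditivityofentropy} (or the classical subadditivity of block entropies for stationary sources, see Khinchin \cite{khinchin1957mathematical}), the sequence $\mathbf{H}_n(\mu)$ is subadditive in $n$ up to lower order terms; equivalently, $\mathbf{H}_n(\mu)/n$ has a limit as $n \to \infty$, namely the Kolmogorov--Sinai entropy rate of the stationary source $\mu$. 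Therefore $H^-(\mu) = \lim_n \mathbf{H}_n(\mu)/n = H^+(\mu)$, and combining with the previous step gives $\dim_{FS}(x) = \Dim_{FS}(x) = H^-(\mu) = H^+(\mu)$, as required.
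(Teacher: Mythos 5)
Your proposal is correct and follows essentially the same approach as the paper: weak convergence $\nu_n \wto \mu$ from $\mu$-normality via Lemmas \ref{lem:deltameasuresandcoutning} and \ref{lem:cylindersetconvergenceandweakconvergence}, $T$-invariance by passing to the limit in a relation equivalent to the paper's sliding-count decomposition $P(x_0^{n+l-2},w)=P(x_0^{n+l-2},0w)+P(x_0^{n+l-2},1w)+O(1/n)$ (your pushforward identity $T_*\nu_n=\nu_n+\tfrac{1}{n}(\delta_{T^n x}-\delta_x)$ is the same bookkeeping phrased measure-theoretically), then $\mathcal{W}_x=\{\mu\}$ fed into Theorem \ref{thm:weylcriterionforfsd}, and finally the stationary entropy-rate argument (Khinchin) for $H^-(\mu)=H^+(\mu)$. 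The only minor slip is citing Lemma \ref{lem:subadditivityofentropy} directly for subadditivity of $\mathbf{H}_n(\mu)$ — that lemma concerns string entropies $H_l(x_0^{n-1})$, though taking $n\to\infty$ in it under $\mu$-normality does yield $\mathbf{H}_{l+m}(\mu)\le\mathbf{H}_l(\mu)+\mathbf{H}_m(\mu)$, so the point stands.
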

\begin{proof}[Proof of Lemma \ref{lem:fsdofmunormal}]

It is enough to show that $\mu(T^{-1}(C_w))=\mu(C_w)$ for every string $w \in \Sigma^*$. Since $w$ is arbitrary, routine approximation arguments can be used to prove that $\mu$ is thus invariant measure with respect to the left shift transformation.

In order to show that $\mu(T^{-1}(C_w))=\mu(C_w)$, let us first observe that, $T^{-1}(C_w)=C_{0w} \cup C_{1w}$. It can be easily verified that,
\begin{equation}
\label{eqn:slidecountdecomposition}
P(x_0^{n+l-2},w)=P(x_0^{n+l-2},0w)+P(x_0^{n+l-2},1w)+O\left(\frac{1}{n}\right)
\end{equation}
since the slide counts for $0w$ and $1w$ together misses out at most constantly many counts of $w$ at the start and end of $x_0^{n+l-2}$. Now, since $\nu_n \Rightarrow \mu$, using Lemma \ref{lem:deltameasuresandcoutning} we have,
\begin{equation*}
	\lim\limits_{n \to \infty}\nu_n (C_{0w}) = \lim\limits_{n \to \infty} P(x_1^{n+l-2},0w) = \mu(C_{0w})
\end{equation*}
and,
\begin{equation*}
	\lim\limits_{n \to \infty}\nu_n (C_{1w}) = \lim\limits_{n \to \infty}P(x_1^{n+l-2},1w) = \mu(C_{1w}).
\end{equation*}
We also have $\lim\limits_{n \to \infty}\nu_n (C_w) = \mu(C_w)$. Hence from \ref{eqn:slidecountdecomposition} we get, $\mu(C_w)=\mu(C_{0w})+\mu(C_{1w})$ which implies that $\mu(C_w)=\mu(T^{-1}(C_w))$.

It is well-known that stationary processes have a well defined entropy rate (see Section 3 from \cite{khinchin1957mathematical}). The same techniques used in proving this claim can be used to show that $H^+(\mu)=H^-(\mu)$ for any invariant measure $\mu$ which along with Theorem \ref{thm:weylcriterionforfsd} completes a proof of the lemma. 

\end{proof}

We remark that for any $\alpha \in [0,1]$ there exists an invariant
measure $\mu$ on $[0,1]$ such that $H^+ (\mu)=H^-(\mu)=\alpha$. We can
further assume that $\mu$ is a Bernoulli measure.
%% In order to show
%% this, let us consider tossing a biased coin such that the outcome of
%% each toss event has Shannon entropy equal to $\alpha$. That is, let
%% $p$ be any probability measure on $\{0,1\}$ such that $-\p(0)\log_2
%% p(0)-p(1)\log_2 p(1)=\alpha$. Now let $\mu$ be the product measure
%% $p^{\N}$ defined on the Bernoulli space. Due to the additivity of
%% Shannon entropy over independent random variables, it follows that,
%% $-\sum_{w\in \Sigma^n}\mu(C_w)\log(\mu(C_w)) &=-n(p(0)\log_2
%% p(0)+p(1)\log_2 p(1))=n\alpha$.  It follows from the definitions of
%% lower and upper average entropies that $H^+(\mu)=H^-(\mu)=\alpha$. 
For any invariant measure $\mu$ on $\Sigma^\infty$, in Section 3 of
\cite{madritsch2016construction}, Mance and Madritsch construct
$\mu$-normal numbers by generalizing the construction of the
Champernowne sequence. We summarize the construction of $\mu$-normal
sequences given in \cite{madritsch2016construction} below.

{\bf Construction.} (Mance, Madritsch
\cite{madritsch2016construction}) Let $\mathbf{p}_1,\mathbf{p}_2, \dots,
\mathbf{p}_{2^l}$ be any ordering of the set of $l$-length strings
$\Sigma^l$. Let $m_l = \min\{\mu(C_w) \mid w \in \Sigma^l \land
\mu(C_w) >0 \}$. Let $M$ be any constant such that $M \geq
m_l^{-1}$. Then, we define $\mathbf{p}_{l,M} = \mathbf{p}_1^{\lceil M
  \mu(\mathbf{p}_1) \rceil} \mathbf{p}_2^{\lceil M \mu(\mathbf{p}_2)
  \rceil} \mathbf{p}_3^{\lceil M \mu(\mathbf{p}_3) \rceil} \dots
\mathbf{p}_{2^l}^{\lceil M \mu(\mathbf{p}_{2^l}) \rceil}$.

Now given $\mu$, an invariant measure on $\Sigma^\infty$, we construct
a $\mu$-normal number as follows. Let $M_i = \lceil \max\{i^{2i} \log
i, (\inf\{\mu(C_w) \mid w \in \Sigma^i \land \mu(C_w) >0 \})^{-1}\}
\rceil$. Now, let $\ell_1=1$ and for $i \geq 2$ define,
\begin{align*}
\ell_i = \left\lceil \log i\cdot \max \left\{ \frac{M_{i+1}+(i+1)^{i+1}}{M_i}, \frac{M_{i-1}+(i-1)^{i-1}}{M_i} \cdot i \ell_{i-1} \right\} \right\rceil.
\end{align*}
Finally, we define the \emph{Champernowne sequence for $\mu$} as
$x_\mu=\mathbf{p}_{1,M_1}^{\ell_1}\mathbf{p}_{2,M_2}^{\ell_2}\mathbf{p}_{3,M_3}^{\ell_3}\mathbf{p}_{4,M_4}^{\ell_4}
\dots$. In Section 5.2, Mance and Madritsch
show \cite{madritsch2016construction} that $x_\mu$ is a $\mu$-normal
number. So, for any invariant measure $\mu$ (not necessarily
Bernoulli), this construction yields a $\mu$-normal number.  Hence,
the above construction along with Lemma \ref{lem:fsdofmunormal} and
Theorem \ref{thm:cylinderconvergenceimpliesweylconvergence} gives us
the following.

%Using Weyl's criterion for finite-state dimension, we proved in Lemma
%\ref{lem:fsdofmunormal} that for any invariant $\mu$, the dimension
%of every $\mu$-normal sequence is completely determined by the upper
%and lower average entropies of $\mu$.

\begin{theorem}
\label{thm:constructionofalphadimensionednumber}
For any $\alpha \in [0,1]$ and any invariant measure $\mu$ on
$\Sigma^\infty$ such that $H^+(\mu)=H^-(\mu)=\alpha$, the sequence
$x_\mu$ constructed above is such that 
$\dim_{FS}(x_\mu)=\Dim_{FS}(x_\mu)=\alpha$ and, $\lim_{n \to
  \infty}\frac{1}{n}\sum_{j=0}^{n-1} e^{2 \pi i k (v(T^j x_\mu)) }
=\int e^{2 \pi i k v(y) } d\mu$.
\end{theorem}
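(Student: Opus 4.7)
The plan is to leverage the Mance--Madritsch result that the construction of $x_\mu$ yields a $\mu$-normal sequence, and then apply the machinery already developed in the paper. Specifically, Section 5.2 of \cite{madritsch2016construction} verifies that $x_\mu$ as defined satisfies $\lim_{n \to \infty} P((x_\mu)_0^{n-1},w) = \mu(C_w)$ for every $w \in \Sigma^*$. I would take this $\mu$-normality as the starting point; the combinatorial heavy lifting has already been done there.

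First, I would translate $\mu$-normality into weak convergence of the sequence $\langle \nu_n \rangle_{n=1}^{\infty}$ of averages of Dirac measures constructed from $\langle T^n x_\mu \rangle_{n=0}^{\infty}$. By Lemma \ref{lem:deltameasuresandcoutning}, $\nu_n(C_w) = P((x_\mu)_0^{n+\lvert w \rvert -2}, w)$, so $\mu$-normality gives $\lim_{n \to \infty} \nu_n(C_w) = \mu(C_w)$ for every $w \in \Sigma^*$. Lemma \ref{lem:cylindersetconvergenceandweakconvergence} then yields $\nu_n \wto \mu$.

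Next, for the dimension equalities, I would invoke Lemma \ref{lem:fsdofmunormal} directly: since $x_\mu$ is $\mu$-normal, $\mu$ is invariant with respect to the shift and
\begin{align*}
\dim_{FS}(x_\mu) = \Dim_{FS}(x_\mu) = H^+(\mu) = H^-(\mu) = \alpha,
\end{align*}
where the last equality uses the hypothesis on $\mu$.

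Finally, for the exponential sum convergence, I would apply Theorem \ref{thm:cylinderconvergenceimpliesweylconvergence} (with $n_m = m$) to the already-established fact that $\nu_n(C_w) \to \mu(C_w)$ for every $w$. This gives
\begin{align*}
\lim_{n \to \infty} \frac{1}{n} \sum_{j=0}^{n-1} e^{2\pi i k v(T^j x_\mu)} = \int e^{2\pi i k v(y)} d\mu
\end{align*}
for every $k \in \Z$. Equivalently, since $e^{2\pi i k v(y)}$ is continuous on $\Sigma^\infty$ (the valuation map $v$ being continuous), this is immediate from weak convergence and the identity $\frac{1}{n}\sum_{j=0}^{n-1} e^{2\pi i k v(T^j x_\mu)} = \int e^{2\pi i k v(y)} d\nu_n$. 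There is no real obstacle in this proof; the point is that our framework elegantly packages the Mance--Madritsch construction to deliver both the finite-state dimension equalities and the Fourier-analytic convergence statement in one stroke, illustrating the utility of Theorem \ref{thm:weylaveragesconvergentcase}.
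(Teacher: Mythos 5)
Your proposal is correct and follows essentially the same route as the paper: the paper likewise cites the Mance--Madritsch $\mu$-normality of $x_\mu$, then combines Lemma \ref{lem:fsdofmunormal} (for the dimension equalities) with Theorem \ref{thm:cylinderconvergenceimpliesweylconvergence} (for the exponential-sum convergence). You have simply unpacked the intermediate step through Lemmas \ref{lem:deltameasuresandcoutning} and \ref{lem:cylindersetconvergenceandweakconvergence}, which the paper leaves implicit.
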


Given any computable $\alpha$, due to the computability of the Shannon
entropy, it is straightforward to compute a $p$ such that $-p\log_2 p
- (1-p)\log_2(1-p)=\alpha$. Then the Bernoulli measure with bias $p$
is computable (see \cite{DowneyHirschfeldtAlgorithmic}). If $\mu$ is a
Bernoulli measure with bias $p$, Mance and Madritsch
\cite{madritsch2016construction} show that by choosing $ M_i =
\min\{p,1-p\}^{-2i} $ and $\ell_i = i^{2i}$, the resulting $x_\mu$ is
a $\mu$-normal number. Since in this case $\mu$ is computable, it
follows that $x_\mu$ is a computable sequence. Therefore, we obtain
the following theorem.
 
\begin{theorem}
\label{thm:computableconstructionofalphadimensionednumber}
For any computable $\alpha \in [0,1]$ and the Bernoulli measure $\mu$
defined above, the sequence $x_\mu$ is a computable sequence such that
$\dim_{FS}(x_\mu)=\Dim_{FS}(x_\mu)=\alpha$ and, 
\begin{align*}
	\lim_{n \to
  \infty}\frac{1}{n}\sum_{j=0}^{n-1} e^{2 \pi i k (v(T^j x_\mu)) }
=\int e^{2 \pi i k v(y) } d\mu.
\end{align*}
\end{theorem}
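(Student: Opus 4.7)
The plan is to establish the computability of $x_\mu$ and then invoke Theorem~\ref{thm:constructionofalphadimensionednumber} to obtain the finite-state dimension equalities and the convergence of Weyl averages. Since Theorem~\ref{thm:constructionofalphadimensionednumber} already handles the dimensional and Fourier-analytic claims as soon as $\mu$ is an invariant measure with $H^{+}(\mu)=H^{-}(\mu)=\alpha$, the only genuinely new content to verify is that, for computable $\alpha$, the particular Bernoulli measure $\mu$ and the resulting Champernowne-style sequence $x_\mu$ are computable.

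First, I would produce a computable bias $p$ with $H(p)=\alpha$, where $H(p)=-p\log_2 p-(1-p)\log_2(1-p)$. The binary entropy $H$ is a computable real-valued function on $[0,1]$ that is continuous, vanishes at the endpoints, attains its maximum $1$ at $p=1/2$, and is strictly increasing on $[0,1/2]$. Given a computable $\alpha\in[0,1]$, a standard computable inverse-function argument (bisection on rational endpoints, using that $H$ is computable to arbitrary precision and strictly monotone on $[0,1/2]$) yields a computable $p\in[0,1/2]$ with $H(p)=\alpha$. The Bernoulli measure $\mu$ with bias $p$ is then computable, because $\mu(C_w)=p^{\lvert w\rvert_1}(1-p)^{\lvert w\rvert_0}$ is computable uniformly in $w$; moreover, $\mu$ is $T$-invariant and product-structured, so $H^{+}(\mu)=H^{-}(\mu)=H(p)=\alpha$.

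Next, I would verify that Mance and Madritsch's construction, specialized with $M_i=\min\{p,1-p\}^{-2i}$ and $\ell_i=i^{2i}$, produces a computable $x_\mu$. Fix the lexicographic ordering of $\Sigma^i$, which is computable. Given a computable $p$, the quantities $\lceil M_i\mu(\mathbf{p}_j)\rceil$ are computable rationals-to-integers, so each block $\mathbf{p}_{i,M_i}$ is a computable finite string as a function of $i$, and so is $\mathbf{p}_{i,M_i}^{\ell_i}$. The sequence $x_\mu=\mathbf{p}_{1,M_1}^{\ell_1}\mathbf{p}_{2,M_2}^{\ell_2}\cdots$ is then computed bitwise by a Turing machine that on input $n$ determines the unique block containing the $n^{\mathrm{th}}$ symbol and outputs the corresponding bit. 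The result of Mance and Madritsch \cite{madritsch2016construction} guarantees $x_\mu$ is $\mu$-normal.

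Finally, applying Theorem~\ref{thm:constructionofalphadimensionednumber} to this $\mu$-normal $x_\mu$ yields $\dim_{FS}(x_\mu)=\Dim_{FS}(x_\mu)=\alpha$ and the convergence of Weyl averages to the Fourier coefficients of $\mu$. The main obstacle is essentially bookkeeping: ensuring that the inverse of the binary entropy function at a computable value is computable, and then checking that every ingredient in the Champernowne-style construction ($\lceil M_i\mu(\mathbf{p}_j)\rceil$, $\ell_i$, the string concatenations) can be carried out effectively from $p$. Both are routine once $p$ is produced, so the substance reduces to the entropy-inversion step.
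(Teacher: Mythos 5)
Your proposal takes essentially the same route as the paper: invert the binary entropy function to get a computable bias $p$ with $H(p)=\alpha$, observe that the corresponding Bernoulli measure is computable, feed it into the Mance--Madritsch construction with the stated $M_i$ and $\ell_i$, and then invoke Theorem~\ref{thm:constructionofalphadimensionednumber}; the paper's own treatment is a brief remark to exactly this effect.

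One small imprecision worth flagging, though it is a gap the paper's informal argument shares: you describe $\lceil M_i\mu(\mathbf{p}_j)\rceil$ as a ``computable rationals-to-integers'' operation, but when $p$ is irrational (which is typical for computable $\alpha$), the quantities $M_i\mu(\mathbf{p}_j)=p^{a-2i}(1-p)^{b}$ are computable \emph{reals}, not rationals, and the ceiling of a computable real is not in general a computable function of it (one cannot decide equality with an integer). To make the computability of $x_\mu$ fully rigorous one should either argue that these particular quantities are never exactly integers, or observe that the Mance--Madritsch proof of $\mu$-normality is robust to replacing the repetition count by any computably obtainable integer in, say, $[M_i\mu(\mathbf{p}_j),\,M_i\mu(\mathbf{p}_j)+2]$, which can be produced from a precision-$1$ rational approximation of the computable real. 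This is routine bookkeeping but it is the actual content of the ``then $x_\mu$ is computable'' step, and your phrasing elides it.
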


\end{document}